\documentclass[10pt,a4paper]{amsart}

\usepackage[a4paper,margin=1.5cm]{geometry}
\usepackage{amsmath}
\usepackage{amsfonts}
\usepackage{amssymb}
\usepackage{graphicx}
\usepackage{booktabs}
\usepackage{tabularx}
\usepackage{ragged2e}
\usepackage{caption}
\usepackage{microtype}
\allowdisplaybreaks
\usepackage{hyperref}
\hypersetup{hidelinks}
\usepackage{cite}
\usepackage{xcolor}
\usepackage{placeins}
\usepackage{float}
\usepackage{subcaption}
\usepackage{array}
\usepackage{multicol}

\graphicspath{{figures/}}

\newtheorem{thm}{Theorem}[section]
\newtheorem{cor}[thm]{Corollary}
\newtheorem{lem}[thm]{Lemma}
\newtheorem{prop}[thm]{Proposition}
\theoremstyle{definition}
\newtheorem{defn}[thm]{Definition}

\theoremstyle{remark}
\newtheorem{rem}[thm]{Remark}

\numberwithin{equation}{section}

\begin{document}


\title[Evolutionary Entropy Shapes Reproductive Lifespan]
{Evolutionary Entropy Shapes Reproductive Lifespan \\ in Age-Structured Populations}

\author[J. Buescu]{Jorge Buescu}
\thanks{Jorge Buescu, Departamento de Matemática, Faculdade de Ciências da Universidade de Lisboa, Centro de Estudos Matemáticos, Universidade de Lisboa, Portugal}
\email{jsbuescu@ciencias.ulisboa.pt}

\author[S. N. Elaydi]{Saber N. Elaydi}
\thanks{Saber N. Elaydi, Department of Mathematics, Trinity University, San Antonio, Texas, USA}
\email{selaydi@trinity.edu}

\author[H. M. Oliveira]{Henrique M. Oliveira}
\thanks{Henrique M. Oliveira, Corresponding author, Departamento de Matemática, Instituto Superior Técnico, Universidade de Lisboa, Centro de Análise Matemática, Geometria e Sistemas Dinâmicos, Universidade de Lisboa, Portugal}
\email{henrique.m.oliveira@tecnico.ulisboa.pt}
\thanks{The third author acknowledges partial support by Fundação para a Ciência e a Tecnologia, UIDB/04459/2020 and UIDP/04459/2020}


\subjclass[2020]{Primary: 92B05. Secondary: 92D25, 92D15, 37N25}

\keywords{evolutionary entropy,
Leslie matrices,
reproductive lifespan,
life-history theory,
age-structured populations,
generation time,
senescence,
open-group populations,
net reproductive number,
mathematical biology,
comparative demography}

\date{}


\begin{abstract}
Evolutionary entropy measures the temporal organization of reproductive contributions along the life cycle of an age-structured population. We develop a mathematical and empirical framework showing that, in iteroparous animal populations represented by Leslie-type demographic matrices, reproductive windows are frequently organized near the age classes selected by entropy maximization.

Evolutionary entropy is placed explicitly alongside the classical demographic net reproductive number, denoted here by $R_0^{\rm dem}$ to distinguish it from the epidemiological basic reproduction number. Whereas $R_0^{\rm dem}$ measures total lifetime reproductive replacement and $\lambda$ measures asymptotic population growth, evolutionary entropy measures the temporal dispersion of the growth-adjusted reproductive distribution. This distinction gives the paper a biological interpretation: replacement, growth, and reproductive organization are complementary demographic quantities rather than interchangeable ones.

The central mathematical result is a reduction principle: under the Euler--Lotka normalization, evolutionary entropy and generation time are invariant under multiplicative rescaling of survivorship and fertility on the reproductive interval. Thus the entropy relevant to reproductive timing is not determined by the absolute level of survivorship or fertility, nor by juvenile mortality alone, but by the normalized post-maturity reproductive distribution. This distinction separates the physical survivorship curve from the probability distribution of realized reproductive contributions.

We derive explicit entropy functionals for finite Leslie models and open-group Leslie models, including finite and infinite geometric reproductive tails. For the geometric reproductive regime, governed by the effective ratio $q=\rho/\lambda$, we prove that a sharp critical threshold separates populations with a unique finite entropy-maximizing reproductive endpoint from populations whose entropy increases toward an asymptotic value. The threshold is determined by the unique solution of $q^A+q-1=0$, where $A$ is age at first reproduction.

The theory is tested on 130 animal species represented by Leslie-type demographic matrices. Entropy-derived reproductive endpoints, reproductive-distribution quantiles, and threshold predictions are compared with independent life-history variables, including age at first reproduction, mean reproductive age, reproductive lifespan, life expectancy, and maximum longevity. The predictions are computed from the demographic matrices alone, before any external life-history variables are introduced.

The empirical agreement is strong. Predicted and observed reproductive medians coincide exactly for a majority of species, more than $90\%$ of species are predicted within three reproductive classes, and the main associations remain strong after phylogenetic correction based on the Open Tree of Life. Distributional comparisons using overlap and Bhattacharyya coefficients further show that the entropy-derived windows capture the shape of reproductive allocation, not merely a single summary age.

These results identify a quantitative regularity across diverse taxa: the distribution of reproductive contributions encodes biologically meaningful information about reproductive timing. They further suggest that geometric reproductive distributions, which arise naturally in open-group Leslie populations and under approximately constant adult hazards, play a central role in the entropy-based description of reproductive organization.
\end{abstract}

\maketitle


\section{Introduction}

\begin{figure*}[ht]
\centering
\includegraphics[width=0.85\textwidth]{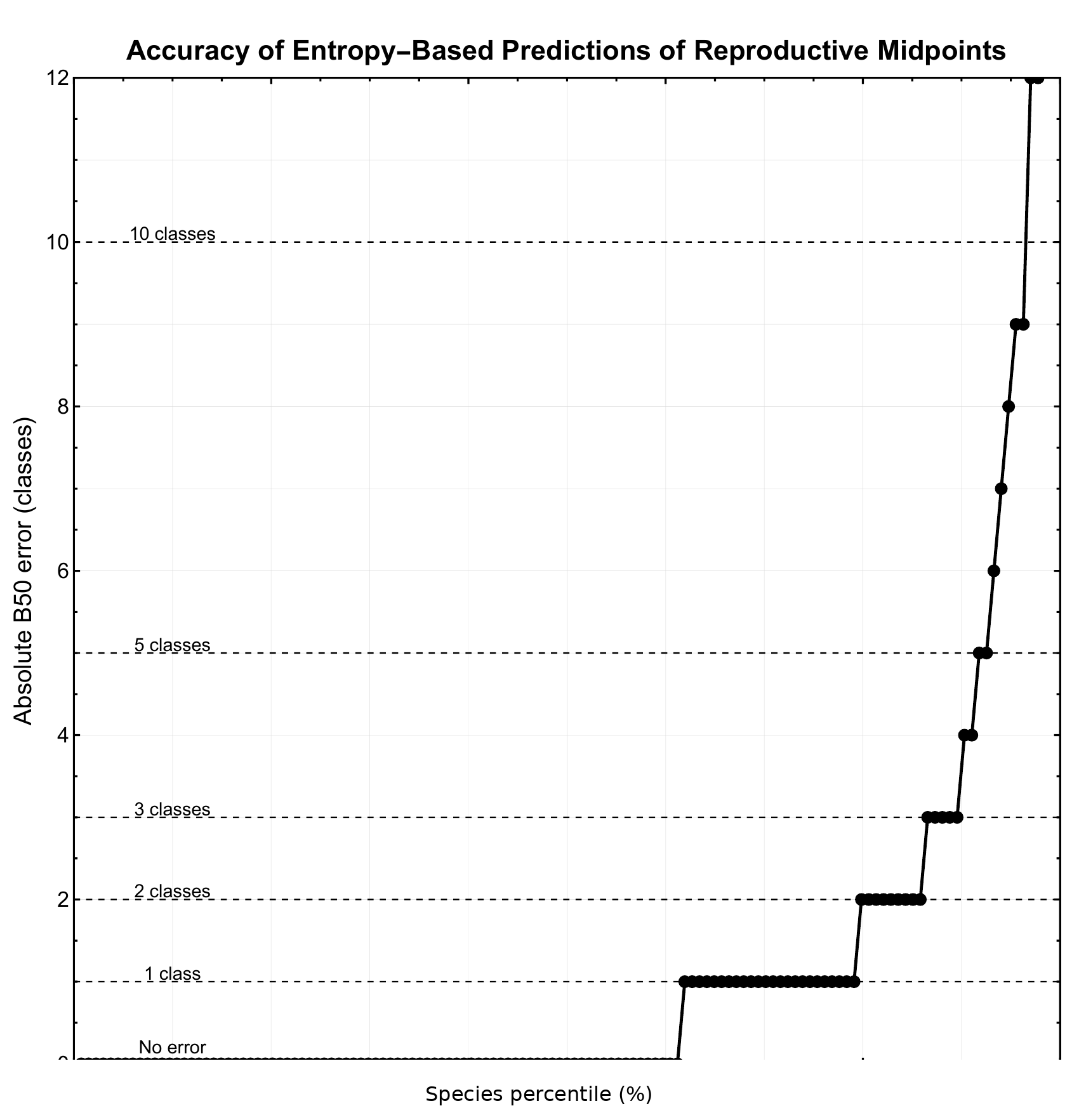}
\caption{Absolute prediction errors of the reproductive median $B_{50}$, ordered by species percentile. Horizontal dashed lines indicate error thresholds of 1, 2, 3, 5 and 10 reproductive classes.}
\label{fig:B50PercentileErrors}
\end{figure*}

What determines the duration of reproductive life remains one of the
central problems of evolutionary demography and life-history theory
\cite{Medawar1952,Williams1957,Ham,Kirkwood1977,
Stearns1992,Charlesworth,Kirkwood2000,Baudisch2008}.
Species exhibit strikingly different survivorship organizations,
ranging from approximately rectangular late-life survival to strongly
declining demographic structures
\cite{Pearl1928,Deevey,Demetrius78,Jones2014}.
Yet reproductive lifespan frequently occupies comparatively narrow
scales relative to total longevity
\cite{Carey2003,Fin,deMagalhaes2009,Jones2009PanTHERIA,PanTHERIA,
SalgueroGomez2016COMADRE,COMADRE}.
Understanding how reproductive windows are organized across the life
cycle therefore remains a fundamental challenge in evolutionary
biology.

Age-structured population theory provides a natural foundation for
investigating this problem
\cite{Leslie1945,Keyfitz1968,Cus,ElaydiCushing2024,Charlesworth}.

A first biological descriptor of such a population is the demographic net reproductive number. In a life table or Leslie model this quantity, denoted here by $R_0^{\rm dem}$, is the expected total reproductive output of a newborn individual over its lifetime after survivorship to each reproductive age has been taken into account. It is the classical replacement index of demography: values above one indicate reproductive excess, values equal to one indicate replacement, and values below one indicate reproductive deficit. In this sense $R_0^{\rm dem}$ answers the most basic demographic question: does the lifetime schedule of survival and fertility replace the individuals who generated it?

However, reproductive replacement does not by itself describe the temporal organization of reproduction. Two species may have comparable net reproductive numbers but very different life histories if one concentrates reproduction shortly after maturity and another spreads the same total reproductive output over many later age classes. The first schedule may generate faster population growth because descendants appear earlier and themselves begin reproducing earlier, whereas the second schedule may have the same lifetime replacement but a different generation time and a different distribution of reproductive contribution. Thus, beyond the amount of lifetime reproduction, one must also measure the timing and dispersion of reproduction across the life cycle.

Within this context, Demetrius introduced evolutionary entropy as a
measure of the temporal reproductive distribution
along genealogical lineages
\cite{Demetrius78,Demetrius83,Demetrius97,Demetrius2004,
Demetrius2013}.
Evolutionary entropy links survivorship, reproduction and generation
time, and has become one of the principal mathematical concepts
connecting demography, ageing and life-history evolution
\cite{Demetrius83,Demetrius97,Demetrius2013}.

Despite its extensive theoretical development, a fundamental biological
question remains largely unexplored. Which demographic structures are
actually responsible for the behaviour of evolutionary entropy?
Does entropy depend primarily on the complete survivorship curve, or on
the structure of reproduction after maturity?

The problem is both mathematical and biological. Mathematically, one must identify the functional dependence of evolutionary entropy on the Euler--Lotka weights and determine how the entropy changes as the reproductive endpoint is varied. Biologically, one must decide whether the resulting entropy-optimal age classes correspond to observable reproductive windows rather than merely to formal extrema of a demographic functional. The analysis below addresses these issues by reducing the entropy calculation to the normalized reproductive probability distribution, deriving the corresponding entropy functionals for finite and open-group Leslie populations, establishing the geometric-tail threshold separating finite entropy maximization from asymptotic saturation, and testing the resulting predictions against independent life-history information across a broad comparative data set.

The present article shows that evolutionary entropy is fundamentally a
property of reproductive organization rather than survivorship itself.
Under constant growth, entropy is invariant under multiplicative
rescaling of survivorship and fertility throughout the reproductive
interval. Consequently, the relevant demographic object is the
reproductive distribution after maturity rather than
the complete life-history survivorship curve.

This observation has important biological consequences.
Classical Pearl--Deevey survivorship curves incorporate juvenile
mortality, developmental filtering and pre-reproductive demographic
effects \cite{Pearl1928,Deevey}.
Survivorship profiles that appear very different over the complete life
cycle may therefore generate remarkably similar reproductive
distributions.

A second result concerns the duration of reproductive life.
We show that evolutionary entropy may exhibit three distinct regimes:
finite entropy-maximizing reproductive age classes, boundary optima,
and asymptotic convergence.
Particular importance is attached to geometric reproductive
distributions, which arise naturally in open-group Leslie populations
and under approximately constant adult hazards.
These distributions admit a complete analytical characterization
through a critical threshold separating finite reproductive optima from
asymptotic entropy regimes.

The empirical motivation for this analysis is reinforced by the growing
availability of comparative demographic databases spanning vertebrates,
mammals, plants and matrix population models
\cite{Tacutu2018,deMagalhaes2009,
SalgueroGomez2015COMPADRE,AnAgeDatabase,PanTHERIA,
COMPADRE,COMADRE},
together with increasing evidence that ageing trajectories exhibit
substantial diversity across the tree of life
\cite{Jones2014,VAU}.

Using 130 animal species represented by Leslie-type demographic
matrices, we compare entropy-derived reproductive classes,
reproductive distribution quantiles and critical-threshold predictions
with independent life-history information, including age at first
reproduction, reproductive lifespan, life expectancy and maximum
longevity.

To account for shared evolutionary history, the principal
life-history comparisons are further repeated using
phylogenetically corrected analyses based on the Open Tree of Life \cite{OpenTreeOfLife}.
Because species cannot generally be regarded as statistically
independent data points owing to common ancestry
\cite{Felsenstein1985,HarveyPagel1991,Pagel1999},
phylogenetic correction provides a more stringent test of the
observed associations. As shown in
subsection~\ref{subsec_phylogenetic}, the resulting associations
remain strong after phylogenetic correction, indicating that the
observed patterns cannot be explained solely by taxonomic
relatedness.

The agreement between theory and observation is striking.
Figure~\ref{fig:B50PercentileErrors} illustrates one of the
central empirical findings of this study: in 85.8\% of species,
entropy-derived predictions of reproductive midpoints differ from the
observed values by at most one or two reproductive classes.

Since open-group matrices represent approximately 74\% of the Leslie
populations retained in our dataset, and naturally generate geometric
reproductive distributions, geometric reproductive organization
appears not as an exceptional demographic structure but as a generic
feature of iteroparous animal populations.

The complete species-level results for all 130 species analysed in the
study are reported in Appendix~\ref{app:species_results}. These tables include the
evolutionary entropy estimates, entropy-maximizing reproductive age
classes, reproductive distribution quantiles, critical-threshold
predictions, observed life-history variables, and the corresponding
comparisons between theoretical predictions and independent empirical
observations.

The results are consistent with the interpretation that
entropy-maximizing age classes provide biologically meaningful
lower bounds for reproductive lifespan. Species may reproduce
beyond these classes and, within the present dataset,
reproductive windows seldom terminate before the
entropy-relevant regime is attained.

The extraordinary diversity of ageing and survivorship
patterns observed across the tree of life
\cite{BuescuOliveiraSousa2023,Jones2014,Baudisch2008,deVries2023}
may conceal a surprisingly simple regularity:
reproductive life is often organized around
entropy-maximizing distributions
\cite{Demetrius78,Demetrius2004}.

The article is organized as follows.
Section~\ref{sec_Leslie} introduces  demographic theory and
evolutionary entropy.
Section~\ref{sec_exp_model} develops the structural properties of the
entropy functional.
Section~\ref{sec_entropy_maximization} establishes the main
theoretical results.
Section~\ref{sec_Validation} presents the empirical validation,
including the phylogenetically corrected analyses of
subsection~\ref{subsec_phylogenetic}.
Appendix~\ref{app:species_results} contains the complete species-level results.


\section{Leslie framework and notation}
\label{sec_Leslie}
\subsection{Pure Leslie}

We consider discrete age-structured populations governed by the
classical Leslie formalism. For the convenience of readers less
familiar with the subject, we briefly review the main theoretical elements; see
\cite{Leslie1945,Keyfitz1968,Cus,ElaydiCushing2024,Charlesworth}
for detailed expositions.

In this setting, individuals are distributed into age classes, and
population dynamics is determined jointly by age-dependent fertility
and survivorship.
The long-term demographic behaviour is governed by the Perron
dominant eigenvalue of the Leslie matrix,
whereas the reproductive organization of the population is encoded
in the Euler--Lotka equation.

Age classes are indexed by
$
j=1,\dots,\omega,
$
where \(\omega\) denotes the maximal lifespan.
Let
$
n_j(k)\in\mathbb N
$
denote the number of individuals in age class \(j\) at discrete time
\(k\).
The population vector is given by
\[
N_k=
\begin{bmatrix}
n_1(k)\\
n_2(k)\\
\vdots\\
n_\omega(k)
\end{bmatrix}
\]
and the demographic evolution is governed by the Leslie equation
\[
N_{k+1}=LN_k,
\]
where
\[
L=
\begin{bmatrix}
m_1 & m_2 & \cdots & m_{\omega-1} & m_\omega\\
s_1 & 0 & \cdots & 0 & 0\\
0 & s_2 & \cdots & 0 & 0\\
\vdots & \vdots & \ddots & \vdots & \vdots\\
0 & 0 & \cdots & s_{\omega-1} & 0
\end{bmatrix}.
\]

Here \(m_j\ge0\) denotes the fertility of age class \(j\), whereas
\(s_j\in[0,1]\) denotes the transition survivorship from age class
\(j\) to age class \(j+1\).
The cumulative survivorship from birth to age class \(j\) is therefore given by 
\[
l_j=
\prod_{k=1}^{j-1}s_k,
\]
where the convention \(l_1=1\) is used.
The corresponding Euler--Lotka (EL) equation is
\begin{equation}
	\label{eq:Euler-Lotka}
	\sum_{j=1}^{\omega}
	\frac{l_jm_j}{\lambda^j}
	=
	1,
\end{equation}
where \(\lambda\) denotes the Perron dominant eigenvalue of the
Leslie matrix. 

\begin{defn}[EL survivorship weights] We define the corresponding {\em EL survivorship weights} $\ell_j$ by 
\begin{equation}
	\label{eq:EL_survivorship}
	\ell_j
	=
	\frac{l_j}{\lambda^j}.
\end{equation}
\end{defn}

In terms of these quantities, the Euler--Lotka equation \eqref{eq:Euler-Lotka} assumes the
form
\begin{equation}
\label{eq_Euler-Lotka}
\sum_{j=1}^{\omega}\ell_jm_j=1.
\end{equation}

\subsection{The demographic net reproductive number}
\label{subsec_R0_dem}

Before introducing evolutionary entropy, it is useful to recall the classical demographic threshold quantity associated with an age-structured population. For a Leslie population with survivorship sequence $l_j$ and fertility schedule $m_j$, the \emph{demographic net reproductive number} is
\begin{equation}
\label{eq:R0dem}
R_0^{\rm dem}=\sum_{j=1}^{\omega}l_jm_j.
\end{equation}
We use the notation $R_0^{\rm dem}$ in order to distinguish this quantity from the epidemiological basic reproduction number. In demography, $R_0^{\rm dem}$ is the expected number of offspring produced by a newborn individual over its lifetime, after discounting fertility at each age by the probability of surviving to that age. In female-based life tables it is often interpreted as the expected number of daughters produced by a newborn female. More generally, it is the lifetime replacement output generated by the survivorship and fertility schedules \cite{Keyfitz1968,Caswell,Charlesworth,Cus}.

The biological threshold meaning of $R_0^{\rm dem}$ is immediate. If $R_0^{\rm dem}>1$, then a newborn individual is expected, over its lifetime, to produce more than one replacing offspring, and the population has reproductive excess. If $R_0^{\rm dem}=1$, the population is at replacement. If $R_0^{\rm dem}<1$, lifetime reproduction is insufficient for replacement, and the population has reproductive deficit. Thus $R_0^{\rm dem}$ measures whether the lifetime survival-fertility schedule is capable of replacing the individuals who generated it.

The Euler--Lotka equation refines this replacement criterion by incorporating the timing of reproduction. Define
\[
F(\lambda)=\sum_{j=1}^{\omega}\frac{l_jm_j}{\lambda^j}.
\]
Under the usual assumption that at least one reproductive age class has positive reproductive output, $F$ is strictly decreasing for $\lambda>0$. Since $F(1)=R_0^{\rm dem}$ and the Perron growth factor is characterized by $F(\lambda)=1$, we obtain the threshold correspondence
\begin{equation}
\label{eq:R0_lambda_threshold}
R_0^{\rm dem}>1 \Longleftrightarrow \lambda>1,
\qquad
R_0^{\rm dem}=1 \Longleftrightarrow \lambda=1,
\qquad
R_0^{\rm dem}<1 \Longleftrightarrow \lambda<1.
\end{equation}
Consequently, $R_0^{\rm dem}$ determines whether lifetime reproduction is above, equal to, or below replacement, whereas $\lambda$ gives the asymptotic population growth factor per time step.

This distinction is important biologically. The number $R_0^{\rm dem}$ measures the amount of lifetime reproduction, but it does not determine how that reproduction is distributed through time. Early reproduction and delayed reproduction can yield the same value of $R_0^{\rm dem}$ while producing different growth factors, different generation times, and different age distributions of reproductive contribution. Reproduction occurring early in life has greater demographic leverage because offspring enter the population sooner and can themselves contribute sooner. Delayed reproduction may preserve lifetime replacement but change the pace and temporal structure of population renewal.

Evolutionary entropy enters precisely at this point. The quantities
\begin{equation}
\label{eq:growth_adjusted_reproductive_contributions}
p_j=\frac{l_jm_j}{\lambda^j}
\end{equation}
form the Euler--Lotka reproductive distribution because their sum is one by \eqref{eq:Euler-Lotka}. They describe how the normalized, growth-adjusted reproductive contribution is distributed among age classes. Evolutionary entropy is a measure of the temporal dispersion of this distribution. Thus the present paper treats three complementary demographic objects:
\[
R_0^{\rm dem}
\quad\hbox{measures lifetime reproductive replacement,}
\]
\[
\lambda
\quad\hbox{measures asymptotic population growth,}
\]
\[
H
\quad\hbox{measures the temporal organization of reproductive contribution.}
\]
The aim is therefore not merely to decide whether a population replaces itself, but to understand how the reproductive contribution is distributed across the reproductive lifespan once the Euler--Lotka normalization has been imposed.

We now characterize the principal reproductive distributions
that will be used throughout the remainder of the paper.

\medskip

\noindent
\textbf{Reproductive interval.}
Consider a reproductive interval
\[
A\le j\le D,
\qquad
1\le A<D\le\omega,
\]
where \(A\) denotes the age class at reproductive maturity
and \(D\) the final reproductive age class. We assume that
fertility is restricted to this interval, namely
\[
m_j=0,
\qquad
j<A
\quad\text{or}\quad
j>D.
\]

\begin{defn}[Reproductive distribution and reproductive masses]
\label{repdist}
The {\em reproductive distribution} is given by
\[
p_j=\ell_jm_j.
\]
From \eqref{eq_Euler-Lotka},
\[
\sum_{j=1}^{\omega}p_j
=
\sum_{j=A}^{D}p_j
=
1,
\]
and therefore \(\{p_j\}_{j=A}^{D}\) is a probability
distribution and the quantities \(p_j\) are the
corresponding {\em reproductive masses}.
\end{defn}

\medskip

\noindent
\textbf{Constant fertility.}
In some parts of this article we assume that fertility is
constant throughout the reproductive interval. Thus,
for some constant \(f>0\),
\begin{equation}
\label{eq_fertility}
m_j=f,
\qquad
A\le j\le D.
\end{equation}

Under this hypothesis, condition \eqref{eq_Euler-Lotka}
becomes
\[
f
\sum_{j=A}^{D}
\ell_j
=
1.
\]

Define the {\em total survivorship mass} by
\begin{equation}
\label{eq_mass}
S
=
\sum_{j=A}^{D}
\ell_j.
\end{equation}
Then
\[
S=\frac1f.
\]

On the reproductive interval we have the reproductive
masses
\begin{equation}
\label{eq_p_j}
p_j
=
\frac{\ell_j}{S},
\qquad
A\le j\le D.
\end{equation}

The generation time is defined by
\begin{equation}
\label{eq_generation}
T
=
\sum_{j=1}^{\omega}jp_j
=
\sum_{j=A}^{D}jp_j
=
\frac1S
\sum_{j=A}^{D}
j\,\ell_j.
\end{equation}

We further define the {\em first temporal moment} by
\begin{equation}
\label{eq_moment}
M
=
\sum_{j=A}^{D}
j\,\ell_j,
\end{equation}
in terms of which \eqref{eq_generation} may be written as
\begin{equation}
\label{eq_generation_time}
T=\frac{M}{S}.
\end{equation}

Define also the {\em logarithmic survivorship dispersion} \(C\) by
\begin{equation}
\label{eq_decline}
C
=
-\sum_{j=A}^{D}
\ell_j\log\ell_j.
\end{equation}
Following Demetrius
\cite{Demetrius78,Demetrius83,Demetrius97},
the evolutionary entropy is defined by
\begin{equation}
\label{eq_eventropy}
H
=
\frac{
-\sum_{j=1}^{\omega}p_j\log p_j
}{
T
}.
\end{equation}

It will be useful to observe that, in this expression for $H$  the numerator is precisely the Shannon entropy of the probability distribution $\{ p_j \}$ 
(\cite{shannon1948,CoverThomas,Gray2011,Jaynes1957,Jaynes1957b}):
\[
\Psi
=
-
\sum_{j=1}^{\omega}
p_j \, 
\log p_j.
\]

From \eqref{eq_p_j} and \eqref{eq_decline}, we have that
$
\Psi = -\sum_{j=A}^{D}
p_j\log p_j
=
\log S
+
\frac1S\,C
$ and thus, in the present context, the entropy assumes the form
\begin{equation}
\label{eq_ENTROPY}
H
=
\frac{
S\log S
+
C
}{
M
}.
\end{equation}

Formula \eqref{eq_ENTROPY} defines the fundamental entropy functional utilized throughout this work, demonstrating that evolutionary entropy is entirely determined by the organization of reproductive distribution across the reproductive interval. The remainder of this paper is devoted to analyzing how different survivorship curves determine the entropy-maximizing reproductive lifespan predicted by this functional.

Biologically, the numerator of \eqref{eq_ENTROPY} measures the dispersion of reproductive contribution across age classes, while the denominator converts this dispersion into a rate by normalizing by generation time. A broad reproductive distribution therefore does not automatically imply high evolutionary entropy: if broadness is achieved only by shifting reproductive mass far into late ages, the increase in Shannon entropy can be offset by the increase in generation time. The optimization problem studied below is precisely this balance between dispersion of reproductive contribution and temporal delay.


\subsection{Open-group Leslie matrices}
\label{sec_open_group}

Many empirical populations do not admit a naturally defined maximal
age class.
In such situations it is convenient to replace the finite Leslie
setting by an open-group formulation in which the final age class
remains reproductively and demographically active.

The simplest open-group Leslie matrix has the form
\[
L=
\begin{bmatrix}
m_1 & m_2 & \cdots & m_{\omega-1} & m_\omega\\
s_1 & 0 & \cdots & 0 & 0\\
0 & s_2 & \cdots & 0 & 0\\
\vdots & \vdots & \ddots & \vdots & \vdots\\
0 & 0 & \cdots & s_{\omega-1} & \rho
\end{bmatrix},
\]
where the coefficient
$
\rho\in[0,1]
$
represents the probability of remaining in the final age class.

Since all survivorship quantities, as in equation \eqref{eq:EL_survivorship}, are rescaled dividing by $\lambda$, the adequate open-group parameter will
\[
q=\frac{\rho}{\lambda},
\]
as we will see below.

This formulation may be interpreted as a finite-dimensional
representation of an infinite Leslie population with a geometric tail.
Indeed, once individuals reach the final class, successive survivorship weights satisfy
\[
\ell_{\omega+k}
=
\ell_\omega q^k,
\qquad k\geq0.
\]
The existence of infinitely many age classes does not imply infinite
survival. For \(q<1\), survivorship decreases geometrically,
$
\ell_{\omega+k}=\ell_\omega q^k,
$
and the resulting series converge. The infinite tail therefore
represents increasingly rare long-lived individuals rather than
immortality.
Alternatively, one may argue that, for any initial population $N_0$, there exists a sufficiently large $k$ such that $N_0 \ell_\omega q^k < 1$, meaning that by cycle $k$ the individuals in class $\omega$ and all subsequent classes have effectively died out.

If the fertility remains constant in the terminal group, that is
\[
m_{\omega+k}=m_\omega,
\qquad k\geq0,
\]
then the reproductive masses beyond age $\omega$ form a
geometric sequence,
\[
\ell_{\omega+k}m_{\omega+k}
=
\ell_\omega m_\omega q^k.
\]
In these conditions, the Euler--Lotka equation becomes
\begin{equation}
\sum_{j=1}^{\omega-1}
\ell_jm_j
+
\ell_\omega m_\omega
\sum_{k=0}^{\infty}
q^k
=
1,
\end{equation}
and therefore, for $q<1$, 
\begin{equation}\label{eq_EL2}
\sum_{j=1}^{\omega-1}
\ell_jm_j
+
\frac{\ell_\omega m_\omega}
{1-q}
=
1.
\end{equation}

The open-group representation is mathematically equivalent to an infinite Leslie matrix with a geometric asymptotic tail. 
Such formulations have been extensively studied in the theory of
infinite Leslie matrices and age-structured populations
\cite{Cus,ElaydiCushing2024,Gosselin,jalves3,jalves2,Demetrius72}.

In the empirical section \ref{sec_Validation} of this paper a substantial fraction (74\%) of the
species analysed belongs naturally to this open-group class.
The geometric-tail representation provides an exact analytical basis for
computing reproductive distributions, generation times and
evolutionary entropy in populations for which reproduction and survival
persist beyond the last explicitly observed age class.

The geometric-tail representation allows the entropy functional to be written, also in this setting, in a completely explicit form.
Under the constant fertility hypothesis, the total survivorship mass becomes, from \eqref{eq_mass},
\[
S_\infty
=
\sum_{j=A}^{\omega-1} \ell_j
+
\ell_\omega
\sum_{k=0}^{\infty}q^k
=
\sum_{j=A}^{\omega-1} \ell_j
+
\frac{\ell_\omega}{1-q}.
\]

\noindent Similarly, the first temporal moment translates from \eqref{eq_moment}  as 
\[
M_\infty
=
\sum_{j=A}^{\omega-1} j\,\ell_j
+
\ell_\omega
\sum_{k=0}^{\infty}
(\omega+k)q^k.
\]
Using the standard identities
$
\sum_{k=0}^{\infty}q^k
=
\frac{1}{1-q},
\, 
\sum_{k=0}^{\infty}kq^k
=
\frac{q}{(1-q)^2},
$
we obtain
\[
M_\infty
=
\sum_{j=A}^{\omega-1} j\,\ell_j
+
\ell_\omega
\left(
\frac{\omega}{1-q}
+
\frac{q}{(1-q)^2}
\right).
\]
Analogously, from \eqref{eq_decline} the logarithmic survivorship dispersion becomes
\[
C_\infty
=
-\sum_{j=A}^{\omega-1}
\ell_j\log\ell_j
-
\ell_\omega
\sum_{k=0}^{\infty}
q^k
\log(\ell_\omega q^k)
\]
which, after elementary series summation, becomes
\[
C_\infty
=
-\sum_{j=A}^{\omega-1}
\ell_j\log\ell_j
-
\frac{\ell_\omega\log\ell_\omega}{1-q}
-
\frac{\ell_\omega q\log q}
{(1-q)^2}.
\]
Thus the corresponding generation time is 
\[
T_\infty
=
\frac{M_\infty}{S_\infty},
\]
whereas the evolutionary entropy assumes the form
\[
H_\infty
=
\frac{
S_\infty\log S_\infty
+
C_\infty
}
{
M_\infty
}.
\]
Comparing with \eqref{eq_generation_time} and \eqref{eq_ENTROPY}, we conclude that 
the entropy functional of an open-group Leslie
population preserves exactly the same structural form as in the
finite-dimensional case. The only difference is that the quantities
\(S\), \(M\) and \(C\) are replaced by their infinite-tail
counterparts \(S_\infty\), \(M_\infty\) and \(C_\infty\), which
incorporate the geometric contribution of the terminal age group.

This observation allows finite and open-group populations to be
treated within a common analytical formalism. In particular, all
entropy calculations performed later in the paper for open-group
species are exact and do not require truncation of the infinite
reproductive tail.


\subsection{Homogeneity and reproductive structure}
In this section we do not assume constant fertilities.

\begin{defn}
Throughout the remainder of the paper we refer to
$
\{\ell_j\}_{1\le j\le \omega}
$
as a EL survivorship sequence, where now
the upper bound \(\omega\) may be finite or infinite, provided the
corresponding Euler--Lotka series converges as in \eqref{eq_EL2}.
\end{defn}

Recall that, as a consequence of the Euler--Lotka  equation \eqref{eq_Euler-Lotka}, the reproductive masses
$
p_j
=
\ell_jm_j
$
define the reproductive distribution whose associated
generation time, by \eqref{eq_moment}, is given by
$
T
=
\sum_{j=1}^{\omega}
j\,p_j,
$
and whose associated evolutionary entropy, by \eqref{eq_eventropy}, is given by
$H
=
\frac{
-\sum_{j=1}^{\omega}
p_j\log p_j
}
{
T
}.$

We now establish a fundamental invariance property of the entropy
functional.

\begin{thm}[Homogeneity of evolutionary entropy]
\label{thm_scale_invariance}

Let
$
\{\ell_j\}_{1\le j\le \omega}
$
be a EL survivorship sequence satisfying
\eqref{eq_Euler-Lotka}.
For any  \(c>0\), define
\[
\widetilde{\ell}_j
=
c\,\ell_j.
\]

Let 
$
\widetilde p_j
=
\frac{
	\widetilde{\ell}_j m_j
}{
	\sum_{k=1}^{\omega}
	\widetilde{\ell}_km_k
}
$
denote the reproductive masses relative to $\widetilde{\ell}_j$, and 
\(\widetilde H\) be the corresponding entropy. 
 
%
Then
$
\widetilde H
=
H.
$

\end{thm}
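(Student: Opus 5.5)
The plan is to show that the renormalized reproductive masses do not change under the rescaling, that is, $\widetilde p_j=p_j$ for every $j$. The equality $\widetilde H=H$ then follows because both the numerator and the denominator of \eqref{eq_eventropy} depend only on the distribution $\{p_j\}$.

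First, I would compute the normalizing constant. Since $\widetilde\ell_k m_k=c\,\ell_k m_k$, linearity of the series and \eqref{eq_Euler-Lotka} give
\[
\sum_{k=1}^{\omega}\widetilde\ell_k m_k=c\sum_{k=1}^{\omega}\ell_k m_k=c .
\]
When $\omega=\infty$, this series converges because the original Euler--Lotka series converges, as required in the definition of an EL survivorship sequence; in that case the rescaling is just a term-by-term multiple of a convergent series. Hence
\[
\widetilde p_j=\frac{c\,\ell_j m_j}{c}=\ell_j m_j=p_j
\]
for all $j$. The factor $c>0$ cancels because the masses are defined by explicit renormalization rather than by requiring the rescaled sequence itself to satisfy the Euler--Lotka equation.

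Second, I would substitute this identity into the functionals. The generation time gives $\widetilde T=\sum_j j\widetilde p_j=\sum_j j p_j=T$. The Shannon entropy gives $\widetilde\Psi=-\sum_j\widetilde p_j\log\widetilde p_j=\Psi$, with the convention $0\log0=0$ for classes of zero mass, which are the same classes in both sequences. Therefore $\widetilde H=\widetilde\Psi/\widetilde T=\Psi/T=H$.

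There is no real obstacle. The only point needing care is the open-group or infinite case, where one must check that $\widetilde T$ and $\widetilde\Psi$ are finite. This holds because they are literally the same series as $T$ and $\Psi$. For the geometric tail, these series were already summed in closed form as $M_\infty/S_\infty$ and $(S_\infty\log S_\infty+C_\infty)/M_\infty$.

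As an optional consistency check in the constant-fertility form \eqref{eq_ENTROPY}, the quantities scale as $S\mapsto cS$ and $M\mapsto cM$, and $C\mapsto cC-cS\log c$. Then
\[
\widetilde S\log\widetilde S+\widetilde C=cS\log c+cS\log S+cC-cS\log c=c\,(S\log S+C),
\]
and dividing by $\widetilde M=cM$ again gives $H$. This shows that \eqref{eq_ENTROPY} is homogeneous of degree zero in the EL survivorship weights.
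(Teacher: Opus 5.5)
Your proof is correct and follows the paper's argument. The factor $c$ cancels in the renormalization, the Euler--Lotka equation identifies $\widetilde p_j$ with $p_j=\ell_j m_j$, and $\widetilde T=T$ together with $\widetilde\Psi=\Psi$ then gives $\widetilde H=H$; your extra remarks on the infinite case and the degree-zero check of \eqref{eq_ENTROPY} (with $\widetilde C=cC-cS\log c$) are also correct, though the statement does not need them.
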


\begin{proof}

Since
$
\widetilde{\ell}_j
=
c\,\ell_j,
$
we obtain immediately
\[
\widetilde p_j
=
\frac{
c\,\ell_jm_j
}{
\sum_{k=1}^{\omega}
c\,\ell_km_k
}
=
\frac{
\ell_jm_j
}{
\sum_{k=1}^{\omega}
\ell_km_k
}.
\]

From \eqref{eq_Euler-Lotka} it follows that 
$
\sum_{k=1}^{\omega}
\ell_km_k
=
1,
$
and therefore
$
\widetilde p_j
=
\ell_jm_j
=
p_j.
$
Consequently,
\[ 
\widetilde T
=
\sum_{j=1}^{\omega}
j\,\widetilde p_j
=
\sum_{j=1}^{\omega}
j\,p_j
=
T
\quad\]
and
\[
\quad
-\sum_{j=1}^{\omega}
\widetilde p_j
\log\widetilde p_j
=
-\sum_{j=1}^{\omega}
p_j
\log p_j.
\]
Hence
$
\widetilde H
=
H.
$

\end{proof}

Theorem \ref{thm_scale_invariance} shows that evolutionary entropy
depends only on the reproductive distribution and not on
the absolute magnitude of EL survivorship.
Consequently, two EL survivorship sequences that differ only by a positive scaling factor yield identical reproductive distributions, generation times, and evolutionary entropies.
Because the proof depends strictly on the normalization dictated by the Euler–Lotka equation, this result remains valid for finite, open-group, and infinite Leslie matrices alike, independent of the finiteness of \(\omega\).



\begin{rem}
Theorem~\ref{thm_scale_invariance} shows that both evolutionary
entropy and generation time are homogeneous functions of degree zero with
respect to uniform rescaling of survivorship or fertility. More
precisely, denoting by $\mathcal{M} = \{m_j\}_{j=1}^\omega$, 
$\mathcal{L} = \{\ell_j\}_{j=1}^\omega$ respectively the fertilities and the EL survivorships, we have that, 
for any constants \(a,c>0\),
\[
H(a\,\mathcal{M},c\,\mathcal{L})
=
H(\mathcal{M},\mathcal{L}),
\qquad
T(a\,\mathcal{M},c\,\mathcal{L})
=
T(\mathcal{M},\mathcal{L}).
\]

	\noindent Consequently, both quantities depend only on the normalized
reproductive distribution and not on the absolute scale of fertility
or survivorship.
\end{rem}

\emph{This observation indicates that evolutionary entropy is fundamentally
a measure of reproductive organization rather than reproductive
quantity.}


\FloatBarrier

\section{Generalized exponential survivorship model}\label{sec_exp_model}

Families of survivorship functions have been extensively employed in
demography, ecology and reliability theory, including Gompertz-type
and Weibull-type survival models, matrix population approaches, and
recent studies of Demetrius-Keyfitz' entropy and its discrete-time
formulations
\cite{Bernard2024,Gompertz1825,Weibull1951,Caswell,VAU,Baudisch2008,
	Jones2014,Fernandez2015,Colchero2016,deVries2023,Giaimo2024}.

As a general model encompassing several classical survivorship
patterns, we consider the two-parameter family
\begin{equation}
\label{eq_exponential}
l_j=
\exp\left(
-\alpha
\left(
\frac{j-1}{\omega-1}
\right)^p
\right),
\qquad
j=1,\dots,\omega,
\end{equation}
where $\alpha>0$ and $p>0$. The normalization ensures $l_1=1$ and $l_\omega=e^{-\alpha}$, with $\alpha$ dictating the global survivorship scale and $p$ governing the profile of mortality decay across the life cycle.

While traditional demography uses survivorship functions to describe complete life histories, our focus shifts from these curves to the reproductive distributions they induce. Distinct survivorship functions can generate identical reproductive distributions and entropies; conversely, demographically similar curves can yield vastly different reproductive structures once population growth or decline is incorporated. 

Consequently, our analysis separates the physical shape of the survivorship function from the reproductive distribution it generates after demographic rescaling via the Perron eigenvalue $\lambda$. Because evolutionary entropy depends strictly on the latter, standard survivorship classifications based on juvenile or pre-reproductive mortality are only indirectly relevant to classifying reproductive strategies.

The following section formalizes this correspondence, deriving the entropy formulas for the principal reproductive regimes generated by family \eqref{eq_exponential}.

\subsection{Entropy structure of the generalized exponential model}

We now show how the general exponential model 
\eqref{eq_exponential} relates to the standard survivorship functions in the literature. 
Throughout this section, fertility
is assumed to be constant throughout the reproductive interval.
\subsubsection{Geometric survivorship}
\label{subsub_geometric}

For
$
p=1,
$
we obtain
\begin{equation}
\label{eq_l_j}
l_j
=
\exp\left(
-\alpha
\frac{j-1}{\omega-1}
\right).
\end{equation}
Thus the transition survivorships are constant, defining a transition constant $\rho$:
\begin{equation}
\label{eq_geometric_TypeII}
s_j
=
\frac{l_{j+1}}{l_j}
=
\exp\left(
-\frac{\alpha}{\omega-1}
\right)
\stackrel{\text{def}}{\equiv} \rho.
\end{equation}
This is precisely the classical Type II survivorship regime.

\subsubsection{Gaussian survivorship}

For
$
p=2
$
the generalized exponential model reduces to
\[
l_j
=
\exp\left(
-\alpha
\left(
\frac{j-1}{\omega-1}
\right)^2
\right).
\]

This is precisely the discrete analogue of the Gaussian Type III
survivorship curve introduced by Demetrius \cite{Demetrius78}, $
l(x)
=
\exp\left(
-\frac{\pi}{4e_0^2}x^2
\right),$ where $e_0$ is the life expectancy.
Identifying coefficients yields
$
\alpha
=
\frac{\pi(\omega-1)^2}{4e_0^2}$.
The corresponding transition survivorships are
\[
s_j
=
\frac{l_{j+1}}{l_j}
=
\exp\left(
-\frac{\pi}{4e_0^2}
(2j-1)
\right).
\]

Thus transition survivorship decreases exponentially with age.
Equivalently, the discrete hazard
$
h_j=-\log s_j
$
satisfies
$
h_j
=
\frac{\pi}{4e_0^2}
(2j-1)$, 
showing that the hazard increases linearly with age.

\subsubsection{Rectangular survivorship}
\label{subsub_rectangular}
In the limit
$
p\to\infty,
$
we have, for every \(1\le j<\omega\),
\[
\left(
\frac{j-1}{\omega-1}
\right)^p
\longrightarrow 0 
\]
uniformly in $j$ and hence
\[
l_j\longrightarrow 1 \quad {\rm uniformly}.
\]
This corresponds to the classical  Type I rectangular survivorship limit.


\subsection{Canonical reproductive distributions}
\label{sub_canonical}
The preceding survivorship functions induce reproductive
distributions through the EL survivorship weights
\(\ell_j\). .

%

Recall that the reproductive interval is $A,\ldots, D$. 
For ease of notation, in the remainder of this section we denote by $n=D-A+1$
the length of the reproductive interval and by $\mu = \dfrac{A+D}{2}$ its midpoint.

\begin{prop}[Uniform reproductive distribution]
	\label{prop_uniform}
Assume that the EL survivorship weights are constant
throughout the reproductive interval. Then the
reproductive distribution is uniform and the evolutionary entropy is
\begin{equation}
\label{eq_Type_I_entropy}
H
=\frac{
\log n
}{ \mu}.
\end{equation}
\end{prop}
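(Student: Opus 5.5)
The plan is to work directly in the constant-fertility setting of this section, where the reproductive masses are given by \eqref{eq_p_j} as $p_j=\ell_j/S$ on $A\le j\le D$. If the EL survivorship weights are constant, say $\ell_j=\ell$ for $A\le j\le D$, then the total survivorship mass \eqref{eq_mass} is $S=n\ell$ with $n=D-A+1$. Hence $p_j=1/n$ for every $j$ in the reproductive interval, so the reproductive distribution is uniform. The value of $\ell$ plays no role. This is consistent with Theorem~\ref{thm_scale_invariance}: any constant sequence is a positive rescaling of any other, so we may even take $\ell=1$ without loss of generality.

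Next I compute the two ingredients of the entropy \eqref{eq_eventropy}. For the numerator, the Shannon entropy of the uniform distribution on $n$ points is
\[
\Psi=-\sum_{j=A}^{D}\frac1n\log\frac1n=\log n .
\]
For the denominator, the generation time \eqref{eq_generation} is the mean of the uniform distribution on the consecutive integers $A,\dots,D$:
\[
T=\frac1n\sum_{j=A}^{D}j=\frac1n\cdot\frac{n(A+D)}{2}=\frac{A+D}{2}=\mu ,
\]
using the arithmetic-series formula. Dividing gives $H=\Psi/T=\log n/\mu$, which is \eqref{eq_Type_I_entropy}.

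As a cross-check, one can use \eqref{eq_ENTROPY}. Here $S=n\ell$, $C=-n\ell\log\ell$ and $M=\ell\,n\mu$. Then $S\log S+C=n\ell\log n$, and dividing by $M$ again gives $\log n/\mu$.

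There is no genuine obstacle; the argument is elementary. The only point needing care is the interpretation of the hypothesis. It concerns the growth-adjusted weights $\ell_j=l_j/\lambda^j$, not the raw survivorship $l_j$, and the fertility is assumed constant on the interval. Under these assumptions \eqref{eq_p_j} applies directly.
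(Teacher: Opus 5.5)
Your proof is correct and follows essentially the paper's argument. The paper observes $p_j=1/n$ and substitutes $S=n$, $M=\mu n$, $C=0$ into \eqref{eq_ENTROPY}, which is your cross-check with $\ell=1$; your general-$\ell$ version makes explicit the normalization the paper leaves implicit, and your main line, computing $\Psi/T$ directly, is an equivalent form of the same calculation.
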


\begin{proof}
Since the EL survivorship weights are constant throughout
\(A,\dots,D\), normalization gives
\[
p_j=\frac1n,
\qquad j=A,\dots,D.
\]
Moreover, the relevant parameters $S, \, M, \,  C$ take the values
\[
S=n,
\qquad
M=\sum_{j=A}^{D}j
=
\frac{(A+D)n}{2} = \mu \, n,
\qquad
C=0.
\]
Substitution into \eqref{eq_ENTROPY} gives
\[
H
=
\frac{\log n}{M/S}
=
\frac{
\log n
}{
\dfrac{A+D}{2} 
} = \frac{\log n}{\mu},
\]
as claimed.
\end{proof}

This distribution corresponds to the maximal spread of
reproductive contributions across the reproductive interval.
It is obtained, for example, from the rectangular limit
\(p\to\infty\) in the stationary case \(\lambda=1\). It is
also obtained from the small-exponent limit \(p\to0^+\)
when \(\lambda=1\).

The next result deals with the case where the
EL survivorship weights form a finite geometric progression with ratio \(q <1\), along the reproductive interval \(A,\dots,D\).
For ease of notation we relabel the indices $j$ of the age classes to $k=j-A$.
This relabeling leads to the expressions
\begin{equation}
	\ell_{A+k}
	=
	\left(
	\frac{\rho}{\lambda}
	\right)^k=q^k,
	\qquad
	0<q<1,
	\label{eq:typeII_survivorship}
\end{equation}
where $k=0,\ldots,n-1$ and $n=D-A+1.$

\begin{prop}[Finite geometric reproductive distribution]
	\label{prop_geometric}
Suppose the
EL survivorship weights form a finite geometric progression
with ratio \(q <1\) along the reproductive interval \(A,\dots,D\). Then the evolutionary entropy is
\begin{equation}
\label{eq_typeII_entropy_q}
H_n
=
\frac{
-\log q\,
\left[
\displaystyle
\frac{q}{1-q}
-
\frac{nq^n}{1-q^n}
\right]
+
\log\left(
\dfrac{1-q^n}{1-q}
\right)
}{
A
+
\displaystyle
\frac{q}{1-q}
-
\displaystyle
\frac{nq^n}{1-q^n}
}.
\end{equation}
\end{prop}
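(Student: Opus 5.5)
By Theorem~\ref{thm_scale_invariance} we may work with the normalized weights $\ell_{A+k}=q^k$, $k=0,\dots,n-1$, since any common multiplicative constant in the EL survivorship weights, or in the constant fertility $f$, leaves both $H$ and $T$ unchanged. The plan is to compute the three quantities $S$, $M$, $C$ from \eqref{eq_mass}, \eqref{eq_moment}, \eqref{eq_decline}, or equivalently the reproductive distribution $p_{A+k}=q^k/S$ directly. We then assemble $\Psi$ and $T$ and form $H=\Psi/T$ as in \eqref{eq_eventropy}.

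\textbf{Step 1 (mass).} The total survivorship mass is
\[
S=\sum_{k=0}^{n-1}q^k=\frac{1-q^n}{1-q},
\]
so that $p_{A+k}=q^k(1-q)/(1-q^n)$.

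\textbf{Step 2 (mean shifted age).} The key auxiliary quantity is
\[
\bar k=\sum_{k=0}^{n-1}k\,p_{A+k}=\frac{1}{S}\sum_{k=0}^{n-1}kq^k .
\]
Differentiating the finite geometric sum gives
\[
\sum_{k=0}^{n-1}kq^k=\frac{q-nq^n+(n-1)q^{n+1}}{(1-q)^2}.
\]
Dividing by $S$ should simplify to
\[
\bar k=\frac{q}{1-q}-\frac{nq^n}{1-q^n}.
\]
I would verify this by putting both sides over the common denominator $(1-q)(1-q^n)$. The numerator on the right is $q-q^{n+1}-nq^n+nq^{n+1}$, which matches the left-hand side. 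This algebraic simplification is the only step I expect to require care; everything else is bookkeeping.

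\textbf{Step 3 (generation time).} Since $j=A+k$, we get $T=A+\bar k$, which is exactly the denominator of \eqref{eq_typeII_entropy_q}.

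\textbf{Step 4 (Shannon entropy).} We have $-\log p_{A+k}=\log S-k\log q$. Summing against $p_{A+k}$ gives
\[
\Psi=\log S-\log q\,\bar k=\log\frac{1-q^n}{1-q}-\log q\left[\frac{q}{1-q}-\frac{nq^n}{1-q^n}\right].
\]
This is the numerator of \eqref{eq_typeII_entropy_q}. Equivalently, one can check this against $\Psi=\log S+C/S$ with $C=-\log q\sum kq^k$. Dividing $\Psi$ by $T$ yields $H_n$ as claimed.
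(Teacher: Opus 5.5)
Your proposal is correct and follows the paper's proof essentially step for step. You normalize to $\ell_{A+k}=q^k$ via Theorem~\ref{thm_scale_invariance}, compute $S_n$ and the finite sum $\sum_{k}kq^k$, obtain $T_n=A+\frac{q}{1-q}-\frac{nq^n}{1-q^n}$, and write $\Psi_n=\log S_n-\log q\,(T_n-A)$ before dividing by $T_n$; your explicit common-denominator check is just the simplification the paper calls ``routine''.
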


\begin{proof}
	We first rescale the EL survivorships $\ell_j$ so that 
	
	\begin{equation}
		\label{eq_typeII_discounted_profile}
		\ell_{A+k}
		=
		q^k,
		\qquad
		k=0,\ldots,n-1;
	\end{equation}
by the homogeneity property (Theorem \ref{thm_scale_invariance}) this normalization leaves the entropy unchanged.
	Routine computations then show that 
\[
S_n
=
\sum_{k=0}^{n-1}q^k
=
\frac{1-q^n}{1-q},
\]
and
\[
M_n
=
\sum_{k=0}^{n-1}(A+k)q^k
=
A\frac{1-q^n}{1-q}
+
\frac{
q-nq^n+(n-1)q^{n+1}
}{
(1-q)^2
}.
\]
Therefore the generation time is given by
\[
T_n
=
\frac{M_n}{S_n}
=
A
+
\frac{q}{1-q}
-
\frac{nq^n}{1-q^n}.
\]
The reproductive distribution is
\[
p_{A+k}
=
\frac{q^k}{S_n},
\qquad
k=0,\ldots,n-1.
\]
As noted, this is a probability distribution. Computing its Shannon entropy we obtain
\[
\Psi_n
=
-\sum_{k=0}^{n-1}p_{A+k}\log p_{A+k}
=
-\log q\,(T_n-A)+\log S_n.
\]
Recalling from \eqref{eq_eventropy} that
\(H_n=\Psi_n/T_n\) proves \eqref{eq_typeII_entropy_q}.
\end{proof}

For the geometric survivorship model \(p=1\), the transition
survivorships are constant and equal to \(\rho\). Hence, the induced 
geometric reproductive ratio is
\begin{equation}
\label{eq_q_rho_lambda}
q
=
\frac{\rho}{\lambda}.
\end{equation}

\begin{prop}[Infinite geometric reproductive distribution]
Let $D=\infty$. Suppose that, from age class \(A\) onward, the EL survivorship
weights form an infinite geometric progression with ratio
\(q<1\). Then 
the evolutionary entropy is given by
\begin{equation}
\label{eq_typeII_open_entropy_q_compact}
H_\infty
=
-\frac{
(1-q)\log(1-q)
+
q\log q
}{
A(1-q)+q
}.
\end{equation}
\end{prop}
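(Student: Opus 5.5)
Two routes are available, and both are elementary.

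The first route obtains the formula as the limit $n\to\infty$ of Proposition~\ref{prop_geometric}. Since $0<q<1$, we have $q^n\to0$ and $nq^n\to0$. The generation time therefore converges,
\[
T_n=A+\frac{q}{1-q}-\frac{nq^n}{1-q^n}\longrightarrow A+\frac{q}{1-q}=\frac{A(1-q)+q}{1-q}.
\]
The Shannon entropy also converges:
\[
\Psi_n=-\log q\,(T_n-A)+\log S_n\longrightarrow -\frac{q\log q}{1-q}-\log(1-q),
\]
because $S_n\to 1/(1-q)$.

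To justify that the limit really is the entropy of the infinite distribution, I would not rely on continuity. Instead I would compute the infinite case directly, which is the second route. By Theorem~\ref{thm_scale_invariance}, rescale the EL survivorship weights so that $\ell_{A+k}=q^k$ for $k\ge0$. Under constant fertility (or after normalization), the reproductive masses are
\[
p_{A+k}=(1-q)q^k, \qquad S_\infty=\frac{1}{1-q}.
\]
All the series involved, namely $\sum q^k$, $\sum kq^k=q/(1-q)^2$ and $\sum k q^k\log q$, converge absolutely for $q<1$. This gives
\[
T_\infty=A+(1-q)\sum_k kq^k=A+\frac{q}{1-q},
\]
and
\[
\Psi_\infty=-\sum_k (1-q)q^k\bigl[\log(1-q)+k\log q\bigr]=-\log(1-q)-\frac{q\log q}{1-q}.
\]

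Finally, form the ratio $H_\infty=\Psi_\infty/T_\infty$ and multiply numerator and denominator by $(1-q)$. This yields
\[
-\frac{(1-q)\log(1-q)+q\log q}{A(1-q)+q},
\]
which is \eqref{eq_typeII_open_entropy_q_compact}.

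The only point needing care, and so the main obstacle, is the legitimacy of the infinite sums. This means checking absolute convergence of $\sum k q^k$ so that the term-by-term split of $\log p_{A+k}$ is valid, and noting that the homogeneity theorem applies for $\omega=\infty$. The paper already remarks that it does, provided the Euler--Lotka series converges, which holds here since $q<1$.
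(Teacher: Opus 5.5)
Your proposal is correct, and your second (direct) route is exactly the paper's proof: use homogeneity to normalize $\ell_{A+k}=q^k$, read off $S_\infty=1/(1-q)$ and $T_\infty=A+q/(1-q)$, compute the Shannon entropy of the geometric distribution $(1-q)q^k$, and multiply numerator and denominator of $\Psi_\infty/T_\infty$ by $1-q$. The truncation limit from Proposition~\ref{prop_geometric} is not needed here (the paper obtains the same limit separately in Lemma~\ref{lem:typeII_Hinfty}), and your remarks on absolute convergence only make explicit what the paper leaves implicit.
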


\begin{proof}
	As in Proposition \ref{prop_geometric},  the homogeneity property of the entropy implies we may assume without loss of generality that 
\[
\ell_{A+k}
=
q^k,
\qquad
k\ge0.
\]
Therefore
\[
S_\infty
=
\sum_{k=0}^{\infty}q^k
=
\frac1{1-q},
\]
and
\[
M_\infty
=
\sum_{k=0}^{\infty}(A+k)q^k
=
\frac{A}{1-q}
+
\frac{q}{(1-q)^2}.
\]
Thus
\[
T_\infty
=
\frac{M_\infty}{S_\infty}
=
A+\frac{q}{1-q}.
\]

\noindent The reproductive distribution is the geometric
distribution
\[
p_k^{(\infty)}
=
(1-q)q^k,
\qquad
k\ge0, 
\]
whose Shannon entropy $\Psi_\infty$ is given by
\[
\Psi_\infty
=
-\log(1-q)
-
\frac{q}{1-q}\log q.
\]
Since from \eqref{eq_eventropy} \(H_\infty=\Psi_\infty/T_\infty\), we finally obtain
\[
H_\infty
=
-\frac{
(1-q)\log(1-q)
+
q\log q
}{
A(1-q)+q
},
\]
as claimed.
\end{proof}


\subsection{From survivorship functions to reproductive distributions}

The same reproductive distribution may arise from distinct survivorship functions.

For the geometric survivorship model, we have
\[
l_{A+k}\propto \rho^k,
\]
and therefore
\[
\ell_{A+k}
=
\frac{l_{A+k}}{\lambda^{A+k}}
\propto
\left(
\frac{\rho}{\lambda}
\right)^k.
\]
Thus the induced reproductive distribution is geometric with effective ratio
${\displaystyle q=\frac{\rho}{\lambda}.} $
On the other hand, in the singular limit
$
p\to0^+,
$
we have
\[
l_j\longrightarrow e^{-\alpha} \quad {\rm uniformly}.
\]
Consequently,
\[
\ell_j
=
\frac{l_j}{\lambda^j}
\propto
\lambda^{-j}.
\]
Thus the induced reproductive distribution is again geometric, with effective ratio $
q=\frac1\lambda$.

We assume $
0<\rho/\lambda<1$.
Indeed, in the presence of an infinite tail, if $
\frac{\rho}{\lambda}\ge 1$,
the geometric series diverges, yielding an infinite total offspring contribution. Consequently, in infinite-age models, including open-group Leslie models, this regime cannot correspond to a biologically realizable population.

The entropy-maximization results developed in the following sections
will therefore be formulated in terms of the reproductive parameter
\(q\), rather than in terms of the particular survivorship function
that generates it.

\FloatBarrier

\section{Entropy maximization and reproductive lifespan}
\label{sec_entropy_maximization}

The previous sections show that evolutionary entropy depends
only on the reproductive masses. The central optimization problem therefore becomes the
following:

\begin{quote}
For fixed reproductive maturity age \(A\), what is the reproductive
endpoint \(D\) for which evolutionary entropy is maximized?
\end{quote}

Throughout this section the EL survivorship sequence $
\ell_j
=
{\displaystyle \frac{l_j}{\lambda^j},}
\ j\ge A,
$
is regarded as fixed, while the reproductive endpoint
$
D\ge A
$
is allowed to vary. Thus all the relevant variables will become functions of $D$, namely:

\begin{enumerate}
	\item the total survivorship mass $S_D
	=	\sum_{j=A}^{D}\ell_j,$
	\item  the first temporal moment $M_D
	=
	\sum_{j=A}^{D} j\,\ell_j,$
	\item the logarithmic reproductive dispersion $ C_D
	=
	-\sum_{j=A}^{D}
	\ell_j\log \ell_j$,
	\item the reproductive distribution $
		p_j^{(D)}
		=
		\frac{\ell_j}{S_D}
		=
		\frac{
			l_j\lambda^{-j}
		}{
			\sum_{k=A}^{D}
			l_k\lambda^{-k}
		},$
		\item the Shannon entropy $\Psi_D
		=
		-
		\sum_{j=A}^{D}
		p_j^{(D)}
		\log p_j^{(D)}$,
		\item the generation time 
$T_D
=
\sum_{j=A}^{D}
j\,p_j^{(D)}
=
\frac{M_D}{S_D}$.
\end{enumerate}

With these notations,  evolutionary entropy may be written in terms
of Shannon entropy as
\begin{equation}
	\label{eq_EE_Shannon}
	H(D)
	=
	\frac{\Psi_D}{T_D}
\end{equation}
or equivalently, using equation \eqref{eq_ENTROPY}, as
\begin{equation}
	\label{eq:HD-basic}
	H(D)
	=
	\frac{
		S_D\log S_D
		+
		C_D
	}{
		M_D
	}.
\end{equation}
Both forms will be helpful in what follows.

%
%
%
%


The corresponding Shannon entropy
\cite{shannon1948,CoverThomas,Gray2011,Jaynes1957,Jaynes1957b}
is
\[
\Psi_D
=
-
\sum_{j=A}^{D}
p_j^{(D)}
\log p_j^{(D)}.
\]

%
%


\subsection{Asymptotic structure of evolutionary entropy}

We recall the classical upper bound for discrete
Shannon entropy with fixed mean (see, e.g.,
\cite{CoverThomas}). Given a probability distribution
$
q=\{q_j\}_{j\geq0}
$
on \(\mathbb N_0\) with mean
$
\mu
=
\sum_{j=0}^{\infty} j q_j,
$
then
\begin{equation}
\label{eq_entropybound}
-
\sum_{j=0}^{\infty}
q_j\log q_j
\leq
(\mu+1)\log(\mu+1)-\mu\log\mu,
\end{equation}
with equality holding if and only if \(q\) is a geometric distribution.

\begin{thm}[Asymptotic entropy collapse]
\label{thm:entropy-collapse}
Let \(D\to\infty\). Then
\[
T_D\to\infty
\qquad\mbox{\rm if and only if}\qquad
M_D\to\infty.
\]
Moreover, each of these conditions implies
$
H(D)\to0.
$
\end{thm}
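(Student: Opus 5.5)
The plan is to treat the equivalence with elementary monotonicity arguments on the partial sums $S_D$ and $M_D$. For the entropy collapse I would use the maximum-entropy bound \eqref{eq_entropybound}. Throughout, I take $\ell_j>0$ for $j\ge A$, which is implicit in $p_j^{(D)}=\ell_j/S_D$ being a genuine distribution on $\{A,\dots,D\}$. Then $S_D$ and $M_D$ are nondecreasing in $D$, $S_D\ge \ell_A>0$, and $T_D=M_D/S_D\ge A$.

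\emph{Direction $T_D\to\infty\Rightarrow M_D\to\infty$.} This is immediate: $M_D=T_D S_D\ge \ell_A T_D\to\infty$.

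\emph{Direction $M_D\to\infty\Rightarrow T_D\to\infty$.} Since $S_D$ is monotone, I split into two cases.
\begin{itemize}
\item If $S_D$ is bounded, then $S_D\uparrow S_\infty<\infty$ and $T_D\ge M_D/S_\infty\to\infty$.
\item If $S_D\to\infty$, I fix an arbitrary $K\ge A$. For $D>K$, all mass of $p^{(D)}$ beyond $K$ sits at ages $j>K$, so
\[
T_D\;\ge\;K\,\frac{S_D-S_K}{S_D}.
\]
The right-hand side tends to $K$ as $D\to\infty$, because $S_K$ is fixed. Hence $\liminf T_D\ge K$ for every $K$, and so $T_D\to\infty$.
\end{itemize}
This second case is the only step needing more than one line. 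The key point is that an unbounded total mass forces the normalized distribution to push its weight past any fixed age. The same argument gives $T_D\to\infty$ whenever $S_D\to\infty$, even without mentioning $M_D$.

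\emph{Entropy collapse.} I shift the distribution to $\mathbb N_0$ via $k=j-A$. This gives a distribution with mean $\mu_D=T_D-A\ge0$ and the same Shannon entropy $\Psi_D$. By \eqref{eq_entropybound},
\[
\Psi_D\le(\mu_D+1)\log(\mu_D+1)-\mu_D\log\mu_D=\log(\mu_D+1)+\mu_D\log\!\Big(1+\frac1{\mu_D}\Big)\le \log(T_D-A+1)+1,
\]
using $x\log(1+1/x)\le 1$, with the case $\mu_D=0$ trivial. Then \eqref{eq_EE_Shannon} gives
\[
0\le H(D)\le \frac{\log(T_D-A+1)+1}{T_D}.
\]
The right-hand side tends to $0$ as $T_D\to\infty$. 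By the equivalence just proved, this conclusion holds under either condition.
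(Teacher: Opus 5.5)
Your proof is correct and follows essentially the same route as the paper: you split on whether $S_D$ stays bounded, use the tail-mass estimate $T_D\ge K(1-S_K/S_D)$ when $S_D\to\infty$, and apply the geometric maximum-entropy bound \eqref{eq_entropybound} to get $\Psi_D=O(\log T_D)$. Your direct estimate $M_D\ge \ell_A T_D$ and your shift of the support to $\{0,\dots,D-A\}$ are only tidier versions of the paper's argument; the shift gives the explicit bound $H(D)\le(\log(T_D-A+1)+1)/T_D$, whereas the paper applies the bound directly with mean $T_D$.
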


\begin{rem}
We refer to this phenomenon as {\em asymptotic collapse of evolutionary entropy}.
\end{rem}

\begin{proof}

We first show that
$
M_D\to\infty
\Longrightarrow
T_D\to\infty.
$
Fix \(N\geq A\). For every \(D>N\),
\[
M_D
=
\sum_{j=A}^{N} j\ell_j
+
\sum_{j=N+1}^{D} j\ell_j.
\]

Since \(j\geq N+1\) in the second sum, it follows that
\[ 
M_D
\geq
\sum_{j=A}^{N} j\ell_j
+
(N+1)
\sum_{j=N+1}^{D}\ell_j.
\]

Dividing through by \(S_D\),
\[
T_D
=
\frac{M_D}{S_D}
\geq
\frac1{S_D}
\sum_{j=A}^{N} j\ell_j
+
(N+1)
\left(
1-
\frac1{S_D}
\sum_{j=A}^{N}\ell_j
\right).
\]

\noindent Suppose first that \(S_D\to+\infty\). Then
$
\frac1{S_D}
\sum_{j=A}^{N} j\ell_j
\to0 $, implying 
$
\frac1{S_D}
\sum_{j=A}^{N}\ell_j
\to0, 
$ and therefore
\[
\liminf_{D\to+\infty}T_D
\geq
N+1.
\]
Since \(N\) is arbitrary, this condition  implies
$
T_D\to+\infty.
$

To prove the converse implication, suppose that \(S_D\) is bounded.
Since \(S_D\) is increasing, it converges to some limit $
S_\infty>0.$ 
Since 
$
T_D
=
\frac{M_D}{S_D},
$
we conclude that \(T_D\to\infty\) only if \(M_D\to\infty\). This concludes the proof of the equivalence
\[
T_D\to+\infty
\quad\Longleftrightarrow\quad
M_D\to+\infty.
\]

We now show that either of these conditions implies the asymptotic vanishing of entropy.

First note that, since \(p^{(D)}\) is a probability distribution on
\(\mathbb N_0\) with mean \(T_D\), the entropy bound
\eqref{eq_entropybound} yields
\[
\Psi_D
\leq
(T_D+1)\log(T_D+1)
-
T_D\log T_D.
\]

Suppose now that $T_D \to \infty$ as $D \to \infty$. Then 
\[
(T_D+1)\log(T_D+1)
-
T_D\log T_D
=
\log T_D+1+o(1).
\]

Therefore
$
\Psi_D
=
O(\log T_D),
$
and consequently
\[
H(D)
=
\frac{\Psi_D}{T_D}
=
O\!\left(
\frac{\log T_D}{T_D}
\right),
\]

showing that
$
H(D)\to0
$. This concludes the proof.

\end{proof}

We observe that if only one effective reproductive class exists,
say \(m_k\neq0\) for a single \(k\), then the Shannon and evolutionary entropies are
zero. This singular case must be trivially excluded from the hypotheses of our next result.

\begin{cor}[Existence of a finite entropy-maximizing endpoint]
\label{cor:finite-maximizer}

Suppose that
$
M_D\to+\infty$
as $
(D\to+\infty)
$
and assume that at least two reproductive age classes have positive reproductive masses.
Then evolutionary entropy attains a finite global maximum. 
More precisely, there exists
$
D^\ast>A
$
such that
\[
H(D^\ast)
=
\max_{D\geq A}H(D).
\]

\end{cor}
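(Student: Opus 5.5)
The plan is to combine Theorem~\ref{thm:entropy-collapse} with the elementary fact that a real sequence tending to zero and taking some strictly positive value attains its supremum. I view $H(D)$ as a sequence indexed by $D\ge A$. With the fixed EL survivorship sequence, every $H(D)$ is nonnegative, since Shannon entropy is nonnegative and $T_D\ge A>0$. Moreover $H(A)=0$, because for $D=A$ the distribution $p^{(D)}$ is concentrated on a single class.

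\textbf{Step 1 (a strictly positive value).} By hypothesis at least two reproductive classes carry positive mass. I will choose $D_0$ to be the second index $j\ge A$ with $\ell_jm_j>0$. For every $D\ge D_0$, the distribution $p^{(D)}$ then has at least two atoms of positive mass. Its Shannon entropy $\Psi_D$ is therefore strictly positive, and $T_D<\infty$ for finite $D$. Hence $H(D_0)=\Psi_{D_0}/T_{D_0}=:\eta>0$. If one insists on reading $p^{(D)}$ as $\ell_jm_j/\sum_{k\le D}\ell_km_k$ with nonconstant fertility, the same argument applies verbatim. Theorem~\ref{thm_scale_invariance} guarantees that the normalization does not affect $H$.

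\textbf{Step 2 (tail control).} Since $M_D\to+\infty$, Theorem~\ref{thm:entropy-collapse} yields $T_D\to\infty$ and hence $H(D)\to0$. So there is $N\ge D_0$ with $H(D)<\eta$ for all $D>N$. Consequently
\[
\sup_{D\ge A}H(D)=\max_{A\le D\le N}H(D),
\]
and the right-hand side is a maximum over finitely many values, so it is attained at some $D^\ast\in\{A,\dots,N\}$. Because $H(D^\ast)\ge\eta>0=H(A)$, we have $D^\ast\neq A$, that is, $D^\ast>A$. This gives the corollary.

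\textbf{Main obstacle.} The only delicate point is Step~1, namely making sure the maximum is not the trivial value at the single-class endpoint and that positivity genuinely occurs. This is exactly where the hypothesis of two positive reproductive masses enters. A secondary subtlety is that Theorem~\ref{thm:entropy-collapse} bounds $\Psi_D$ using a distribution on $\mathbb N_0$ with mean $T_D$, whereas our support starts at $A$. I will note that shifting the support to start at $0$ only lowers the mean, and the bound $(\mu+1)\log(\mu+1)-\mu\log\mu$ is increasing in $\mu$. So the collapse $H(D)\to0$ still holds and can be quoted as stated.
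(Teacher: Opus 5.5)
Your proposal is correct and follows the paper's proof essentially step for step. You obtain a positive value $H(D_0)$ from the two-positive-masses hypothesis, get $H(D)\to 0$ from Theorem~\ref{thm:entropy-collapse}, reduce the supremum to a maximum over the finite set $\{A,\dots,N\}$, and deduce $D^\ast>A$ from $H(A)=0$; the paper uses the threshold $\tfrac12\sup_{E\ge A}H(E)$ where you use $\eta=H(D_0)$, which works equally well. Your ``secondary subtlety'' is moot: $p^{(D)}$ is supported on $\{A,\dots,D\}\subset\mathbb N_0$ with mean exactly $T_D$, so the bound \eqref{eq_entropybound} applies directly without any shift.
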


\begin{proof}

Since at least two reproductive age classes have positive
reproductive masses, there exists \(D_0>A\)
such that \(p^{(D_0)}\) is not concentrated at a single age class.
Hence $
\Psi_{D_0}>0$ and $
H(D_0)>0.$
On the other hand
$
H(A)=0,
$ which implies
\[
\sup_{D\geq A}H(D) \geq H(D_0) >0.
\]
Since $M_D\to+\infty$, it follows from 
Theorem~\ref{thm:entropy-collapse} that 
$
H(D)\to0$ as 
$D\to+\infty.$
Hence there exists \(N\geq A\) such that
\[
H(D)
<
\frac12
\sup_{E\geq A}H(E)
\qquad
\text{for all }
D>N.
\]
Consequently,
\[
\sup_{D\geq A}H(D)
=
\max_{A\leq D\leq N}H(D).
\]
Since
$
J = \{A,A+1,\ldots,N\} $
is finite, the maximum is attained at some  \(D^\ast \in J \).
Because \(H(A)=0\) and the global maximum is strictly positive,
it follows necessarily that $
D^\ast>A.
$

\end{proof}

\begin{rem}[Divergent and finite first-moment regimes]
\label{thm:first-moment-dichotomy}

We now summarize the distinct entropy regimes determined by the convergence behaviour of the first moment $M_D$.

Consider the complete EL survivorship sequence
$
\{\ell_j\}_{j\geq A}.
$

\begin{enumerate}

\item
If
$
\sum_{j=A}^{\infty}
j\ell_j
=
+\infty,
$
then
$
M_D\to+\infty,$
\, 
$T_D\to+\infty,$
\, 
$H(D)\to0.$
If, in addition, there is more than one reproductive age class
with positive reproductive masses, then \(H(D)\)
possesses a finite global maximizer $D^*$.

\item
If 
$
\sum_{j=A}^{\infty}
j\ell_j
<
+\infty,
$
then
$
\sum_{j=A}^{\infty}
\ell_j
<
+\infty
$
and the series defining $S_D$ are convergent, with  
$
S_D\to S_\infty<+\infty,$
\, 
$M_D\to M_\infty<+\infty.$
Consequently,
\[
T_D
=
\frac{M_D}{S_D}
\longrightarrow
T_\infty
=
\frac{M_\infty}{S_\infty}
<
+\infty.
\]
Defining the probability distribution
\[
p_j^{(\infty)}
=
\frac{\ell_j}{S_\infty},
\qquad
j\geq A,
\]
the associated Shannon entropy is
\[
\Psi_\infty
=
-
\sum_{j=A}^{\infty}
p_j^{(\infty)}
\log p_j^{(\infty)}.
\]
If
$
\Psi_\infty<+\infty,
$
then
\[
H(D)
\longrightarrow
H_\infty
:=
\frac{
S_\infty\log S_\infty
-
\sum_{j=A}^{\infty}
\ell_j\log\ell_j
}{
M_\infty
}.
\]

Within this regime, the existence of a finite interior maximizer is not guaranteed. 
Depending on the asymptotic structure of the tail of the distribution, $H(D)$ may either exceed its 
limiting value $H_\infty$ at a finite point $D^*$  or converge monotonically toward $H_\infty$.


\end{enumerate}

\end{rem}

%
%
%
%

\begin{rem}[Natural extension to a continuous variable]
\label{rem_extensions}
The quantities $S_D$, $M_D$, $C_D$, $T_D$ and $H(D)$ are first defined for integer endpoints $D$. In the two optimization regimes treated explicitly below, however, these quantities are given by closed-form expressions depending on $D$ through elementary functions such as $D-A+1$, $q^{D-A+1}$, and logarithms. We use the phrase \emph{natural extension} for the real-variable functions obtained from these closed forms by replacing the integer endpoint by a real variable.

This convention is not an arbitrary interpolation of a discrete sequence. It is the continuous version of the same formula used to compute the integer values. Nor do we require, or claim, uniqueness of analytic interpolation from values prescribed only at the integers. The continuous functions are used only to locate and classify candidate extrema; the admissible demographic optimizer is always obtained by evaluating $H(D)$ on integer reproductive endpoints. Thus the continuous calculation provides a rigorous and transparent way to identify the finite set of integer endpoints that must be checked in the original Leslie problem.
\end{rem}

%
%
%
%
%
%
%
%
%
%


\subsection{Entropy maximization in the rectangular limit}

In the rectangular limit, the reproductive
distribution is uniform on the reproductive interval.
Recall that from Proposition \ref{prop_uniform}
we have
\[
H(D)
= \frac{\log n}{\mu} =
\frac{\log(D-A+1)}{\dfrac{A+D}{2}} .
\]

For fixed maturity age \(A\), according to the principle of natural
extensions described in Remark~\ref{rem_extensions}, we substitute the integer parameter $D$ by the real variable \(x\geq A\), and define the corresponding
real-variable function
\begin{equation}
\label{eq_F_function}
F(x)
=
\frac{\log(x-A+1)}{\dfrac{A+x}{2}} .
\end{equation}

Upon differentiation we obtain
\[
F'(x)
=
\frac{
2\left( 
\dfrac{A+x}{x-A+1}
-
\log(x-A+1)
\right)
}{
(A+x)^2
},
\]
and therefore critical points of \(F\) satisfy
\begin{equation}
\label{eq_critical}
\frac{A+x}{x-A+1}
=
\log(x-A+1).
\end{equation}

\noindent Introducing the variable
$
u=x-A+1,
$ this 
equation becomes
\[
1+\frac{2A-1}{u}
=
\log u,
\]
or equivalently
\[
u\log u-u
=
2A-1.
\]
With the further change of variable \(u=e^y\), we obtain
$
(y-1)e^{y-1}
=
\frac{2A-1}{e},
$
which, in terms of the Lambert \(W\) function
\cite{Corless1996}, is written 
\[
y-1
=
W\!\left(
\frac{2A-1}{e}
\right).
\]
Exponentiating and returning to the original variables, we obtain
the continuous rectangular-limit entropy maximizer
\begin{equation}
\label{eq_Type_I_maximizer}
D^\ast
=
A-1+
e^{1+W\left(\frac{2A-1}{e}\right)}.
\end{equation}

In terms of the original discrete demographic problem, the admissible
optimizer is the integer nearest to $D^*$   in
$
\{A,\ldots,\omega\}.
$

\subsection{The geometric reproductive regime}

Recall from \ref{subsub_geometric} that the geometric regime is characterized by geometric decay by a constant ratio
$
q=\frac{\rho}{\lambda}
$
arising from geometric post-maturity survivorship
\eqref{eq_geometric_TypeII}.

As in Section \ref{sub_canonical}, for ease of notation we relabel the indices $j$ of the age classes to $k=j-A$, writing

\begin{equation}
\ell_{A+k}
=
\left(
\frac{\rho}{\lambda}
\right)^k=q^k,
\label{eq:typeII_survivorshipA}
\end{equation}
where $k=0,\ldots,n-1$ and $n=D-A+1$ is the length of the reproductive window.

Although this behaviour coincides with the classical Type II
survivorship regime when $\lambda=1$, it is by no means restricted
to that setting. During the reproductive period, populations
exhibiting Type I, Type II or even Type III survivorship patterns
are often well approximated by either approximately constant adult
hazard rates or very low mortality rates. Consequently, geometric
or near-geometric reproductive masses arise naturally across
a broad range of demographic regimes.

Define the normalization factor
\begin{equation}
Z(n)
=
\sum_{k=0}^{n-1}
q^k
=
\frac{
1-q^n
}{
1-q
},
\label{eq:typeII_partition}
\end{equation}
and the reproductive distribution
\begin{equation}
q_k^{(n)}
=
\frac{
q^k
}{
Z(n)
},
\qquad
k=0,\ldots,n-1,
\label{eq:typeII_distribution}
\end{equation}
which is of course a probability distribution.
The mean reproductive displacement relative to maturity is 
\begin{equation}
m(n)
=
\sum_{k=0}^{n-1}
k\,q_k^{(n)}
\label{eq:typeII_mean_definition}
\end{equation}
and the generation time is therefore given by
\begin{equation}
T(n)
=
A+m(n).
\label{eq:typeII_generation_time}
\end{equation}

\begin{lem}[Mean of the truncated geometric distribution]
For the geometric reproductive distribution
\eqref{eq:typeII_distribution}, we have
\begin{equation}
m(n)
=
\frac{q}{1-q}
-
\frac{nq^n}{
1-q^n}.
\label{eq:typeII_mean}
\end{equation}
\end{lem}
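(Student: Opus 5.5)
We compute the truncated first moment $\sum_{k=0}^{n-1}kq^k$ in closed form and then divide by the normalization factor $Z(n)$ from \eqref{eq:typeII_partition}. By definition \eqref{eq:typeII_mean_definition},
\[
m(n)=\frac{1}{Z(n)}\sum_{k=0}^{n-1}k\,q^k ,
\]
so everything reduces to finding $G(n)=\sum_{k=0}^{n-1}kq^k$.

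To evaluate $G(n)$ we use the standard differentiation trick. Regard $Z(n)=\sum_{k=0}^{n-1}q^k=(1-q^n)/(1-q)$ as a function of $q$ and note that $G(n)=q\,\frac{d}{dq}Z(n)$. Differentiating the quotient gives
\[
\frac{d}{dq}\frac{1-q^n}{1-q}=\frac{-nq^{n-1}(1-q)+(1-q^n)}{(1-q)^2},
\]
and hence
\[
G(n)=\frac{q-nq^n+(n-1)q^{n+1}}{(1-q)^2}.
\]
This agrees with the expression already appearing in the proof of Proposition~\ref{prop_geometric}. One can also check it without calculus, by the shift identity $G(n)-qG(n)=\sum_{k=1}^{n-1}q^k-(n-1)q^n$.

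The only step needing care is the algebraic simplification, which we organize so as to land on the two-term form \eqref{eq:typeII_mean}. Write the numerator of $G(n)$ as
\[
q-nq^n+(n-1)q^{n+1}=q(1-q^n)-nq^n(1-q).
\]
Dividing by $(1-q)^2$ and then by $Z(n)=(1-q^n)/(1-q)$, the first piece gives $q/(1-q)$ and the second gives $nq^n/(1-q^n)$. This yields
\[
m(n)=\frac{q}{1-q}-\frac{nq^n}{1-q^n}.
\]
Since $0<q<1$, neither $1-q$ nor $1-q^n$ vanishes, so all divisions are legitimate.

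As a consistency check, letting $n\to\infty$ gives $nq^n\to0$, so $m(n)\to q/(1-q)$. This recovers $T_\infty-A$ from the infinite geometric proposition. Combined with \eqref{eq:typeII_generation_time}, the formula also reproduces $T_n$ from Proposition~\ref{prop_geometric}.
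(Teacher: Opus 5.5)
Your proof is correct and follows essentially the same route as the paper. The paper quotes the closed form for $\sum_{k=0}^{n-1}kq^k$ as a classical identity, which you derive by differentiation. It then uses the same rewriting of the numerator as $q(1-q^n)-nq^n(1-q)$ before dividing by $Z(n)$.
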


\begin{proof}
Using the classical finite geometric identity
\begin{equation}
\sum_{k=0}^{n-1}kq^k
=
\frac{q-nq^n+(n-1)q^{n+1}}
{(1-q)^2}
\label{eq:typeII_geometric_sum}
\end{equation}

\noindent and combining
\eqref{eq:typeII_partition},
\eqref{eq:typeII_distribution}
and
\eqref{eq:typeII_mean_definition},
we obtain
\begin{equation}
m(n)
=
\frac{
q-nq^n+(n-1)q^{n+1}
}{
(1-q)(1-q^n)
}.
\label{eq:typeII_mean_intermediate}
\end{equation}

\noindent The numerator on the right hand-side 
may be written as
$
q(1-q^n)-nq^n(1-q),
$
whereupon substitution 
yields
\eqref{eq:typeII_mean}, proving the lemma.
\end{proof}

In the sequel it will be sometimes convenient to use the auxiliary parameter
\begin{equation}
\beta=-\log q>0.
\label{eq:typeII_beta}
\end{equation}

\begin{lem}[Entropy of the truncated geometric distribution]
	\label{lemma_truncated}
The Shannon entropy of the reproductive distribution
\eqref{eq:typeII_distribution}
is
\begin{equation}
\Psi(n)=\beta m(n)+\log Z(n).
\label{eq:typeII_shannon}
\end{equation}
\end{lem}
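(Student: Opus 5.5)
The identity follows by substituting the explicit form of the distribution \eqref{eq:typeII_distribution} into the definition of Shannon entropy and splitting the logarithm. For each $k=0,\ldots,n-1$ we have $q_k^{(n)}=q^k/Z(n)$, so
\[
-\log q_k^{(n)} = -k\log q + \log Z(n) = \beta k + \log Z(n),
\]
using the auxiliary parameter $\beta=-\log q$ from \eqref{eq:typeII_beta}. This pointwise identity is the only real computation. It works because the masses are exactly geometric after the rescaling permitted by Theorem~\ref{thm_scale_invariance}.

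Next we multiply by $q_k^{(n)}$ and sum over $k=0,\ldots,n-1$, which gives
\[
\Psi(n)=\sum_{k=0}^{n-1} q_k^{(n)}\bigl(\beta k+\log Z(n)\bigr)
=\beta\sum_{k=0}^{n-1}k\,q_k^{(n)}+\log Z(n)\sum_{k=0}^{n-1}q_k^{(n)}.
\]
The first sum is $m(n)$ by definition \eqref{eq:typeII_mean_definition}. The second sum equals $1$, since $q^{(n)}$ is a probability distribution by the choice of $Z(n)$ in \eqref{eq:typeII_partition}. This yields $\Psi(n)=\beta m(n)+\log Z(n)$, which is \eqref{eq:typeII_shannon}.

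There is no substantive obstacle here; the only points to check are bookkeeping. The sum is finite, so there are no convergence questions. We also have $\beta>0$ because $0<q<1$. As a consistency check, the result matches the expression $\Psi_n=-\log q\,(T_n-A)+\log S_n$ obtained in the proof of Proposition~\ref{prop_geometric}, since $T_n-A=m(n)$ by \eqref{eq:typeII_generation_time} and $S_n=Z(n)$. If desired, one can then substitute the closed form \eqref{eq:typeII_mean} for $m(n)$ to obtain a fully explicit expression in $q$ and $n$.
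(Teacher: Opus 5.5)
Your proposal is correct and follows the paper's proof: substitute $\log q_k^{(n)} = k\log q - \log Z(n)$, split the sum into the mean term and the normalization term, and identify them with $m(n)$ and $1$. The only differences are your remark about rescaling, which is unnecessary here because the distribution is defined directly as $q^k/Z(n)$, and your consistency check against Proposition~\ref{prop_geometric}; neither affects the argument.
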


\begin{proof}
From
\eqref{eq:typeII_distribution},
we have
$
\log q_k^{(n)}
=
k\log q-\log Z(n).
$
Using the definition of Shannon entropy, we obtain
\begin{align}
\Psi(n)
&=
-\sum_{k=0}^{n-1}
q_k^{(n)}\log q_k^{(n)}\\
&=
-\log q
\sum_{k=0}^{n-1}kq_k^{(n)}
+
\log Z(n)
\sum_{k=0}^{n-1}q_k^{(n)}.
\label{eq:typeII_entropy_expansion}
\end{align}

Since
\(
\{q_k^{(n)}\}
\)
is a probability distribution,
$
\sum_{k=0}^{n-1}q_k^{(n)}=1,
$
whereby the last term in
\eqref{eq:typeII_entropy_expansion}
reduces to \(\log Z(n)\). On the other hand, using
\eqref{eq:typeII_mean_definition},
we obtain
\[
\Psi(n)
=
-\log q\,m(n)+\log Z(n).
\]
Finally, using the definition of $\beta$ in
\eqref{eq:typeII_beta},
we obtain
\[
\Psi(n)
=
\beta m(n)+\log Z(n).
\]
as claimed. 
\end{proof}
In view of Lemma \ref{lemma_truncated} and the expression \eqref{eq_EE_Shannon} of evolutionary entropy 
in terms of the Shannon entropy,
we obtain
\begin{equation}
H(n)
=
\frac{
\beta m(n)+\log Z(n)
}{
A+m(n)
}.
\label{eq:typeII_entropy}
\end{equation}

To further analyse the structure of the entropy function, we use the closed-form natural extension described in Remark~\ref{rem_extensions}. Thus the integer variable \(n\ge1\) is replaced by a real variable \(x\ge1\), and we define
the normalization
\begin{equation}
Z(x)
=
\frac{1-e^{-\beta x}}{1-q},
\label{eq:typeII_Zxi}
\end{equation}
the mean reproductive displacement
\begin{equation}
m(x)
=
\frac{q}{1-q}
-
\frac{x e^{-\beta x}}
{1-e^{-\beta x}},
\label{eq:typeII_mxi}
\end{equation}
the generation time
\begin{equation}
T(x)=A+m(x),
\label{eq:typeII_Txi}
\end{equation}
and the evolutionary entropy
\begin{equation}
H(x)
=
\frac{
\beta m(x)+\log Z(x)
}{
T(x)
}.
\label{eq:typeII_Hxi}
\end{equation}

We also define two auxiliary functions which will play an important role below. The first is the function \(F\) defined by
\begin{equation}
F(x)
=
H(x)-\beta
=
\frac{
\log Z(x)-\beta A
}{
T(x)
},
\label{eq:typeII_F}
\end{equation}
and the second is the function \(\Phi(x)\) defined by
\begin{equation}
\Phi(x)
=
\frac{
\beta(1-e^{-\beta x})
}{
\beta x-1+e^{-\beta x}
}.
\label{eq:typeII_Phi}
\end{equation}

\begin{lem}
\label{lem_derivative_F}
The function 
\(
F
\)
satisfies the differential identity
\begin{equation}
F'(x)
=
\frac{T'(x)}{T(x)}
\left[
\Phi(x)-F(x)
\right].
\label{eq:typeII_Fprime}
\end{equation}

\end{lem}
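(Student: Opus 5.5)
The plan is a direct computation: differentiate the quotient defining $F$ and reduce the identity \eqref{eq:typeII_Fprime} to a single explicit relation between $Z$, $T$ and $\Phi$. First I will confirm the second expression for $F$ in \eqref{eq:typeII_F}. Since $T(x)=A+m(x)$, we have $\beta m(x)=\beta T(x)-\beta A$. Hence
\[
H(x)-\beta=\frac{\beta m(x)+\log Z(x)-\beta T(x)}{T(x)}=\frac{\log Z(x)-\beta A}{T(x)}.
\]
Differentiating this quotient by the quotient rule, I expect to obtain
\[
F'(x)=\frac{Z'(x)/Z(x)}{T(x)}-\frac{T'(x)}{T(x)}\,F(x).
\]
Wherever $T'(x)\neq0$, this can be rewritten as
\[
F'(x)=\frac{T'(x)}{T(x)}\left[\frac{Z'(x)/Z(x)}{T'(x)}-F(x)\right].
\]
The lemma is therefore equivalent to the identity $\Phi(x)=\dfrac{Z'(x)}{Z(x)\,T'(x)}$. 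Equivalently, $T'\Phi=Z'/Z$, and this form also covers any point where $T'$ vanishes.

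Next I compute the two logarithmic and ordinary derivatives explicitly, writing $E=e^{-\beta x}$ for brevity. From \eqref{eq:typeII_Zxi},
\[
\frac{Z'(x)}{Z(x)}=\frac{\beta E}{1-E}.
\]
From \eqref{eq:typeII_mxi}, $T'(x)=m'(x)=-\dfrac{d}{dx}\dfrac{xE}{1-E}$. The quotient rule, using $E'=-\beta E$, should give
\[
\frac{d}{dx}\frac{xE}{1-E}=\frac{E(1-E-\beta x)}{(1-E)^2},
\qquad\text{so}\qquad
T'(x)=\frac{E\,(\beta x-1+E)}{(1-E)^2}.
\]
Dividing these expressions cancels $E$ and one factor of $(1-E)$, leaving $\beta(1-E)/(\beta x-1+E)$, which is exactly $\Phi(x)$ as defined in \eqref{eq:typeII_Phi}.

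The main point requiring care is the legitimacy of the division, so I will check that $T'(x)>0$ for $x\ge1$. This follows from the elementary inequality $e^{-y}>1-y$ for $y=\beta x>0$, which gives $\beta x-1+E>0$. The same inequality shows that the denominator of $\Phi$ is positive. I will also check that $T(x)>0$, since $T(x)=A+m(x)\ge A\ge1$ because $m$ is increasing with $m(x)\to0$ as $x\to0^+$. With these sign facts in place, the identity holds on the whole domain. Beyond this, the only delicate step is the bookkeeping in the derivative of $xE/(1-E)$, where the two $E^2$ terms cancel.
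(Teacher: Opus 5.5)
Your proposal is correct and takes the same route as the paper's proof: the quotient rule reduces the identity to $(\log Z)'/T'=\Phi$, and your explicit formulas for $(\log Z)'$ and $T'$ confirm it. One small slip in your positivity remark: as $x\to0^+$ one has $m(x)\to 1/(e^{\beta}-1)-1/\beta<0$, not $0$, so the correct anchor is $m(1)=0$, which together with $T'>0$ gives $T(x)\ge A$ on $[1,\infty)$.
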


\begin{proof}
Differentiating $F$ from 
\eqref{eq:typeII_F}, we obtain
\begin{align}
F'(x)
&=
\frac{
(\log Z(x))' \, T(x)
-
(\log Z(x)-\beta A) \, T'(x)
}{
T(x)^2
}
\notag\\
&=
\frac{T'(x)}{T(x)}
\left[
\frac{(\log Z(x))'}{T'(x)}
-
F(x)
\right].
\label{eq:typeII_Fprime_intermediate}
\end{align}
Taking logs in \eqref{eq:typeII_Zxi} and differentiating, we obtain 
\begin{equation}
(\log Z(x))'
=
\frac{
\beta e^{-\beta x}
}{
1-e^{-\beta x}
},
\label{eq:typeII_logZprime}
\end{equation}
while differentiating $m(x)$ from 
\eqref{eq:typeII_mxi}
and using
\eqref{eq:typeII_Txi}
yields
\begin{equation}
T'(x)
=
\frac{
e^{-\beta x}
(
\beta x-1+e^{-\beta x}
)
}{
(1-e^{-\beta x})^2}.
\label{eq:typeII_Tprime}
\end{equation}
Substituting these two expressions into 
\eqref{eq:typeII_Fprime_intermediate}
yields
\eqref{eq:typeII_Fprime}, proving the Lemma.
\end{proof}

\begin{lem}
\label{lem_positivity}
 For all $x > 0$, we have 
$ T'(x)>0  \quad {\rm and }  \  
\Phi'(x)<0. $
\end{lem}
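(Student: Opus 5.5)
Both claims reduce, after the substitution $t=\beta x>0$, to the elementary inequality $e^{-t}>1-t$ for $t>0$ (strict convexity of the exponential) and a close relative of it. The plan is to treat the two signs separately, using the explicit formulas already available, and to differentiate nothing beyond what is needed.

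For $T'(x)>0$, I will use the expression \eqref{eq:typeII_Tprime} obtained in the proof of Lemma~\ref{lem_derivative_F}:
\[
T'(x)=\frac{e^{-\beta x}\,(\beta x-1+e^{-\beta x})}{(1-e^{-\beta x})^2}.
\]
Since $\beta=-\log q>0$ and $x>0$, we have $t=\beta x>0$. The factor $e^{-t}$ is positive, and the denominator is positive because $e^{-t}<1$. So it remains to show $g(t)=t-1+e^{-t}>0$ for $t>0$. This follows from $g(0)=0$ and $g'(t)=1-e^{-t}>0$ for $t>0$. The same computation shows that the denominator of $\Phi$ in \eqref{eq:typeII_Phi} is positive, so $\Phi$ is well defined and smooth on $(0,\infty)$.

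For $\Phi'(x)<0$, I will write $\Phi(x)=\beta\,\varphi(\beta x)$ with
\[
\varphi(t)=\frac{1-e^{-t}}{t-1+e^{-t}},
\]
so that $\Phi'(x)=\beta^2\varphi'(\beta x)$ and it suffices to show $\varphi'(t)<0$ for $t>0$. By the quotient rule, the sign of $\varphi'(t)$ is the sign of
\[
e^{-t}(t-1+e^{-t})-(1-e^{-t})^2 .
\]
Expanding, the $e^{-2t}$ terms cancel and this simplifies to
\[
h(t)=te^{-t}+e^{-t}-1 .
\]
Then $h(0)=0$ and $h'(t)=-te^{-t}<0$ for $t>0$, so $h(t)<0$ on $(0,\infty)$. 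Equivalently, multiplying by $e^{t}$, the claim is $1+t<e^{t}$. Since the denominator $(t-1+e^{-t})^2$ is positive by the first part, this gives $\varphi'(t)<0$ and hence $\Phi'(x)<0$.

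No step is a serious obstacle. The only place needing care is the algebraic expansion of the quotient-rule numerator for $\varphi$, where the $e^{-2t}$ cancellation must be checked so that the sign reduces exactly to $h(t)=te^{-t}+e^{-t}-1$. One should also note that strictness holds for every $x>0$, not merely as $x\to\infty$, because both $g$ and $h$ vanish only at $t=0$.
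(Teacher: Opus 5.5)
Your proposal is correct and follows essentially the same route as the paper: it reduces $T'(x)>0$, via \eqref{eq:typeII_Tprime}, to $u-1+e^{-u}>0$, and $\Phi'(x)<0$ to the sign of the numerator $e^{-u}(u+1)-1$, i.e.\ to $e^{u}>1+u$, with $u=\beta x$. The only differences are cosmetic: you rescale via $\Phi(x)=\beta\varphi(\beta x)$, and you prove the two elementary inequalities by a monotonicity argument where the paper simply cites them.
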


\begin{proof}
Consider the elementary inequality
$u-1+e^{-u}>0$, valid for $u > 0$.
Taking 
\(
u=\beta x
\)
and substituting in \eqref{eq:typeII_Tprime}  shows that $T'(x)>0$.

We next show that $\Phi'(x)<0$. Upon differentiating $\Phi$ from 
\eqref{eq:typeII_Phi}, we obtain
\begin{equation}
\Phi'(x)
=
\frac{
\beta^2
\left[
e^{-\beta x}(\beta x+1)-1
\right]
}{
(
\beta x-1+e^{-\beta x}
)^2}.
\label{eq:typeII_Phiprime}
\end{equation}

\noindent Consider the elementary inequality $e^u>u+1$, valid for $u>0$, in the equivalent form $ e^{-u}(u+1)<1.$ Taking 
$
u =\beta x
$
and applying it to the expression above
shows that $\Phi'(x)<0$  for all $x>0$.
\end{proof}

\begin{thm}[Uniqueness of the Type II critical point]\label{Unimodal}
The evolutionary entropy $H(x)$ 
associated with Type II  survivorship possesses at
most one critical point on
\(
[1,\infty).
\)
Moreover, if a critical point of $H$ exists, it  is a strict local maximum.
\end{thm}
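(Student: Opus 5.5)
Since $H(x)=F(x)+\beta$, the critical points of $H$ on $[1,\infty)$ coincide with those of $F$, and it suffices to work with $F$. By Lemma~\ref{lem_derivative_F}, $F'(x)=\frac{T'(x)}{T(x)}[\Phi(x)-F(x)]$. We have $T(x)=A+m(x)\ge A\ge 1>0$ for $x\ge1$: indeed $m(x)\ge0$, being the mean of a distribution on $\{0,1,\dots\}$ at integer points, and for real $x$ this follows from $T(1)=A$ together with $T'>0$ from Lemma~\ref{lem_positivity}. Hence $T'/T>0$, and the sign of $F'$ equals the sign of $G(x):=\Phi(x)-F(x)$. So $x_0$ is a critical point if and only if $G(x_0)=0$, that is, $F(x_0)=\Phi(x_0)$.

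The key step is to show that $G$ changes sign only from positive to negative, and crosses zero at most once. At a zero $x_0$ of $G$ we have $F'(x_0)=0$, hence
\[
G'(x_0)=\Phi'(x_0)-F'(x_0)=\Phi'(x_0)<0
\]
by Lemma~\ref{lem_positivity}. Thus every zero of $G$ is a transversal down-crossing. Since $G$ is continuous (indeed $C^1$) on $[1,\infty)$, two consecutive zeros $x_0<x_1$ would both be down-crossings. Then $G<0$ immediately to the right of $x_0$, and $G>0$ immediately to the left of $x_1$, so by the intermediate value theorem there would be a further zero strictly between them. Taking $x_1$ to be the smallest zero greater than $x_0$ gives a contradiction. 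This infimum is attained, because the zeros are isolated (each is transversal) and the zero set is closed. Hence $G$ has at most one zero, and $H$ has at most one critical point.

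For the maximality claim, at the unique zero $x_0$ the sign of $F'$ changes from $+$ to $-$, since $G$ crosses downward and $T'/T>0$. Therefore $x_0$ is a strict local maximum of $F$, and hence of $H$. One can also compute directly
\[
F''(x_0)=\frac{T'(x_0)}{T(x_0)}\,\Phi'(x_0)<0,
\]
because the derivative of the factor $T'/T$ multiplies $G(x_0)=0$ and drops out.

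The main obstacle is making the crossing argument rigorous. This requires checking the smoothness of $\Phi$ and $F$ on $[1,\infty)$; the denominator $\beta x-1+e^{-\beta x}$ is positive for $x>0$ by the inequality used in Lemma~\ref{lem_positivity}. It also requires handling the endpoint $x=1$, where a zero of $G$ is still covered by the same transversality argument. I would first try the clean argument "every zero is a strict down-crossing, hence there is at most one." If the isolation of zeros needed more care, I would instead note that $G'<0$ in a neighbourhood of each zero, which directly forbids accumulation of zeros.
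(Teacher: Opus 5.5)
Your proposal is correct and follows essentially the same route as the paper. You reduce critical points of $H$ to zeros of $G=\Phi-F$ using Lemma~\ref{lem_derivative_F} and $T'/T>0$, then use $G'(x_0)=\Phi'(x_0)<0$ at every zero to show each zero is a strict down-crossing, which forces both uniqueness and a strict local maximum. Your extra steps (choosing consecutive zeros via isolation and closedness of the zero set, and the check $F''(x_0)=\tfrac{T'(x_0)}{T(x_0)}\Phi'(x_0)<0$) just make the paper's continuity argument more explicit.
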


\begin{proof}
It will be useful to recall, from \eqref{eq:typeII_F}, the expression for the entropy $H$
\[ H(x) = \beta+F(x) \]
and, from  Lemma \ref{lem_derivative_F}, the differential identity
\[
F'(x)
=
\frac{T'(x)}{T(x)}
[\Phi(x)-F(x)]. \]

\noindent Let the term in square brackets  define the function  $G(x)=\Phi(x) - F(x).$
\noindent The differential identity for \(F\) is then written in terms of $G$ as
\[
F'(x)
=
\frac{T'(x)}{T(x)}G(x).
\]
The critical points of \(H\) satisfy
$
H'(x)=F'(x)=0.
$
Note that, since $T(x) > 0$ by definition and  $T'(x) > 0$ by Lemma \ref{lem_positivity}, the critical points of $H$ are precisely the zeros of \(G\). 

Differentiating $G$, we obtain
\begin{equation}
G'(x)
=
\Phi'(x)-\frac{T'(x)}{T(x)}G(x).
\label{eq:typeII_Gprime}
\end{equation}
By Lemma~\ref{lem_positivity}, $T'(x)>0$ and $\Phi'(x)<0$. Hence, at any zero $x^\ast$ of $G$,
\[
G'(x^\ast)=\Phi'(x^\ast)<0.
\]
Thus every zero of $G$ is simple and is crossed from positive values to negative values. Since
\[
H'(x)=F'(x)=\frac{T'(x)}{T(x)}G(x)
\]
and $T'(x)/T(x)>0$, the sign of $H'$ is the sign of $G$. Consequently every critical point of $H$ is a strict local maximum.

It remains to exclude two such critical points. Suppose that $x_1<x_2$ were two zeros of $G$. Since $G$ crosses from positive to negative at $x_1$, we have $G<0$ immediately to the right of $x_1$. In order to cross again from positive to negative at $x_2$, continuity would force $G$ to have changed from negative to positive at some intermediate zero, contradicting the fact just proved that every zero is crossed from positive to negative. Hence $G$ has at most one zero, and $H$ has at most one critical point. If that point exists, it is a strict local maximum.

%
%
%
\end{proof}

The previous theorem establishes uniqueness of the entropy maximum whenever such a maximum exists. It does not,  however, ensure the existence of a maximum, which in general may not exist. Our next results determine the boundary separating the regimes of existence of a maximum and 
monotone growth of the entropy.

\begin{lem}[Initial growth]
The evolutionary entropy satisfies $H'(1)>0.$
\label{eq:typeII_initial_growth}
\end{lem}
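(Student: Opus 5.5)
The plan is to use the factorization already established in the proof of Theorem~\ref{Unimodal}, namely
\[
H'(x)=F'(x)=\frac{T'(x)}{T(x)}\,G(x),\qquad G(x)=\Phi(x)-F(x),
\]
and reduce the claim to checking the sign of $G(1)$. By Lemma~\ref{lem_positivity}, $T'(1)>0$, and $T(1)>0$, so $H'(1)$ has the sign of $G(1)$. It therefore suffices to compute $F(1)$ and $\Phi(1)$ explicitly.

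\textbf{Evaluating $F(1)$.} At $x=1$ we have $e^{-\beta}=q$. Hence \eqref{eq:typeII_Zxi} gives $Z(1)=(1-q)/(1-q)=1$, so $\log Z(1)=0$. Similarly, \eqref{eq:typeII_mxi} gives
\[
m(1)=\frac{q}{1-q}-\frac{q}{1-q}=0,
\]
so $T(1)=A$. This is consistent with the fact that for $n=1$ the reproductive distribution is concentrated at the single class $A$. Substituting into \eqref{eq:typeII_F},
\[
F(1)=\frac{0-\beta A}{A}=-\beta .
\]
This matches $H(1)=0$, since $H=\beta+F$.

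\textbf{Evaluating $\Phi(1)$ and concluding.} From \eqref{eq:typeII_Phi},
\[
\Phi(1)=\frac{\beta(1-q)}{\beta-1+q}.
\]
The denominator is $u-1+e^{-u}$ with $u=\beta>0$, which is positive by the same elementary inequality used in Lemma~\ref{lem_positivity}. The numerator is positive because $0<q<1$. So $\Phi(1)>0$, and therefore
\[
G(1)=\Phi(1)+\beta>0 .
\]
Combined with $T'(1)/T(1)>0$, this gives $H'(1)>0$.

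There is no real obstacle here. The only points needing care are that $x=1$ lies in the domain where Lemma~\ref{lem_positivity} applies (it holds for all $x>0$), and the cancellations $Z(1)=1$ and $m(1)=0$, which follow directly from $e^{-\beta}=q$.
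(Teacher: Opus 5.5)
Your proposal is correct and follows the paper's proof essentially line by line. You compute $Z(1)=1$, $m(1)=0$, $T(1)=A$ to get $F(1)=-\beta$, show $\Phi(1)=\beta(1-q)/(\beta-1+q)>0$, and conclude from $F'=\frac{T'}{T}(\Phi-F)$ with $T(1),T'(1)>0$. The only cosmetic difference is how you justify $\beta-1+q>0$: you use $u-1+e^{-u}>0$ at $u=\beta$, whereas the paper uses the integral comparison $-\log q=\int_q^1 dt/t>\int_q^1 dt=1-q$.
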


%
%
\begin{proof}

At \(x=1\), equations \eqref{eq:typeII_Zxi},
\eqref{eq:typeII_mxi}
and \
\eqref{eq:typeII_Txi} give
	\(Z(1)=1\), \(m(1)=0\), and \(T(1)=A\). Hence
	\(F(1)=-\beta\). Moreover,
	\begin{equation}
		\Phi(1)
		=
		\frac{\beta(1-q)}{\beta-1+q}.
		\label{eq:typeII_Phi1}
	\end{equation}
	Since \(q\in(0,1)\), we have \(\beta=-\log q>0\) and
	\(1-q>0\). It remains only to check the denominator. We have
	$
	\beta-1+q
	=
	-\log q-1+q.
	$
	But
	\[
	-\log q
	=
	\int_q^1 \frac{dt}{t}
	>
	\int_q^1 dt
	=
	1-q,
	\]
	and therefore \(\beta-1+q>0\). Thus both numerator and
	denominator in \eqref{eq:typeII_Phi1} are positive, and so
	\(\Phi(1)>0\).
	
	Consequently,
	\[
	\Phi(1)-F(1)=\Phi(1)+\beta>0.
	\]
	Since \(T(1)>0\) and \(T'(1)>0\), the differential identity
	\eqref{eq:typeII_Fprime} gives \(F'(1)>0\). Finally, because
	\(H'(x)=F'(x)\), we conclude that \(H'(1)>0\).
\end{proof}

We now analyse the asymptotic regime $x \to \infty$.

\begin{lem}[Asymptotic behaviour of entropy]
\label{lem:typeII_Hinfty}
As
\(
x\to\infty,
\)
\begin{equation*}
H(x)
\longrightarrow
H_\infty
=
-
\frac{
q\log q+(1-q)\log(1-q)
}{
A(1-q)+q
}.
\label{eq:typeII_Hinfty}
\end{equation*}
\end{lem}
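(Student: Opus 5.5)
The plan is to pass to the limit directly in the closed-form natural extension \eqref{eq:typeII_Hxi}, treating numerator and denominator separately. Since $0<q<1$, we have $\beta=-\log q>0$, so $e^{-\beta x}\to0$ as $x\to\infty$. From \eqref{eq:typeII_Zxi},
\[
Z(x)\to\frac{1}{1-q},\qquad \log Z(x)\to-\log(1-q).
\]

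For the mean displacement \eqref{eq:typeII_mxi}, the only term needing a check is $x e^{-\beta x}/(1-e^{-\beta x})$. Here $x e^{-\beta x}\to0$, because exponential decay dominates linear growth, and the denominator tends to $1$. Hence
\[
m(x)\to\frac{q}{1-q},\qquad T(x)=A+m(x)\to A+\frac{q}{1-q}=\frac{A(1-q)+q}{1-q},
\]
and the limit of $T$ is strictly positive.

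By continuity of the quotient,
\[
H(x)\to\frac{\beta\,\dfrac{q}{1-q}-\log(1-q)}{\dfrac{A(1-q)+q}{1-q}}.
\]
Multiplying numerator and denominator by $1-q$ and substituting $\beta=-\log q$ gives the numerator $-q\log q-(1-q)\log(1-q)$. This is exactly the claimed $H_\infty$.

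As a consistency check, this value agrees with \eqref{eq_typeII_open_entropy_q_compact} from the infinite geometric proposition. So the natural extension converges to the entropy of the untruncated geometric distribution, which is also the finite-first-moment regime of Remark~\ref{thm:first-moment-dichotomy}.

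There is no real obstacle. The only point that needs care is the vanishing of $x e^{-\beta x}$, which follows, for example, from $e^{\beta x}\ge \tfrac12\beta^2x^2$.
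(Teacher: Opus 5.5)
Your proposal is correct and follows essentially the same route as the paper: pass to the limit in the closed forms of $Z(x)$, $m(x)$ and $T(x)$, substitute into $H(x)$, then multiply numerator and denominator by $1-q$ and use $\beta=-\log q$. You also justify explicitly that $x e^{-\beta x}\to 0$ and that the limit of $T$ is strictly positive, two details the paper leaves implicit.
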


\begin{proof}
From
\eqref{eq:typeII_Zxi},
\eqref{eq:typeII_mxi},
and
\eqref{eq:typeII_Txi}, respectively, 
it follows that
\begin{align*}
Z(x)\to Z_\infty &= \frac1{1-q}, \\
m(x)\to m_\infty &= \frac{q}{1-q}, \\
T(x)\to T_\infty &= A+\frac{q}{1-q}.
\end{align*}

\noindent Substituting these limits into
\eqref{eq:typeII_Hxi},
we obtain
\[
H_\infty
=
\frac{
\beta\frac{q}{1-q}
+\log\frac1{1-q}
}{
A+\frac{q}{1-q}
}.
\]

\noindent Recalling that
$
\beta=-\log q
$
and multiplying numerator and denominator by
\(
1-q
\)
proves the result.
\end{proof}

We next show that the asymptotic behaviour of the derivative of $H$ is determined by the limit $F_\infty$.

\begin{lem}[Asymptotic sign criterion]
\label{eq:typeII_negative_condition}
Let
$
F_\infty=\lim_{x\to\infty}F(x).
$
Then
\begin{equation*}
F_\infty<0
\quad\Longleftrightarrow\quad
q^A<1-q,
\end{equation*}
\begin{equation*}
F_\infty>0
\quad\Longleftrightarrow\quad
q^A>1-q.
\end{equation*}
\end{lem}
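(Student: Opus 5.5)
We compute $F_\infty$ explicitly and reduce its sign to the comparison between $q^A$ and $1-q$. By \eqref{eq:typeII_F}, $F(x)=(\log Z(x)-\beta A)/T(x)$.

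The limits $Z(x)\to 1/(1-q)$ and $T(x)\to A+q/(1-q)$ were already obtained in the proof of Lemma~\ref{lem:typeII_Hinfty}. They give
\[
F_\infty=\frac{-\log(1-q)+A\log q}{A+\frac{q}{1-q}}
=\frac{\log\!\left(\dfrac{q^A}{1-q}\right)}{T_\infty}.
\]
Equivalently, $F_\infty=H_\infty-\beta$, where $H_\infty$ is the value from Lemma~\ref{lem:typeII_Hinfty}.

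The denominator satisfies $T_\infty=A+q/(1-q)>0$, since $A\ge1$ and $0<q<1$. Hence the sign of $F_\infty$ is the sign of $\log\bigl(q^A/(1-q)\bigr)$. Because the logarithm is strictly increasing, this is negative exactly when $q^A/(1-q)<1$, that is, when $q^A<1-q$. It is positive exactly when $q^A>1-q$. Both equivalences follow at once, and the remaining case $F_\infty=0$ corresponds to $q^A=1-q$.

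No step is a real obstacle. The only point needing care is the algebra that places $\log Z_\infty$ and $-\beta A$ over a single logarithm, using $\beta=-\log q$, so that $-\beta A=A\log q=\log q^A$. Justifying the limits themselves is routine: $e^{-\beta x}\to0$ and $xe^{-\beta x}\to0$ as $x\to\infty$, because $\beta>0$.
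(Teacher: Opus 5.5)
Your proposal is correct and is essentially the paper's proof. Both compute $F_\infty=\bigl(-\log(1-q)+A\log q\bigr)/\bigl(A+\frac{q}{1-q}\bigr)$ from the limits obtained in Lemma~\ref{lem:typeII_Hinfty}, observe that the denominator is positive, and reduce the sign of the numerator to the comparison of $q^A$ with $1-q$.
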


\begin{proof}
We first show that
\begin{equation}
F_\infty
=
\frac{
-\log(1-q)+A\log q
}{
A+\frac{q}{1-q}
}.
\label{eq:typeII_Finfty}
\end{equation}

Recalling from \eqref{eq:typeII_F} that
$
F(x)
=
\frac{
\log Z(x)-\beta A
}{
T(x)
}
$
and introducing in this expression the limits $Z_\infty$, $m_\infty$ and $T_\infty$ obtained in the proof of
Lemma~\ref{lem:typeII_Hinfty},
we obtain
\[
F_\infty
=
\frac{
-\log(1-q)-\beta A
}{
A+\frac{q}{1-q}
},
\]
whereupon substituting
$
\beta=-\log q
$
yields
\eqref{eq:typeII_Finfty}.

Since the denominator in
\eqref{eq:typeII_Finfty}
is positive, the sign of
\(
F_\infty
\)
is determined by the sign of the numerator 
\(
-\log(1-q)+A\log q.
\)
Thus
$
F_\infty<0
$
if and only if
$
A\log q<\log(1-q),
$
which is equivalent to
$
q^A<1-q.
$
Similarly,
$
F_\infty>0
$
is equivalent to
$
q^A>1-q,
$
completing the proof.
\end{proof}

We now provide a complete characterization of the transition between monotonic entropy growth and interior entropy maximization.

\begin{thm}[Critical threshold for entropy regimes]
\label{thm_boundary}

The following characterization holds.

\begin{enumerate}

\item
The equation
$
q^A = 1-q
$
has a unique solution
$
q_c(A)\in(0,1).
$

\item
If
$
q < q_c(A),
$
then
$
H(x)
$
is strictly increasing in
$
[1,\infty).
$

\item
If
$
q > q_c(A),
$
then
$
H(x)
$
has a unique interior maximum.

\end{enumerate}

\end{thm}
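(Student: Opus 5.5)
We will prove the three parts in order. Part (1) is elementary, and parts (2)--(3) follow from a sign analysis of the function $G(x)=\Phi(x)-F(x)$ introduced in the proof of Theorem~\ref{Unimodal}. For (1) we set $g(q)=q^A+q-1$ on $[0,1]$. Since $A\ge1$, $g$ is continuous and strictly increasing, with $g'(q)=Aq^{A-1}+1>0$. Moreover $g(0)=-1<0$ and $g(1)=1>0$. Hence there is exactly one zero $q_c(A)\in(0,1)$. By the monotonicity of $g$ we also get
\[
q<q_c(A)\iff q^A<1-q,\qquad q>q_c(A)\iff q^A>1-q .
\]
So, by Lemma~\ref{eq:typeII_negative_condition}, $q<q_c(A)$ is equivalent to $F_\infty<0$, and $q>q_c(A)$ is equivalent to $F_\infty>0$.

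Next we study $G$ at the two ends of $[1,\infty)$. Recall that $H'(x)=\frac{T'(x)}{T(x)}G(x)$ with $T'/T>0$, so $H'$ has the sign of $G$. At the left endpoint, the proof of Lemma~\ref{eq:typeII_initial_growth} gives $G(1)=\Phi(1)+\beta>0$. At the right endpoint we compute the limit of $\Phi$ from \eqref{eq:typeII_Phi}. As $x\to\infty$ the numerator tends to $\beta$ and the denominator behaves like $\beta x$, so $\Phi(x)\sim 1/x\to0$. Since $F(x)\to F_\infty$, we conclude that $G(x)\to -F_\infty$. This limit step is where care is needed, namely checking that $\Phi$ genuinely vanishes and that $F$ has a finite limit. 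Both facts follow directly from the closed forms \eqref{eq:typeII_Zxi}--\eqref{eq:typeII_Txi}, because $xe^{-\beta x}\to0$.

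We then use the key structural fact from the proof of Theorem~\ref{Unimodal}: every zero $x^\ast$ of $G$ satisfies $G'(x^\ast)=\Phi'(x^\ast)<0$. We first upgrade this to the statement that once $G$ vanishes, it stays negative afterwards. Suppose $G(x^\ast)=0$ and $G(y)\ge0$ for some $y>x^\ast$. Let $x_1=\inf\{y>x^\ast: G(y)\ge 0\}$. Since $G<0$ just to the right of $x^\ast$, we have $x_1>x^\ast$ and $G(x_1)=0$. Moreover $G<0$ on $(x^\ast,x_1)$, which forces $G'(x_1)\ge0$, a contradiction.

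In the case $q<q_c(A)$ we have $F_\infty<0$, so $\lim_{x\to\infty}G(x)=-F_\infty>0$. If $G$ had a zero, then $G$ would remain negative forever after it, which contradicts the positive limit. Hence $G>0$ on $[1,\infty)$, so $H'>0$ there and $H$ is strictly increasing. This proves (2).

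In the case $q>q_c(A)$ we have $G(1)>0$ and $\lim_{x\to\infty}G(x)=-F_\infty<0$. By the intermediate value theorem $G$ has a zero $x^\ast>1$, and by Theorem~\ref{Unimodal} this zero is unique. On $[1,x^\ast)$ we have $G>0$, and on $(x^\ast,\infty)$ we have $G<0$ by the persistence argument above. Therefore $H$ increases strictly up to $x^\ast$ and decreases strictly after it, so $x^\ast$ is the unique interior (and global) maximum. This proves (3). The borderline case $q=q_c(A)$ is not asserted and is left aside.
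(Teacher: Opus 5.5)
Your proof is correct and follows essentially the same argument as the paper. Both use the monotonicity of $q^A+q-1$, and both run the sign analysis of $G=\Phi-F$ from $G(1)=\Phi(1)+\beta>0$, $\Phi(x)\to0$, the sign of $F_\infty$ given by the asymptotic sign criterion, and the fact that every zero of $G$ satisfies $G'=\Phi'<0$. Your infimum-based persistence argument is a cleaner version of the paper's step that the derivative ``could become positive again only after another critical point,'' and it also makes explicit that the maximizer in part (3) is global, not merely a unique strict local maximum.
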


\begin{proof}
Define
\[
f_A(q)=q^A+q-1.
\]
Then $f_A(0)=-1$, $f_A(1)=1$, and
\[
f_A'(q)=Aq^{A-1}+1>0,
\]
so $f_A$ is strictly increasing on $(0,1)$ and has a unique zero $q_c(A)\in(0,1)$.

Assume first that $q<q_c(A)$, equivalently $q^A<1-q$. By Lemma~\ref{eq:typeII_negative_condition}, $F_\infty<0$. Since $\Phi(x)\to0$ as $x\to\infty$, the identity
\[
F'(x)=\frac{T'(x)}{T(x)}[\Phi(x)-F(x)]
\]
implies that $F'(x)>0$ for all sufficiently large $x$. Lemma~\ref{eq:typeII_initial_growth} gives $F'(1)>0$. If a critical point existed, Theorem~\ref{Unimodal} would make it a strict local maximum, so the derivative would change from positive to negative there. It could then become positive again at large $x$ only after another critical point, contradicting Theorem~\ref{Unimodal}. Hence no critical point exists and $H$ is strictly increasing on $[1,\infty)$.

Assume next that $q>q_c(A)$, equivalently $q^A>1-q$. By Lemma~\ref{eq:typeII_negative_condition}, $F_\infty>0$. Since $\Phi(x)\to0$, the same differential identity yields $F'(x)<0$ for all sufficiently large $x$, whereas Lemma~\ref{eq:typeII_initial_growth} gives $F'(1)>0$. By continuity, a critical point exists. Theorem~\ref{Unimodal} implies that it is unique and is a strict local maximum. This proves the characterization.
%
%
%
%
%
%
%
%
%
%
%
%
\end{proof}

Observe that the critical case
$
q=q_c(A)
$
corresponds precisely to the boundary case
$
F_\infty=0,
$
which separates the distinct asymptotic behaviours.

Theorem \ref{thm_boundary} shows that the geometric reproductive regime exhibits a sharp transition determined by the unique solution
$
q_c(A)
$
of the polynomial equation
\[
q^A+q-1=0.
\]

For
$
q<q_c(A),
$
the entropy grows monotonically with reproductive lifespan, whereas for
$
q>q_c(A)
$
the entropy possesses a unique finite maximizer.

The geometric family thus constitutes the critical exponential boundary separating monotone entropy growth from finite entropy maximization.

For the original integer-valued Leslie problem, the continuous maximizer supplied by Theorem~\ref{thm_boundary} is used only to identify the candidate integer endpoints. In the finite-maximum regime, one evaluates $H(D)$ at the adjacent integers around the continuous critical point and selects the larger value. In the monotone or asymptotic regime, no finite endpoint is selected by entropy maximization alone; the biologically meaningful summaries are then the reproductive quantiles $B_{50},B_{90},B_{95}$ and $B_{99}$ of the limiting reproductive distribution.


\subsection{On Type III survivorship profiles}

In principle, one may also consider survivorship functions with
strongly decreasing hazards during the juvenile phase, corresponding
to the classical Type III demographic regime.

However, from the perspective of reproductive entropy, the relevant
object is not the juvenile survivorship pattern itself but rather the
structure of the reproductive distribution after maturity.

Indeed, once the reproductive age is reached, many Type III
survivorship profiles become well approximated by either nearly
constant adult hazards or very low adult mortality. Consequently,
their reproductive distributions are often close to the geometric and
rectangular regimes analysed in the previous sections.

For this reason, and because the geometric regime already provides a
natural analytical description of the post-maturity dynamics of a wide
range of populations, a separate detailed treatment of Type III
survivorship functions is not required for the purposes of the present
work.


\FloatBarrier

\section{Validation of the theory}\label{sec_Validation}

The preceding sections developed a theory linking evolutionary
entropy, reproductive distributions, and the geometric structures
generated by open-group demographic models. We now test these
predictions against demographic matrices from a large collection of
animal species, examining observed reproductive windows,
entropy-optimal classes, 
distributional agreement, and the critical-threshold predictions of
Theorem~\ref{thm_boundary}. We also compare entropy-derived
quantities with independent life-history variables, including
phylogenetically corrected analyses. Complete species-level results
are reported in Tables~\ref{tab:SpeciesSummary1}
and~\ref{tab:SpeciesSummary2} of the Appendix.

\subsection{Data set and species selection}

Demographic matrices were extracted from the COMADRE Animal Matrix
Database~\cite{COMADRE}. Since the theory is formulated for
age-structured reproductive windows, we focused on matrices
representable as Leslie-type models, and retained only female
matrices with annual projection intervals to ensure comparability
across species.

The original database contained 3548 matrices, of which 1874 were
Leslie open-group, 652 pure Leslie, 844 of other types, 175 lacked
fertility information, and 3 had reproductive classes unreachable
from the initial stage. We further restricted attention to matrices
with $0.75 \leq \lambda \leq 1.25$, where $\lambda$ is the dominant
eigenvalue, excluding populations far from demographic equilibrium,
and discarded matrices with evolutionary entropy $H<0.01$, for which
reproductive contributions are concentrated in essentially a single
reproductive class.

Many species are represented in COMADRE by multiple matrices,
corresponding to different populations, years, or environmental
conditions. To avoid pseudoreplication, each species contributed a
single representative matrix, selected as the one with the largest
number of age classes; ties were broken by proximity of
$\lambda$ to unity.

Four species were excluded from the final analyses. One fossil taxon (Hystrix refossa) was removed because it is not comparable to extant populations. One redundant taxonomic entry (Chen caerulescens) was merged with Anser caerulescens. Two additional species (Dryocopus pileatus and Alouatta seniculus) were excluded because an unambiguous correspondence with the phylogenetic tree could not be established.

All entropy-based predictions were computed exclusively from the
demographic information contained in the matrices. Independent
life-history variables --- compiled from
AnAge~\cite{Tacutu2018}, Avibase~\cite{Lepage2024},
FishBase~\cite{FroesePauly2024}, EURING~\cite{EURING2024}, the Bird
Banding Laboratory~\cite{BBL2024}, and Animal Diversity
Web~\cite{ADW2024} --- were introduced only after the entropy
calculations had been completed, and serve solely as external
validation. Phylogenetically corrected analyses were performed on the
subset of 130 species for which an unambiguous correspondence with
the phylogenetic tree could be established. Matrix population models
are demographic reconstructions subject to sampling, stochasticity,
and estimation uncertainty; the analysis should therefore be read as
a large-scale test of whether entropy-based regularities emerge
across independently constructed matrices, rather than as a
species-by-species census. The restriction to animal annual
matrices is discussed further in subsection \ref{sub_limitations}.

The prediction protocol was fixed before comparisons with external life-history variables were made. For each matrix we first computed the reproductive masses, the cumulative reproductive distribution, the entropy curve, and the corresponding entropy-derived endpoints or quantiles. No observed longevity, age at last reproduction, mean reproductive age, or external estimate of generation time was used in selecting the entropy-optimal class. This separation is important: it makes the comparisons below genuine external biological validations of the entropy construction rather than fitted regressions to life-history data.

\subsection{Entropy-optimal reproductive classes}

The theory developed in sections \ref{sec_Leslie},
\ref{sec_exp_model} and
\ref{sec_entropy_maximization}
predicts that each reproductive window is associated with an
entropy-optimal class --- the point beyond which additional
reproduction yields only marginal gains in evolutionary entropy.

To test this, we constructed an entropy curve for each representative
matrix. Starting from the observed reproductive window, reproduction
was progressively extended using the demographic structure implied by
the corresponding Leslie matrix. For pure Leslie matrices, the observed
window was used directly. For open-group matrices, the extension was
performed using the terminal fertility and survival parameters already
present in the matrix, thereby preserving the demographic regime
observed at the end of the reproductive period.


The entropy-optimal class was defined as the earliest reproductive
extension for which entropy reached its maximum value or became
indistinguishable from its asymptotic limit. Biologically, this class
identifies an effective reproductive endpoint beyond which additional
reproduction contributes only marginally to evolutionary entropy.
Species may continue reproducing beyond the predicted class without
contradicting the theory, particularly in the geometric tails of
open-group populations described in \ref{sec_open_group}.

To compare species, we computed the ages $B_{50}, B_{90}, B_{95},
B_{99}$ at which 50\%, 90\%, 95\%, and 99\% of total reproductive
contribution has accumulated. Among these, $B_{50}$ --- the effective
centre of reproductive activity --- plays a central role in what
follows.

\begin{figure*}[ht]
\centering
\includegraphics[width=0.85\textwidth]{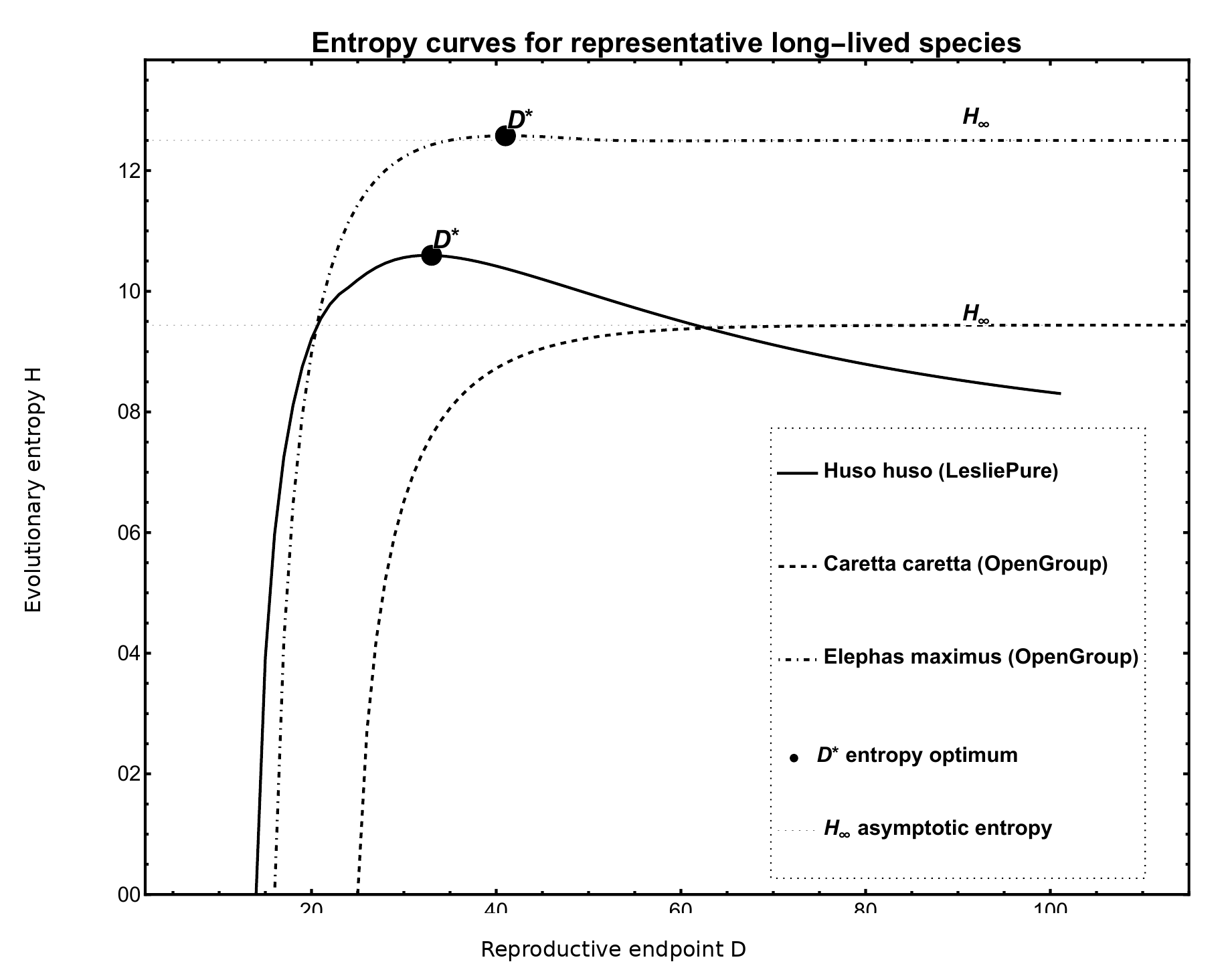}
\caption{
Entropy curves for three representative long-lived species. Filled
circles indicate entropy-optimal classes $D^{*}$ and horizontal dotted
lines indicate asymptotic entropy values $H_{\infty}$.
}
\label{fig:LongLivedEntropyCurves}
\end{figure*}

Figure~\ref{fig:LongLivedEntropyCurves} illustrates the diversity of
entropy trajectories observed among long-lived species. Although all
three species are characterized by extended lifespans and prolonged
reproductive periods, their entropy curves exhibit markedly different
behaviours.

The curve for \emph{Huso huso} reaches a well-defined finite maximum,
after which evolutionary entropy decreases as reproduction is extended.
This behaviour corresponds to the finite-optimum regime predicted by
the theory. In contrast, \emph{Caretta caretta} displays monotonic
convergence towards an asymptotic entropy value, illustrating the
asymptotic regime. The curve of \emph{Elephas maximus} occupies an
intermediate position, attaining a maximum that lies very close to the
corresponding asymptotic value.

These examples show that evolutionary entropy is not determined by
longevity alone: despite broadly similar life-history profiles, the
three species exhibit distinct entropy structures and therefore
different entropy-optimal reproductive classes.

Figure~\ref{fig:B50PercentileErrors} summarizes the distribution of
prediction errors for the reproductive median $B_{50}$ across all
species in the dataset. The concentration of points near the lower part
of the graph indicates that most species exhibit only small differences
between observed and predicted reproductive medians.

The results reveal a remarkable level of agreement. In approximately 61\% of species there is exact coincidence between observed and predicted values. Nearly 80\% of species are predicted within a single reproductive class, more than 85\% within two classes, and over 90\% within three classes. Only a small minority of species display larger deviations.

Both the individual examples in Figure~\ref{fig:LongLivedEntropyCurves} and the large-scale agreement illustrated in Figure~\ref{fig:B50PercentileErrors} suggest that entropy-optimal classes capture a genuine structural feature of reproductive organization across a wide range of animal taxa.

%
%



\subsection{Agreement between predicted and observed reproductive distributions}

To quantify agreement between theory and observations across the entire dataset, we examined two measures of similarity between distributions: the overlap coefficient \cite{Inman1989}
\begin{equation}
	\label{eq_overlap}
	O(p,q)
	=
	\sum_j \min(p_j,q_j),
\end{equation}
which measures the proportion of reproductive contribution shared
by the two distributions, and the Bhattacharyya coefficient
\cite{Pastore2019}
\begin{equation}
	\label{eq_Bhattacharyya}
	BC(p,q)
	=
	\sum_j \sqrt{p_j q_j},
\end{equation}
which measures overall distributional similarity. For both
coefficients, values close to $1$ indicate strong similarity.

For each species, the observed reproductive distribution was
calculated directly from the demographic matrix, whereas the predicted
distribution was obtained from the entropy-optimal class derived from
the theory. Distributional agreement was then measured using both the
overlap and Bhattacharyya coefficients.

The results are shown in
Figures~\ref{fig:OverlapDistribution} and
\ref{fig:BhattacharyyaDistribution}. In both figures, species are
ordered by percentile on the horizontal axis and the corresponding
similarity coefficient is shown on the vertical axis.

For the overlap coefficient, 50\% of species have similarity
0.95 or higher, 70\% have similarity 0.84 or higher, and only 20\%
have similarity below 0.60. For the Bhattacharyya coefficient the
results are even stronger: 50\% of species have similarity 0.97 or
higher, 70\% have similarity 0.92 or higher, and only 20\% have
similarity below 0.77. Moreover, fewer than 8\% of species exhibit
overlap coefficients below 0.50, and fewer than 5\% exhibit
Bhattacharyya coefficients below 0.50.

\begin{figure*}[htbp]
\centering
\includegraphics[width=0.70\textwidth]{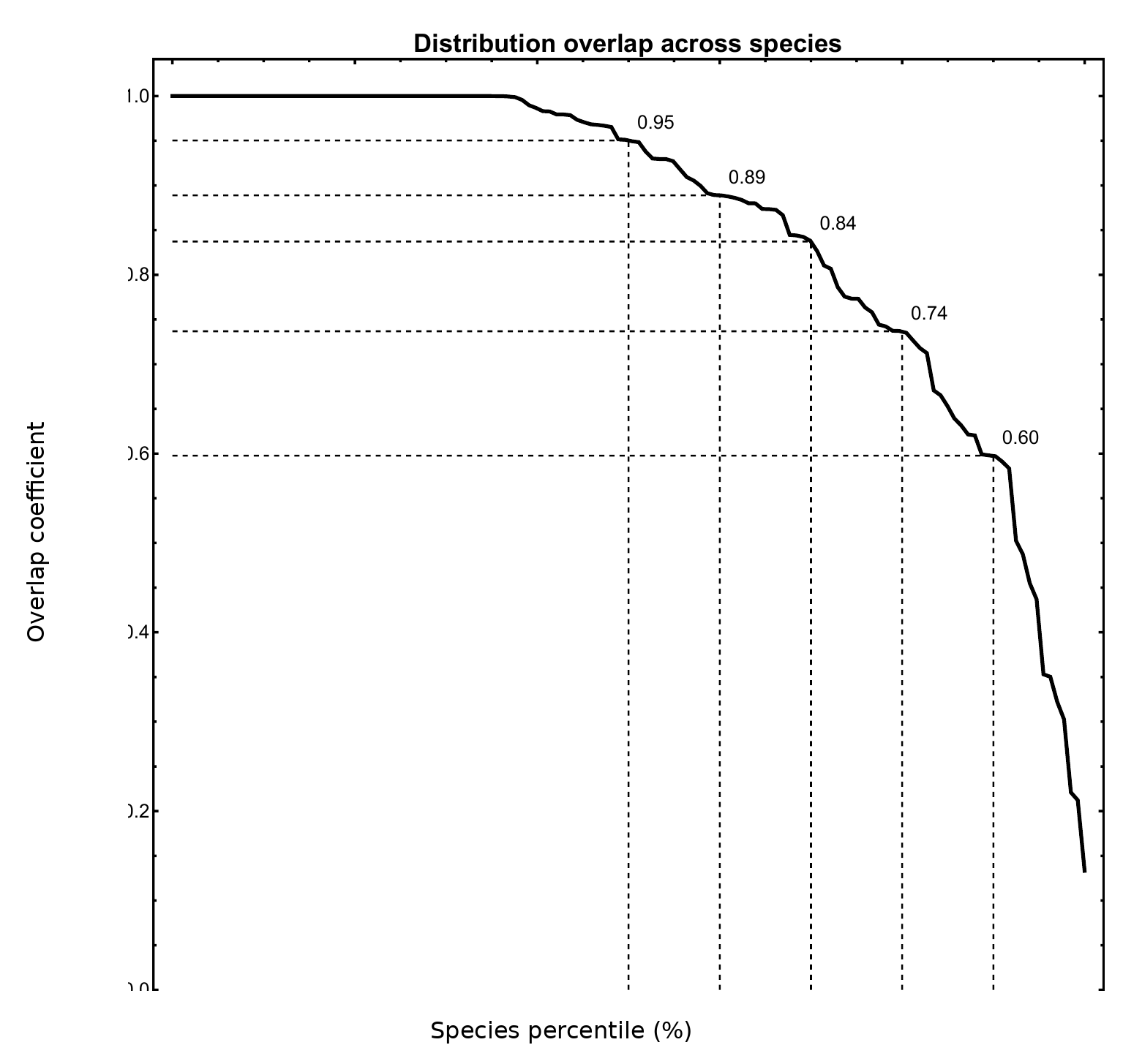}
\caption{Similarity between observed and
entropy-predicted reproductive distributions across all
analysed species measured by overlap coefficients.
}
\label{fig:OverlapDistribution}
\end{figure*}

\begin{figure*}[htbp]
\centering
\includegraphics[width=0.70\textwidth]{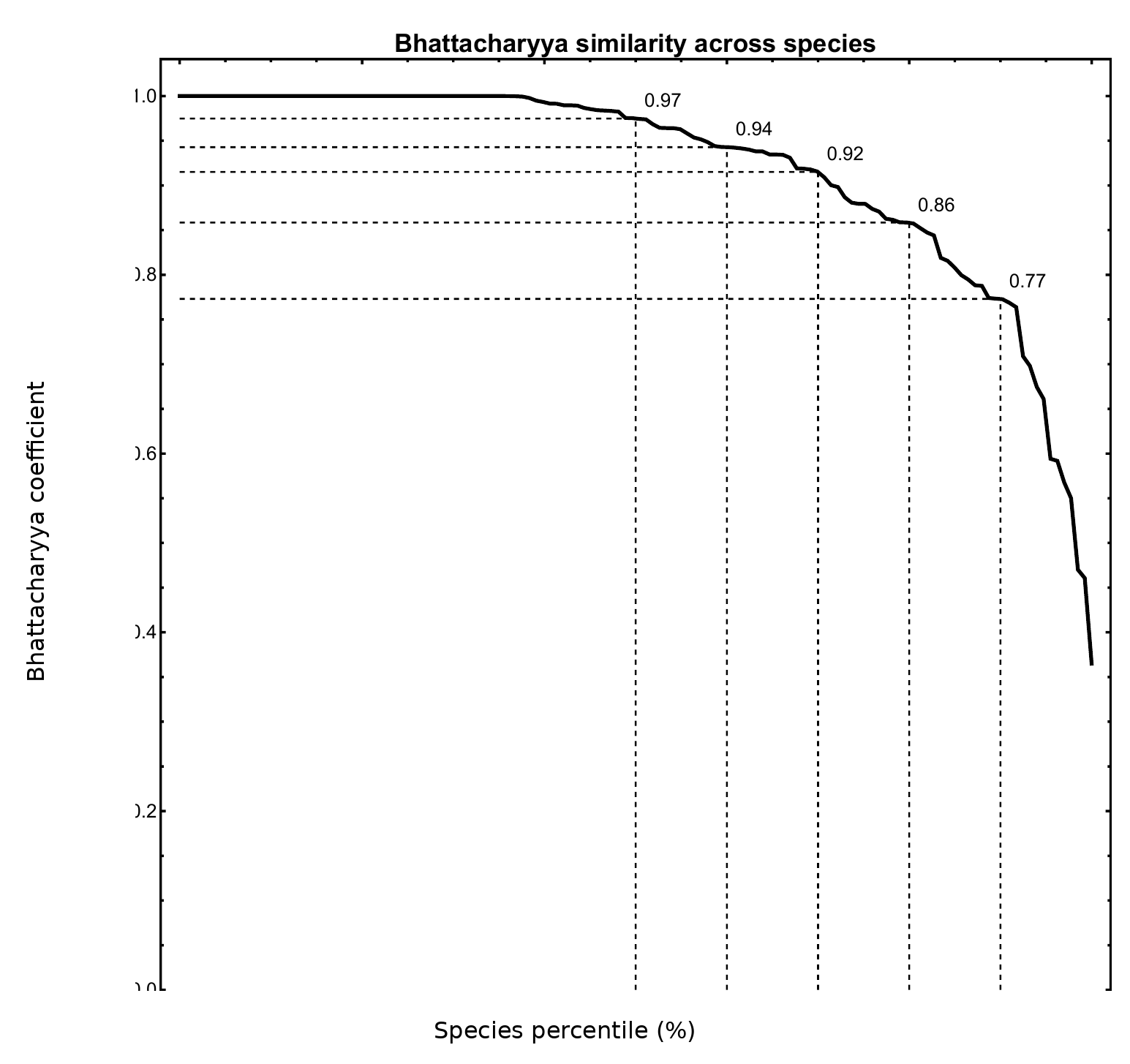}
\caption{
Similarity between observed and
entropy-predicted reproductive distributions across all
analysed species measured by Bhattacharyya coefficients.
}
\label{fig:BhattacharyyaDistribution}
\end{figure*}

Representative examples spanning a broad range of agreement levels are
shown in Figure~\ref{fig:DistributionExamples}. Whereas
\emph{Caretta caretta} exhibits virtually perfect agreement between
predicted and observed reproductive distributions,
\emph{Bonasa umbellus} and \emph{Elephas maximus} represent cases
of very high but imperfect correspondence. In contrast,
\emph{Huso huso} shows a substantial deviation between the
observed and entropy-predicted reproductive structures.

\begin{figure*}[htbp]
	\centering
	
	\begin{subfigure}[t]{0.48\textwidth}
		\centering
		\includegraphics[width=\textwidth]{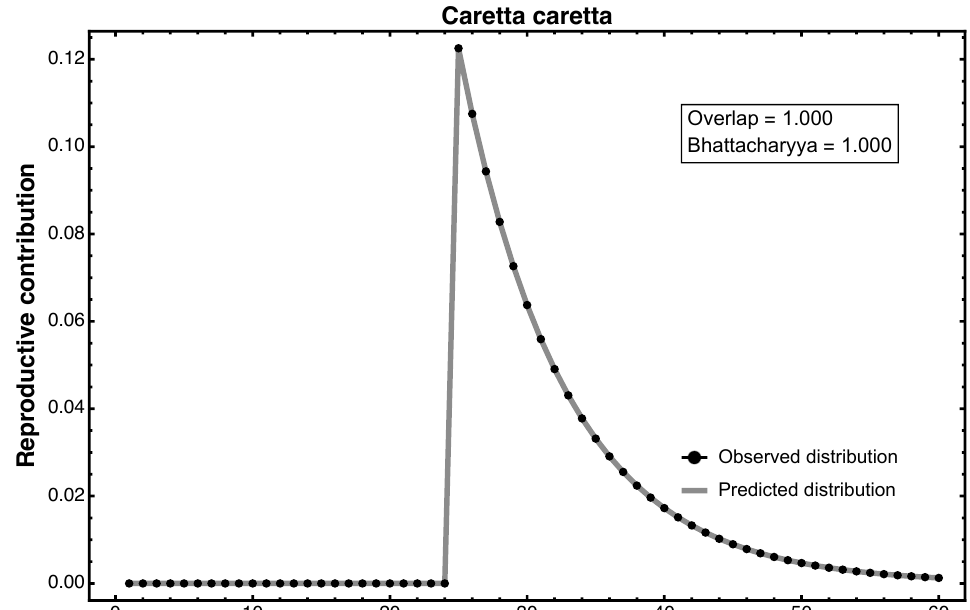}
		\caption{\emph{Caretta caretta}.}
	\end{subfigure}
	\hfill
	\begin{subfigure}[t]{0.48\textwidth}
		\centering
		\includegraphics[width=\textwidth]{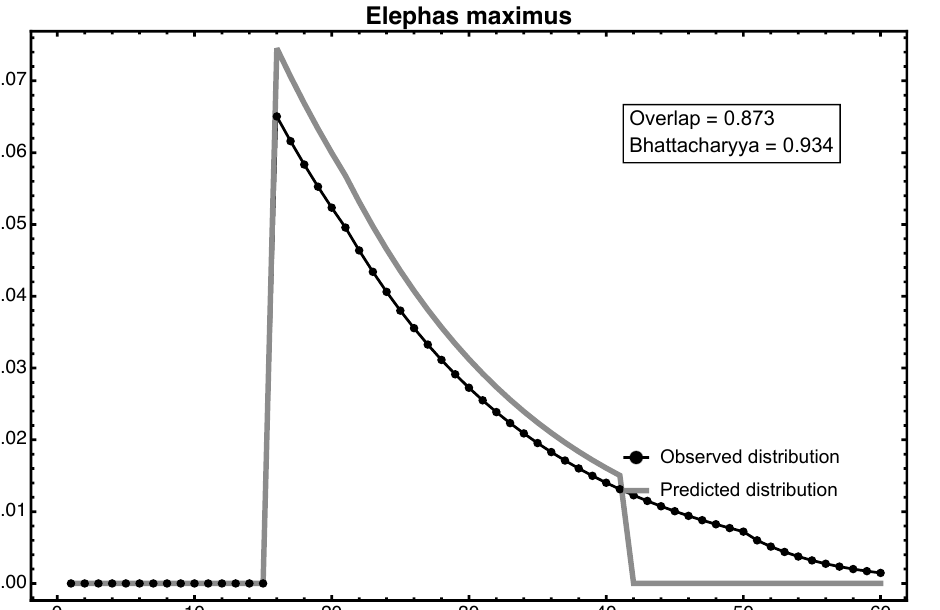}
		\caption{\emph{Elephas maximus}.}
	\end{subfigure}
	
	\vspace{0.5cm}
	
	\begin{subfigure}[t]{0.48\textwidth}
		\centering
		\includegraphics[width=\textwidth]{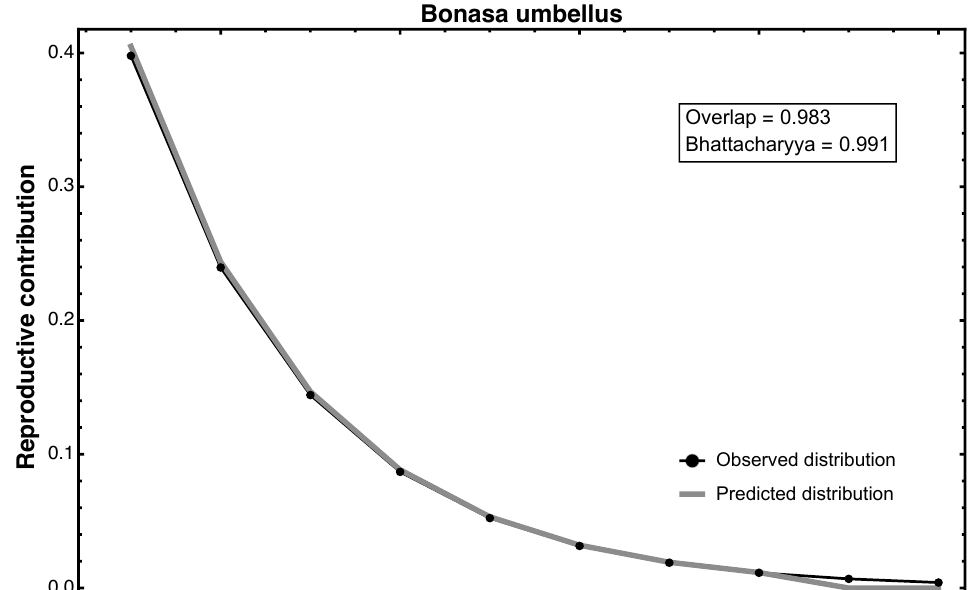}
		\caption{\emph{Bonasa umbellus}.}
	\end{subfigure}
	\hfill
	\begin{subfigure}[t]{0.48\textwidth}
		\centering
		\includegraphics[width=\textwidth]{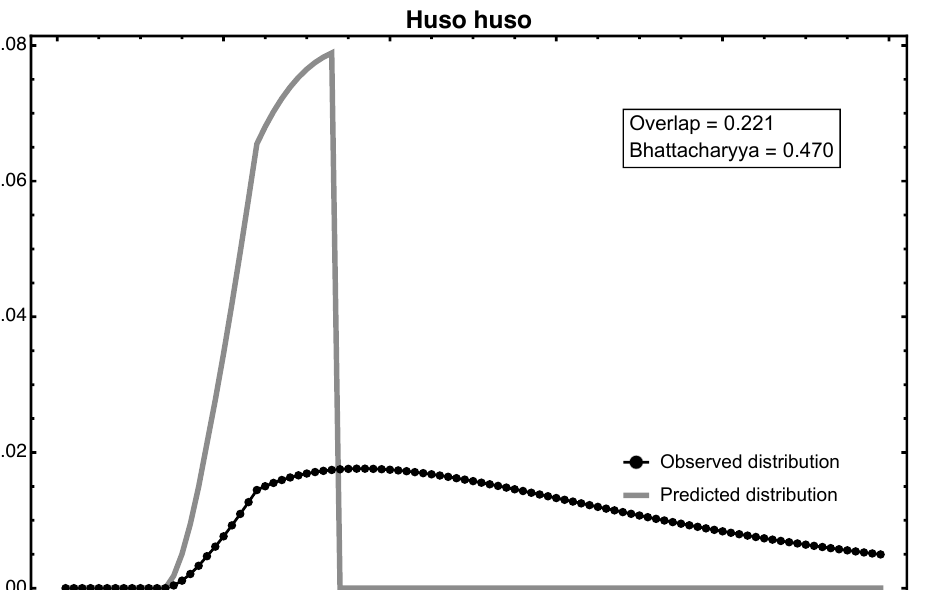}
		\caption{\emph{Huso huso}.}
	\end{subfigure}
	
	\caption{
		Observed and entropy-predicted reproductive distributions
		for four representative species spanning a broad range of agreement
		levels. Overlap and Bhattacharyya coefficient are reported within
		each panel.
	}
	\label{fig:DistributionExamples}
\end{figure*}

The overlap and Bhattacharyya analyses show that the vast majority of
species exhibit strong similarity between predicted and observed
reproductive distributions. This agreement is stronger than would be
expected from a comparison of isolated summary statistics alone. It
indicates that the entropy-optimal class retains information about the
shape of the reproductive contribution distribution, including the
location of its central mass and the rate at which reproductive
contributions decay across later age classes.

This point is important for the biological interpretation of the model.
An entropy-derived endpoint is not merely a fitted percentile: it is the
age class at which the full reproductive distribution, normalized by the
Euler--Lotka weights, gives the largest entropy per unit generation
time. Agreement at the distributional level therefore supports the view
that evolutionary entropy captures a structural feature of reproductive
organization rather than an accidental correlation with a single
life-history descriptor such as $B_{50}$.

These results provide distribution-level support for the
entropy-maximization principle. The extent to which entropy-derived
quantities predict independent life-history variables is examined in
the following sections.

Generation time, which depends on the entire reproductive
distribution rather than on a single reproductive class,
shows similarly strong agreement between predicted and observed
values (results not shown).

\subsection{Prediction of generation time}\label{subsec_generation_time_prediction}

Generation time depends on the entire reproductive contribution
distribution rather than on a single reproductive age class. It therefore
provides a stronger internal test of the theory than comparisons based
only on the reproductive median $B_{50}$. In the present data set, the
entropy-derived reproductive distribution gives close agreement with the
observed generation-time structure. Approximately $38\%$ of species show
exact coincidence between observed and predicted generation times, and
approximately $78\%$ are predicted with relative errors below $30\%$.

The importance of this comparison is that accurate prediction of
generation time requires more than locating the center of the
reproductive window. It requires the entropy-derived distribution to
capture the overall allocation of reproductive mass across the entire
post-maturity interval. Thus the generation-time analysis supports the
same conclusion as the overlap and Bhattacharyya comparisons: the
entropy-optimal class reflects the organization of the reproductive
schedule as a distribution, not simply the value of a single quantile.

\subsection{Prediction of entropy regimes}

The critical-threshold Theorem \ref{thm_boundary} provides a criterion
for predicting whether an open-group population exhibits a finite
entropy optimum or an asymptotic entropy regime.

Among all analysed open-group matrices, only three species were not
classified correctly:
\emph{Alces alces},
\emph{Canis lupus},
and \emph{Elephas maximus}
(Table~\ref{tab:CriticalTheoremExceptions}).
In each case, however, the hypotheses of Theorem
\ref{thm_boundary} are not satisfied because the finite survivorship
sequence crosses the critical threshold. All remaining open-group
species in the dataset are classified correctly.

\begin{table*}[htbp]
\centering
\caption{Species not classified correctly by the critical-threshold criterion of Theorem \ref{thm_boundary}.}
\label{tab:CriticalTheoremExceptions}
\begin{tabular}{llll}
\hline
Species & Observed & Predicted & Cause \\
\hline
\emph{Alces alces} &
Finite Maximum &
Asymptotic &
Threshold crossing \\

\emph{Canis lupus} &
Asymptotic &
Finite Maximum &
Threshold crossing \\

\emph{Elephas maximus} &
Finite Maximum &
Asymptotic &
Threshold crossing \\
\hline
\end{tabular}
\end{table*}

\subsection{Correlations with independent life-history variables}\label{subsec_phylogenetic}

The preceding analyses compared entropy-derived predictions with
quantities computed directly from the demographic matrices.
A substantially stronger test consists in comparing entropy-derived
quantities with independent life-history variables compiled from
external sources.

Because species share evolutionary history, comparative data cannot
generally be regarded as statistically independent
\cite{Felsenstein1985,Freckleton2002}. To account for this effect,
all correlation analyses were repeated using phylogenetically
corrected methods.

A phylogenetic tree for the analysed species was obtained from the
Open Tree of Life
\cite{Hinchliff2015,OpenTreeOfLife}. From this tree we constructed a
phylogenetic covariance matrix \(V\) describing the expected
covariance among species under a Brownian-motion model of trait
evolution \cite{Felsenstein1985}. Writing

\[
V=LL^{T},
\]
where \(L\) is the Cholesky factor of \(V\), trait vectors were
transformed according to

\[
x^{*}=L^{-1}x,
\qquad
y^{*}=L^{-1}y.
\]

Correlations and regressions performed in the transformed space are
equivalent to phylogenetic generalized least-squares (PGLS)
analyses \cite{Grafen1989,Freckleton2002}.

As a preliminary internal validation, we compared the predicted
reproductive median \(B_{50}^{\rm pred}\) with the observed
reproductive median \(B_{50}^{\rm obs}\) computed directly from the
demographic matrices. Since both quantities are derived from the
same reproductive windows, strong agreement is expected.
Indeed, after phylogenetic correction we obtain
\(r=0.937\),
\(\rho=0.963\),
and
\(R^2=0.878\)
(\(p<10^{-30}\)),
compared with an uncorrected correlation of
\(r=0.922\).
These results confirm the internal consistency of the
entropy-based predictions.

The more informative tests involve variables that are entirely
independent of the demographic matrices.

First, we compared
\(B_{50}^{\rm pred}\)
with the observed mean reproductive age,

\[
AgeMean
=
\frac{AgeFirst+AgeLast}{2},
\]
where \(AgeFirst\) and \(AgeLast\) denote the ages at first and last
reproduction compiled from independent life-history databases.

The association remains remarkably strong
(Figure~\ref{fig:B50AgeMeanPhylo}),
with phylogenetic Pearson correlation
\(r=0.884\),
phylogenetic Spearman correlation
\(\rho=0.875\),
and
\(R^2=0.782\)
(\(p<10^{-30}\)).
The corresponding uncorrected correlation was
\(r=0.825\).

\begin{figure*}[htbp]
\centering
\includegraphics[width=0.75\textwidth]{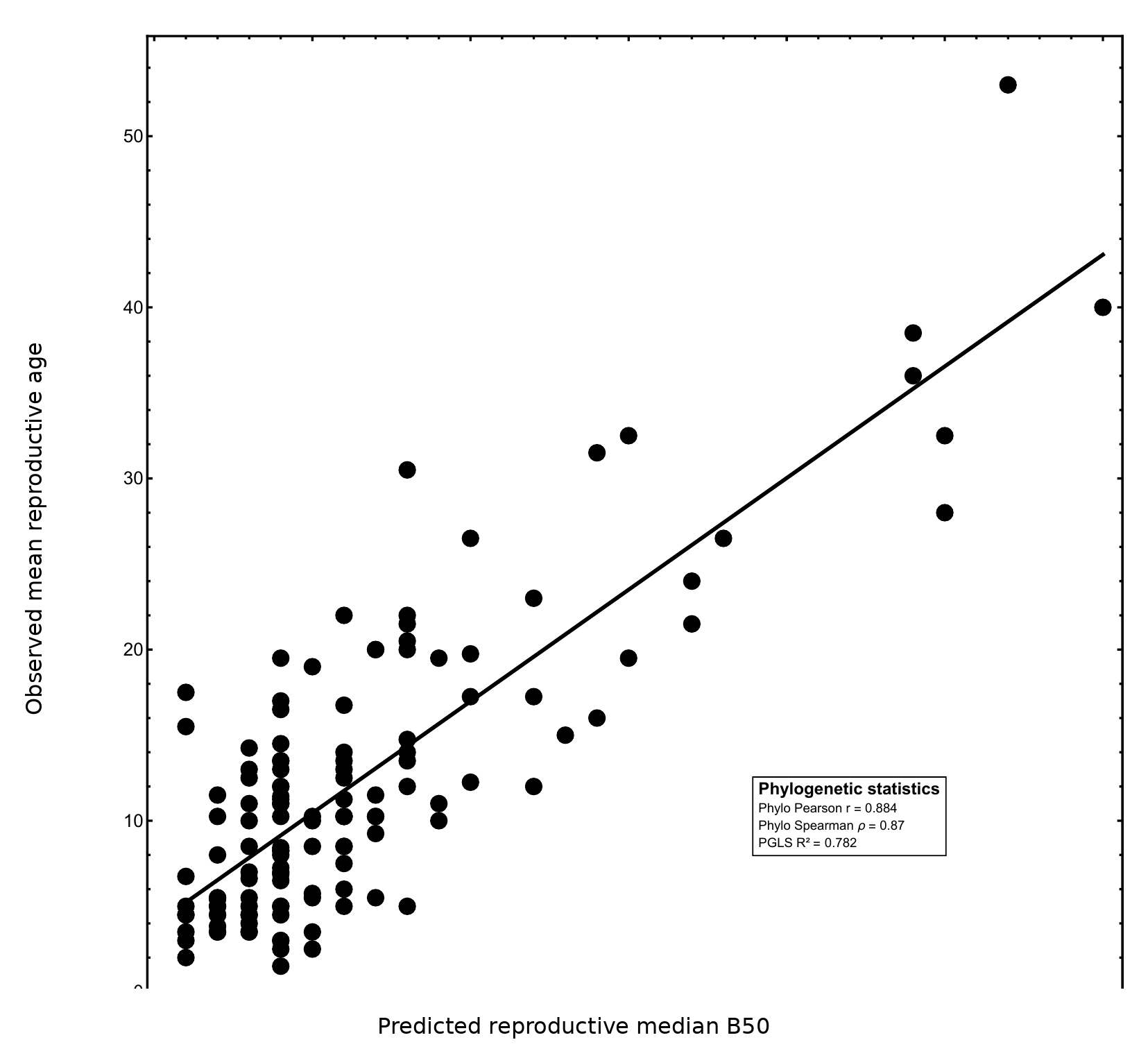}
\caption{
Phylogenetically corrected relationship between
\(B_{50}^{\rm pred}\)
and observed mean reproductive age.
}
\label{fig:B50AgeMeanPhylo}
\end{figure*}

Secondly, we compared the predicted generation time
\(T_{P}\)
with the same observed mean reproductive age. Since \(T_{P}\) depends on the
entire reproductive distribution rather than on a single
percentile, this provides an even broader test of the theory.

The results are shown in
Figure~\ref{fig:TAgeMeanPhylo}.
The phylogenetically corrected correlation is
\(r=0.889\),
with phylogenetic Spearman correlation
\(\rho=0.881\),
and
\(R^2=0.790\)
(\(p<10^{-30}\)).
The corresponding uncorrected correlation was
\(r=0.821\).

\begin{figure*}[htbp]
\centering
\includegraphics[width=0.75\textwidth]{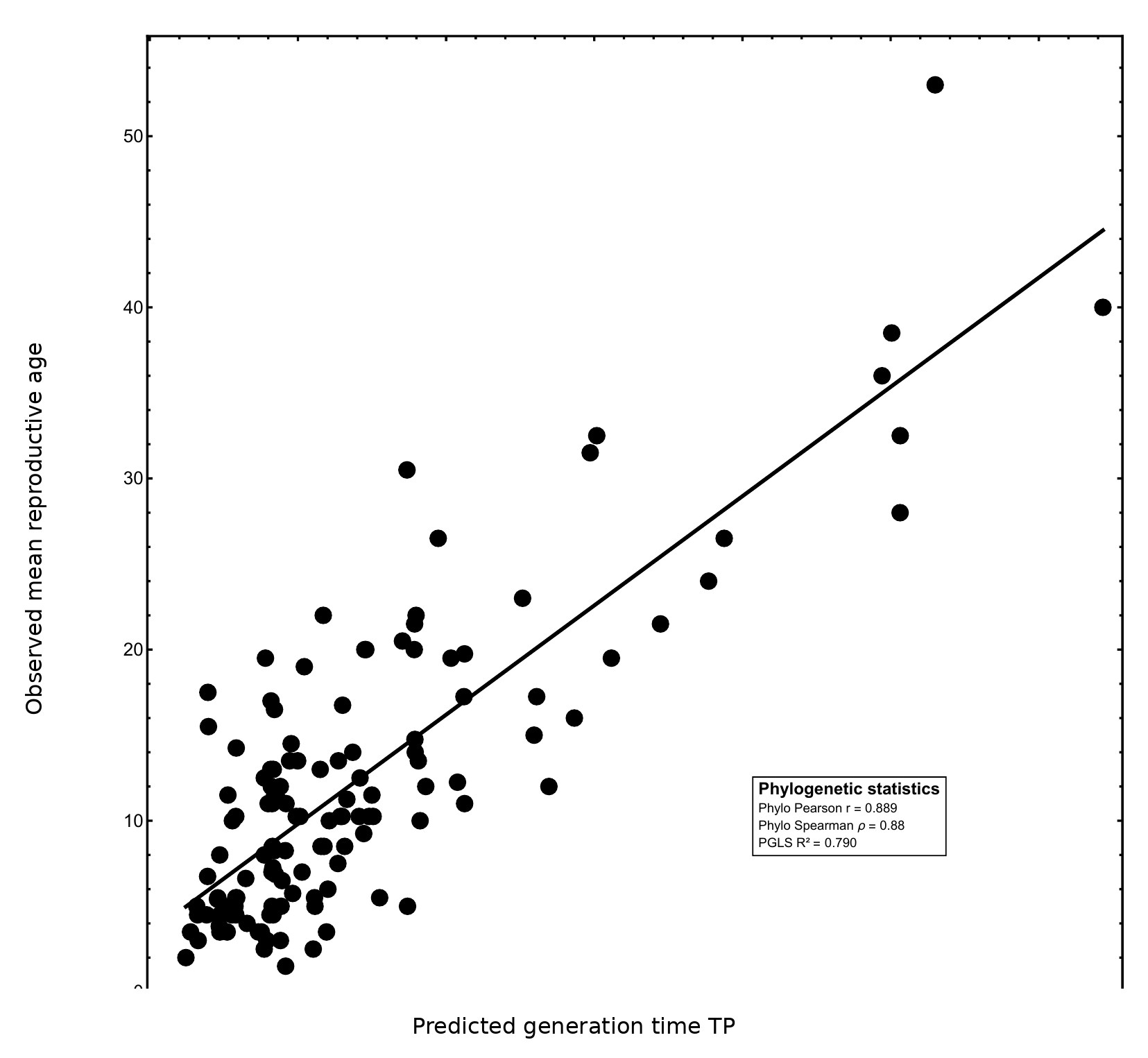}
\caption{
Phylogenetically corrected relationship between
\(T_{P}\)
and observed mean reproductive age.
}
\label{fig:TAgeMeanPhylo}
\end{figure*}

\begin{table*}[htbp]
\centering
\caption{
Summary of the principal phylogenetically corrected analyses
(\(N=130\)).
}
\label{tab:PGLSResults}
\begin{tabular}{lccccc}
\hline
Comparison &
Pearson &
Spearman &
\(R^2\) &
\(p\) &
Uncorrected \(r\) \\
\hline

\(B_{50}^{\rm pred}\) vs \(B_{50}^{\rm obs}\)
&
0.937
&
0.963
&
0.878
&
$<10^{-30}$
&
0.922
\\

\(B_{50}^{\rm pred}\) vs observed mean reproductive age
&
0.884
&
0.875
&
0.782
&
$<10^{-30}$
&
0.825
\\

\(T_{P}\) vs observed mean reproductive age
&
0.889
&
0.881
&
0.790
&
$<10^{-30}$
&
0.821
\\

\hline
\end{tabular}
\end{table*}

The close agreement between Pearson and Spearman coefficients
indicates that the observed relationships are not driven by a small
number of extreme observations and remain robust under rank-based
analysis. More importantly, the strongest results are obtained for
comparisons involving life-history variables compiled independently
of the demographic matrices. These associations remain extremely
strong after phylogenetic correction and are, if anything, slightly
stronger than their uncorrected counterparts.

These results show that the predictive power of the
entropy-maximization principle is not an artefact of shared
ancestry. The agreement between entropy-derived classes and
independent reproductive variables appears to reflect a genuine
biological regularity rather than taxonomic relatedness alone.

\subsection{Geometric reproductive organization}
\label{sec_geometric_R_O}

A notable feature of the demographic data analysed in this study is
the predominance of open-group Leslie matrices. Among the 2526 Leslie
matrices initially extracted from the COMADRE database, 1874
(74.2\%) correspond to open-group structures, whereas only 652
(25.8\%) correspond to finite Leslie populations. Thus, demographic
models incorporating persistent late-life reproduction are nearly
three times more common than purely finite reproductive windows in
the available empirical literature.

From the perspective of evolutionary entropy, this observation is of
particular relevance. As shown in Section~\ref{sec_open_group},
open-group populations generate reproductive distributions with
geometric tails, and it is within this setting that finite entropy
optima, asymptotic entropy regimes, and the critical-threshold
criterion naturally arise.

In this light, classical survivorship categories (Types I, II, and
III) appear to have limited explanatory value for reproductive
organization. Species with markedly different survivorship curves may
produce similar reproductive distributions and similar
entropy regimes. Reproductive distributions, particularly their
geometric structure, therefore provide a more direct description of
reproductive organization than survivorship curves alone.

\subsection{Robustness checks}
Two potential concerns regarding the preceding results deserve explicit analysis: whether predictive success is driven by the geometric tails of open-group matrices, and whether it merely reflects the well-known association between maturity and reproductive timing. In this section we address each in turn.

\subsubsection{Pure Leslie vs. Open Group Leslie}

A potential concern is that the strong agreement between entropy-derived predictions and observed reproductive windows might be driven primarily by the geometric reproductive tails naturally present in open-group Leslie populations. Since a large fraction of the analysed species belongs to this class, it is important to determine whether the predictive success persists in purely finite demographic systems.

To address this question, all analyses were repeated separately for Pure Leslie matrices and Open-Group Leslie matrices. The results are summarized in Table~\ref{tab:PureOpenComparison}.

\begin{table*}[t]
	\centering
	\caption{Comparison of predictive performance for Pure Leslie and Open-Group Leslie populations.}
	\label{tab:PureOpenComparison}
	\begin{tabular}{lcc}
		\hline
		Metric & Pure Leslie & Open Group \\
		\hline
		Species & 32 & 98 \\
		$r(B_{50},B_{50}^{\mathrm{Pred}})$ & 0.943 & 0.914 \\
		$\rho(B_{50},B_{50}^{\mathrm{Pred}})$ & 0.981 & 0.940 \\
		Mean absolute $B_{50}$ error & 1.47 & 1.47 \\
		Species within two classes (\%) & 90.6 & 83.7 \\
		$r(T,T_{\mathrm{Pred}})$ & 0.933 & 0.850 \\
		$\rho(T,T_{\mathrm{Pred}})$ & 0.984 & 0.877 \\
		Mean absolute $T$ error & 1.63 & 2.75 \\
		Mean overlap coefficient & 0.911 & 0.838 \\
		Mean Bhattacharyya coefficient & 0.947 & 0.907 \\
		\hline
	\end{tabular}
\end{table*}

The results reveal no deterioration of predictive performance in Pure Leslie populations. On the contrary, several metrics are consistently stronger in the finite Pure Leslie class than in Open-Group populations. Correlations between observed and predicted reproductive medians reach $r=0.943$ for Pure matrices, compared with $r=0.914$ for Open-Group matrices. Similar improvements are observed for generation times, overlap coefficients and Bhattacharyya coefficients.

Because Pure Leslie matrices lack the explicit geometric infinite tails characteristic of open-group systems, a structural dependency would imply a drop in predictive accuracy for finite matrices. However, we observe the opposite pattern.


The difference is especially pronounced for generation times. The correlation between observed and predicted generation times increases from $r=0.850$ in Open-Group populations to $r=0.933$ in Pure Leslie populations, while the mean absolute prediction error decreases from 2.75 to 1.63 age classes. Likewise, the overlap and Bhattacharyya coefficients remain remarkably high in the Pure Leslie subset, indicating that the agreement extends beyond summary statistics and persists at the level of the full reproductive distributions.

The largest discrepancies are concentrated in a very small number of extreme cases. In particular, the most substantial deviations occur in \textit{Huso huso}, one of the longest-lived species in the dataset and an extreme demographic outlier. Nevertheless, even with this species retained in the analysis, predictive performance remains strong in both matrix classes.

These results demonstrate that the empirical success of entropy-derived reproductive windows is not a structural artefact of the open-group formalism or its geometric tails. Because the predictive agreement persists,  and is in several respects even  stronger,  in finite pure Leslie populations, the entropy-derived model appears to capture a genuine demographic regularity independent of matrix representation.


\subsubsection{Comparison with a Maturity-Based Baseline}
\label{subsec_baseline}

Age at first reproduction is one of the most widely used life-history
descriptors and is naturally associated with subsequent reproductive
timing. To verify that the predictive power of entropy-derived reproductive windows does not merely reflect the well-known association between maturity and subsequent reproductive timing, we compared the model against a simple maturity-based baseline.

Table~\ref{tab:BaselineComparison} compares the predictive performance
of this baseline with that of the entropy-derived quantities.

\begin{table*}[t]
	\centering
	\caption{Comparison between a maturity-based baseline and the entropy-derived predictions.}
	\label{tab:BaselineComparison}
	\begin{tabular}{lcc}
		\hline
		Metric & Maturity baseline & Entropy model \\
		\hline
		$r(B_{50}^{\mathrm{Obs}},\cdot)$ & 0.843 & 0.922 \\
		Mean absolute error of $B_{50}$ & 4.68 & 1.47 \\
		$r(T^{\mathrm{Obs}},\cdot)$ & 0.809 & 0.872 \\
		Mean absolute error of $T$ & 6.10 & 2.47 \\
		\hline
	\end{tabular}
\end{table*}

As expected, age at first reproduction alone already provides a
reasonably informative predictor. Correlations between maturity and
the observed reproductive median and generation time reach
$r=0.843$ and $r=0.809$, respectively. Nevertheless, the
entropy-derived predictions substantially improve predictive accuracy.

For reproductive medians, the mean absolute prediction error decreases
from 4.68 to 1.47 age classes, corresponding to a reduction of
approximately $68.6\%$. For generation times, the mean absolute error
decreases from 6.10 to 2.47 age classes, corresponding to a reduction
of approximately $59.4\%$.

Thus, the predictive success of the entropy model extends beyond the demographic variance captured by maturity alone. While age at first reproduction establishes a baseline for reproductive timing, the entropy-derived windows provide a significantly sharper resolution of both reproductive medians and generation times, capturing deeper underlying demographic structures.

%

\subsection{Limitations and future directions}\label{sub_limitations}

Several limitations in this work should be acknowledged.

Firstly, the empirical analysis was restricted to animal populations
represented by annual projection matrices. Therefore, the generality of
the results outside this subset remains to be established. In
particular, demographic matrices with seasonal time steps, strongly
stage-structured models, and plant populations with dormancy or clonal
reproduction may require additional normalization before the present
entropy formulas can be used without modification.

Secondly, a single representative matrix was selected for each species.
Although this avoids over-representing species with numerous demographic
studies, it necessarily omits part of the demographic variability
associated with different populations, environments, and sampling
protocols. A natural next step is to repeat the analysis at the
population level, where several matrices for the same species could be
used to examine whether entropy-derived reproductive windows are stable
under ecological variation.

\begin{figure*}[ht]
\centering
\includegraphics[width=0.85\textwidth]{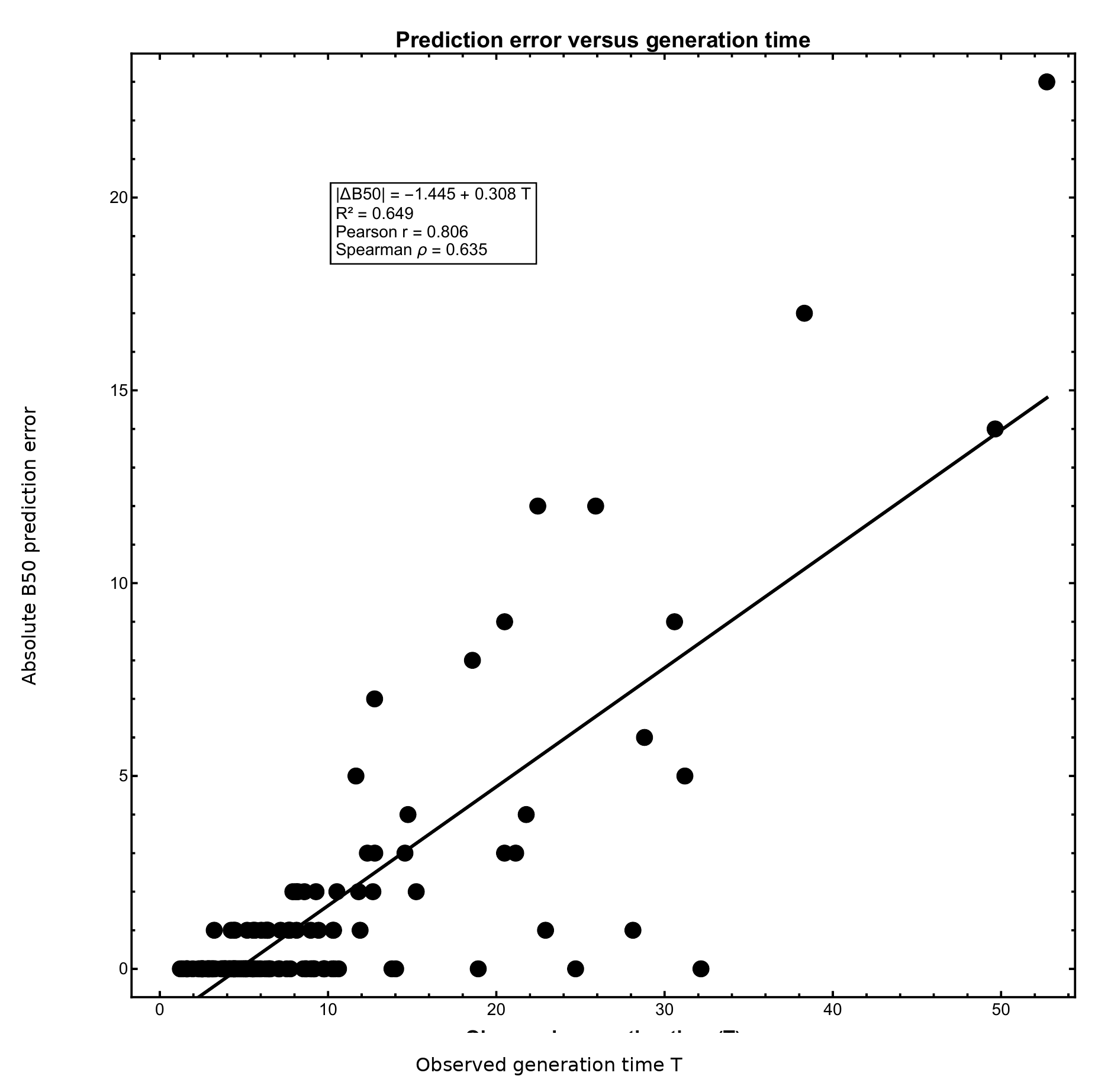}
\caption{
Absolute prediction error of the reproductive median $B_{50}$ as a
function of observed generation time $T$. Prediction error
increases strongly with generation time
($r=0.806$, $R^2=0.649$, $p<10^{-31}$).
}
\label{fig:B50ErrorT}
\end{figure*}

Thirdly, the theoretical analysis focuses primarily on Leslie
models and, in particular, on open-group populations with
geometric reproductive tails. These models admit exact analytical
treatment and lead naturally to the critical threshold established
in Theorem~\ref{thm_boundary}. Real populations may, however,
exhibit more complex demographic features, including variable
fertility windows and environmental stochasticity.

Finally, both demographic matrices and independent life-history
variables are subject to measurement uncertainty and sampling
variation.

Despite these limitations, the empirical agreement between theory
and observation is remarkably strong. Internal comparisons between
predicted and observed reproductive quantities exhibit very high
agreement, while the phylogenetically corrected analyses reported in
the previous subsection show that strong associations persist even
when entropy-derived quantities are compared with independent
life-history variables compiled from external sources.

Residual discrepancies are concentrated primarily among species with
long reproductive windows. Prediction error increases
significantly with generation time
(Figure~\ref{fig:B50ErrorT}),
indicating that the largest deviations occur in species exhibiting
extended late-life reproductive activity.

Finally, although the observed associations are consistent with the
entropy-maximization principle, they do not by themselves establish
that natural selection acts directly to maximize evolutionary
entropy. The present study documents a striking empirical
regularity. Clarifying the evolutionary mechanisms responsible for
this organization remains an important direction for future
research.

Future work should therefore proceed in three directions. First, the uncertainty of the demographic matrices should be propagated through the entropy calculations, producing confidence intervals for entropy-optimal classes and reproductive quantiles. Secondly, stage-structured and seasonal matrices should be treated directly rather than forced into an annual Leslie representation. Thirdly, stochastic environments should be incorporated through random products of projection matrices, allowing one to ask whether the entropy-derived reproductive horizon is stable under environmental variation. Each of these extensions would provide a sharper test of whether the regularity identified here is a robust property of reproductive organization or a feature of the deterministic annual matrices presently available.

\FloatBarrier

\section{Conclusions}

The present study identifies a simple mathematical relationship
between evolutionary entropy, reproductive organization, and
demographic life histories.

It also clarifies the biological position of evolutionary entropy relative to the demographic net reproductive number. The net reproductive number $R_0^{\rm dem}$ answers whether the lifetime schedule of survivorship and fertility produces replacement. The Perron factor $\lambda$ answers how fast the population grows or declines asymptotically. Evolutionary entropy answers a different question: how the growth-adjusted reproductive contribution is distributed through time.

The key point is that evolutionary entropy becomes biologically predictive only after the demographic problem is expressed in terms of reproductive masses. Once this is done, entropy maximization is no longer a purely formal calculation on survivorship curves. It becomes a statement about the probability distribution of realized reproductive contribution over age.

The theoretical analysis reveals two principal entropy regimes in
open-group populations, separated by a critical boundary case. Depending on the relationship between age at first
reproduction and the geometric parameter governing the reproductive
tail, evolutionary entropy either attains a finite maximum or
converges asymptotically towards a limiting value. An explicit
critical threshold separates these regimes and provides a criterion
for predicting the qualitative structure of reproductive windows
from demographic information.

The empirical analysis supports this theoretical framework. Across
130 species, entropy-derived quantities exhibit substantial agreement
with independently observed life-history variables. Approximately
61\% of species exhibit exact agreement between observed and predicted
reproductive medians, more than 90\% are predicted within three
reproductive classes, and strong correlations are obtained with age
at first reproduction, reproductive lifespan, life expectancy, and
maximum longevity.

These results are noteworthy because the predictions are derived
exclusively from demographic matrices. No information concerning
longevity, reproductive lifespan, reproductive termination,
generation time, or other external life-history variables enters the
entropy calculations. Nevertheless, many of these quantities are
closely associated with entropy-derived reproductive windows.

For a given Leslie system, the entropy-maximizing reproductive endpoint
necessarily lies within the reproductive sequence generated by the matrix,
since the optimization is performed over successive truncations of a finite
or open-ended sequence of reproductive classes. The substantive empirical result is therefore not this structural property itself, but the fact that entropy-derived reproductive classes remain strongly associated with independent life-history variables, suggesting that they capture biologically meaningful aspects of reproductive organization. This interpretation is further supported by analyses accounting for shared evolutionary history, which yield correlations ranging from $r=0.884$ to $r=0.937$ and coefficients of determination ranging from $R^2=0.782$ to $R^2=0.878$.

The theoretical and empirical results thus suggest that
evolutionary entropy captures an important aspect of reproductive
organization. A substantial fraction of reproductive structure
appears to be encoded in the distribution of reproductive
contributions and can be recovered from demographic information
alone. In this sense, reproductive distributions play a role
analogous to probability distributions in statistical physics:
they provide a compact description from which large-scale
regularities of reproductive organization can emerge.

The biological origin of the observed pattern remains an open
question. Given the uncertainty inherent in demographic
reconstructions
\cite{BuescuOliveiraSousa2023,Caswell,COMADRE,SalgueroGomez2016COMADRE}
and the extraordinary diversity of life histories and ageing
trajectories documented across the tree of life
\cite{Jones2014,Baudisch2008,deVries2023}, the extent of the observed
agreement is itself noteworthy. The results suggest that evolutionary
entropy captures a demographic regularity present across many
species, consistent with the broader interpretation of entropy as a
demographic descriptor proposed by Demetrius
\cite{Demetrius74,Demetrius75,Demetrius2004}. Across the iteroparous
species analysed here, reproductive windows tend to be organized near
entropy-optimal configurations, suggesting that the distribution of
reproductive contributions encodes important features of reproductive
organization.

From this perspective, the net reproductive number and evolutionary entropy play complementary roles in life-history interpretation. The net reproductive number is a scalar replacement total; entropy is a normalized temporal-distribution descriptor. A species with high lifetime reproductive output may still have low entropy if reproduction is concentrated in a narrow set of age classes, whereas a species with reproduction spread across many age classes may have a larger entropy even when its replacement output is not exceptional. This separation of amount from temporal organization is one reason the entropy framework is biologically informative for comparative demography.

The present work should therefore be understood as a mathematical and comparative description of a demographic regularity, rather than as a complete evolutionary explanation for that regularity. The results do not require the assumption that natural selection directly maximizes evolutionary entropy in every population. They show instead that the entropy functional selects reproductive age classes that are strongly aligned with independently observed reproductive schedules. This distinction is important. Establishing a robust quantitative pattern is a necessary step before one can determine whether the pattern is produced by selection, physiological constraint, ecological filtering, historical contingency, or some combination of these mechanisms. In this sense, the contribution of the present article is to identify a tractable mathematical structure that can be tested, generalized, and eventually embedded in broader evolutionary models of life-history organization.

\section*{Competing interests}

The authors declare that they have no competing interests.

\section*{Data availability}
All demographic matrices were obtained from the COMADRE Animal Matrix Database. Species-level results and derived quantities are reported in Appendix A. 

\section*{Code availability}
All analyses were performed in Wolfram Mathematica. The computational codes used to generate the results reported in this study are available from the corresponding author upon reasonable request.

\section*{Funding} J. B. was partially support by Fundação para a Ciência e a Tecnologia, UID/04561/2025, DOI: 10.54499/UID/04561/2025.
H. M. O. was partially supported by Fundação para a Ciência e a Tecnologia (FCT), Portugal, through CAMGSD, Instituto Superior Técnico, under projects UIDB/04459/2020, UIDP/04459/2020 and UID/PRR/04459/2025.

\section*{Author contributions}
H.M.O. developed the mathematical theory, designed the computational analyses and performed the numerical experiments. J.B. contributed to the mathematical theory, theoretical development, interpretation of the results and critical revision of the manuscript. S.N.E. contributed to the mathematical formulation and critical revision of the manuscript. All authors reviewed and approved the final version.

\appendix
\section{Summary variables and ordering of species}\label{app:species_results}

Tables~\ref{tab:SpeciesSummary1} and~\ref{tab:SpeciesSummary2}
summarise the principal quantities used throughout the paper to compare
observed reproductive schedules with the entropy-based predictions.

Species are ordered by increasing observed generation time
\(T\), providing a demographic gradient from species with highly
concentrated reproductive contributions to species exhibiting
progressively more extended reproductive windows.

\textit{Species} and \textit{ID} denote the species name and COMADRE
identifier. \textit{Type} indicates whether the matrix is a finite
Leslie matrix (\textit{Pure}) or an open-group Leslie matrix
(\textit{Open}). \textit{Reg} denotes the entropy regime:
\textit{Fin} (finite optimum), \textit{Asym} (asymptotic regime), or
\textit{Bound} (boundary case). The parameter \(\lambda\) is the
dominant eigenvalue of the selected matrix.

\(A\) denotes the first reproductive class and \(K\) the
entropy-optimal reproductive window. The ratio \(H/H_{\max}\)
measures the proximity of the observed reproductive structure to the
entropy-maximising regime.

\(B_{50}\) and \(B_{50P}\) are the observed and predicted
reproductive medians, respectively, and
\(\Delta B_{50}=B_{50P}-B_{50}\).
\(T\) and \(T_P\) are the corresponding observed and predicted generation times. \textit{Overlap} denotes the overlap coefficient
between observed and predicted reproductive 
distributions.

\textit{Mat}, \textit{LastRep}, and \textit{Long} denote,
respectively, age at maturity, last reported reproductive age, and
reported longevity obtained from independent life-history sources
including AnAge \cite{Tacutu2018}, Avibase \cite{Lepage2024},
FishBase \cite{FroesePauly2024}, EURING \cite{EURING2024}, the Bird
Banding Laboratory \cite{BBL2024}, and Animal Diversity Web
\cite{ADW2024}.

\setlength{\tabcolsep}{1.5pt}
\begin{table}[ht]
\centering
\scriptsize
\caption{Species summary table (Part 1).}\label{tab:SpeciesSummary1}
\resizebox{\textwidth}{!}{
\begin{tabular}{lllllllllllllllll}
\hline
Species & ID & Type & Reg & $\lambda$ & A & K &
$H/H_{\max}$ & $B_{50}$ & $B_{50P}$ &
$\Delta B_{50}$ & $T$ & $T_P$ &
Overlap & Mat & LastRep & Long \\
\hline

Opsopoeodus emiliae & 248178 & Pure & Fin & 0.985 & 1 & 3 & 1.000 & 1 & 1 & 0 & 1.221 & 1.221 & 1.000 & 1.000 & 3.000 & 4.000 \\
Callospermophilus lateralis & 249177 & Pure & Bound & 1.008 & 1 & 6 & 1.000 & 1 & 1 & 0 & 1.377 & 1.377 & 1.000 & 1.000 & 6.000 & 9.000 \\
Upupa epops & 248817 & Open & Asym & 1.038 & 1 & 26 & 1.000 & 1 & 1 & 0 & 1.587 & 1.587 & 1.000 & 1.000 & 9.000 & 11.10 \\
Jynx torquilla & 248657 & Open & Asym & 1.006 & 1 & 27 & 1.000 & 1 & 1 & 0 & 1.615 & 1.615 & 1.000 & 1.000 & 8.000 & 10.00 \\
Urocitellus armatus & 249847 & Pure & Fin & 0.970 & 1 & 5 & 1.000 & 1 & 1 & 0 & 1.635 & 1.635 & 1.000 & 1.000 & 5.000 & 7.000 \\
Lagopus leucura & 248690 & Open & Fin & 0.987 & 1 & 15 & 1.000 & 1 & 1 & 0 & 1.955 & 1.953 & 1.000 & 1.490 & 12.00 & 15.00 \\
Lagopus muta & 248700 & Open & Asym & 1.115 & 1 & 31 & 1.000 & 2 & 2 & 0 & 2.296 & 2.296 & 1.000 & 0.990 & 10.00 & 11.70 \\
Alcedo atthis & 248387 & Open & Asym & 1.052 & 2 & 21 & 1.000 & 2 & 2 & 0 & 2.363 & 2.363 & 1.000 & 1.000 & 15.00 & 21.00 \\
Poecile atricapillus & 248728 & Open & Asym & 1.177 & 2 & 25 & 1.000 & 2 & 2 & 0 & 2.503 & 2.503 & 1.000 & 1.000 & 9.000 & 12.00 \\
Vermivora chrysoptera & 248820 & Open & Fin & 0.794 & 1 & 8 & 0.997 & 2 & 2 & 0 & 2.505 & 2.367 & 0.983 & 1.000 & 6.000 & 8.000 \\
Bonasa umbellus & 248437 & Open & Fin & 0.945 & 1 & 8 & 0.996 & 2 & 2 & 0 & 2.513 & 2.373 & 0.983 & 1.000 & 8.000 & 11.00 \\
Puffinus tenuirostris & 248737 & Open & Fin & 0.999 & 1 & 16 & 0.963 & 1 & 1 & 0 & 2.527 & 1.963 & 0.971 & 5.000 & 30.00 & 38.00 \\
Campylorhynchus brunneicapillus\\ sandiegensis & 248455 & Open & Fin & 1.021 & 1 & 7 & 0.995 & 2 & 2 & 0 & 2.575 & 2.347 & 0.968 & 0.670 & 7.000 & 7.300 \\
Zingel asper & 248220 & Open & Asym & 1.151 & 2 & 32 & 1.000 & 2 & 2 & 0 & 2.768 & 2.768 & 1.000 & 3.000 & 6.000 & 8.000 \\
Dendragapus obscurus & 248511 & Open & Asym & 1.198 & 2 & 35 & 1.000 & 2 & 2 & 0 & 2.878 & 2.878 & 1.000 & 1.000 & 9.000 & 12.00 \\
Tamiasciurus hudsonicus & 249871 & Open & Asym & 1.006 & 1 & 24 & 1.000 & 3 & 3 & 0 & 2.903 & 2.903 & 1.000 & 1.000 & 8.000 & 9.800 \\
Zenaida macroura & 248826 & Open & Asym & 1.095 & 2 & 35 & 1.000 & 2 & 2 & 0 & 2.904 & 2.904 & 1.000 & 0.500 & 20.00 & 31.00 \\
Hirundo rustica & 248654 & Open & Asym & 0.984 & 2 & 36 & 1.000 & 2 & 2 & 0 & 2.937 & 2.937 & 1.000 & 1.000 & 10.00 & 15.00 \\
Agelaius phoeniceus & 248386 & Open & Fin & 1.000 & 1 & 8 & 0.997 & 3 & 3 & 0 & 3.106 & 2.902 & 0.973 & 1.000 & 10.00 & 15.80 \\
Vulpes vulpes & 249918 & Open & Fin & 1.001 & 1 & 6 & 0.935 & 2 & 2 & 0 & 3.134 & 2.285 & 0.889 & 0.830 & 10.00 & 15.00 \\
Anas platyrhynchos & 248401 & Open & Fin & 0.946 & 1 & 6 & 0.982 & 3 & 2 & -1 & 3.233 & 2.638 & 0.905 & 1.000 & 22.00 & 29.10 \\
Sciurus niger cinereus & 249824 & Pure & Bound & 1.216 & 2 & 9 & 1.000 & 3 & 3 & 0 & 3.241 & 3.241 & 1.000 & 1.250 & 12.00 & 15.00 \\
Ammodramus savannarum & 248393 & Open & Asym & 0.955 & 2 & 44 & 1.000 & 3 & 3 & 0 & 3.284 & 3.284 & 1.000 & 1.000 & 7.000 & 10.00 \\
Falco naumanni & 248519 & Open & Fin & 0.988 & 1 & 6 & 0.976 & 3 & 3 & 0 & 3.665 & 2.866 & 0.873 & 1.000 & 9.000 & 10.90 \\
Assa darlingtoni & 248232 & Open & Asym & 1.000 & 3 & 31 & 1.000 & 3 & 3 & 0 & 3.667 & 3.667 & 1.000 & 1.000 & 6.000 & 7.000 \\
Abax parallelepipedus & 248967 & Open & Asym & 1.037 & 3 & 33 & 1.000 & 3 & 3 & 0 & 3.766 & 3.766 & 1.000 & 2.000 & 5.000 & 6.000 \\
Nephtys incisa & 249983 & Pure & Bound & 0.760 & 2 & 5 & 1.000 & 4 & 4 & 0 & 3.863 & 3.863 & 1.000 & 1.000 & 4.000 & 5.000 \\
Canis lupus & 249186 & Open & Asym & 1.007 & 3 & 7 & 1.000 & 4 & 4 & 0 & 3.867 & 3.867 & 1.000 & 2.000 & 14.00 & 20.00 \\
Erythrocebus patas & 240418 & Open & Fin & 0.921 & 2 & 23 & 1.000 & 3 & 3 & 0 & 3.867 & 3.866 & 1.000 & 3.000 & 22.00 & 30.00 \\
Spermophilus dauricus & 249865 & Pure & Bound & 0.924 & 2 & 7 & 1.000 & 4 & 4 & 0 & 3.950 & 3.950 & 1.000 & 1.000 & 5.000 & 7.000 \\
Sturnella neglecta & 248783 & Open & Fin & 0.780 & 2 & 15 & 0.999 & 3 & 3 & 0 & 4.118 & 4.056 & 0.996 & 1.000 & 8.000 & 10.00 \\
Zoarces viviparus & 248223 & Pure & Fin & 1.110 & 3 & 7 & 1.000 & 4 & 4 & 0 & 4.126 & 4.126 & 1.000 & 2.000 & 8.000 & 10.00 \\
Puma concolor & 240521 & Pure & Fin & 1.072 & 3 & 13 & 1.000 & 4 & 4 & 0 & 4.207 & 4.207 & 1.000 & 2.500 & 14.00 & 18.00 \\
Centrocercus urophasianus & 248472 & Open & Fin & 0.979 & 1 & 5 & 0.909 & 3 & 2 & -1 & 4.240 & 2.613 & 0.744 & 1.000 & 6.000 & 9.000 \\
Accipiter cooperii & 248384 & Open & Fin & 0.932 & 2 & 13 & 0.998 & 3 & 3 & 0 & 4.307 & 4.146 & 0.987 & 2.000 & 15.00 & 20.00 \\
Buteo lineatus & 248445 & Open & Fin & 0.996 & 2 & 12 & 0.997 & 3 & 3 & 0 & 4.362 & 4.131 & 0.979 & 2.000 & 20.00 & 26.00 \\
Forpus passerinus & 248553 & Open & Fin & 1.028 & 1 & 5 & 0.788 & 2 & 1 & -1 & 4.403 & 1.921 & 0.726 & 1.000 & 8.000 & 10.50 \\
Cottus sp. & 248104 & Pure & Fin & 0.918 & 2 & 8 & 1.000 & 4 & 4 & 0 & 4.410 & 4.410 & 1.000 & 1.000 & 5.000 & 7.000 \\
Urocitellus columbianus & 249863 & Pure & Fin & 1.090 & 3 & 7 & 1.000 & 4 & 4 & 0 & 4.432 & 4.432 & 1.000 & 2.000 & 8.000 & 11.00 \\
Eidolon helvum & 249272 & Open & Fin & 1.157 & 2 & 11 & 0.994 & 3 & 3 & 0 & 4.446 & 4.083 & 0.965 & 2.000 & 24.00 & 31.00 \\
Marmota flaviventris & 249461 & Pure & Bound & 0.991 & 2 & 10 & 1.000 & 4 & 4 & 0 & 4.464 & 4.464 & 1.000 & 2.000 & 11.00 & 15.00 \\
Circus Circus maillardi & 253167 & Open & Fin & 0.998 & 2 & 11 & 0.995 & 4 & 3 & -1 & 4.467 & 4.123 & 0.967 & 2.000 & 12.00 & 15.00 \\
Scolytus ventralis & 249053 & Open & Asym & 1.239 & 4 & 30 & 1.000 & 4 & 4 & 0 & 4.585 & 4.585 & 1.000 & 1.000 & 2.000 & 3.000 \\
Buteo jamaicensis & 248444 & Open & Fin & 1.089 & 2 & 10 & 0.987 & 4 & 4 & 0 & 4.766 & 4.169 & 0.938 & 2.000 & 24.00 & 30.00 \\
Oncorhynchus tshawytscha & 248168 & Pure & Bound & 0.970 & 4 & 5 & 1.000 & 5 & 5 & 0 & 4.822 & 4.822 & 1.000 & 3.500 & 8.000 & 9.000 \\
Lycaon pictus & 249405 & Open & Fin & 0.980 & 2 & 10 & 0.982 & 4 & 4 & 0 & 4.917 & 4.235 & 0.930 & 1.750 & 12.00 & 16.00 \\
Anser anser & 248405 & Open & Asym & 1.177 & 3 & 60 & 1.000 & 4 & 4 & 0 & 4.992 & 4.992 & 1.000 & 3.000 & 24.00 & 31.00 \\
Anser Anser caerulescens & 253147 & Open & Fin & 1.167 & 2 & 13 & 0.997 & 4 & 4 & 0 & 5.014 & 4.772 & 0.979 & 3.000 & 26.00 & 33.00 \\
Perisoreus canadensis & 248714 & Open & Fin & 1.082 & 2 & 9 & 0.974 & 4 & 4 & 0 & 5.130 & 4.153 & 0.891 & 1.500 & 13.00 & 17.00 \\
Picoides borealis & 248726 & Open & Fin & 0.971 & 2 & 31 & 1.000 & 4 & 4 & 0 & 5.140 & 5.139 & 1.000 & 1.000 & 13.00 & 16.00 \\
Tamiasciurus douglasii & 249870 & Open & Fin & 0.790 & 2 & 9 & 0.973 & 4 & 4 & 0 & 5.160 & 4.162 & 0.889 & 1.000 & 8.000 & 9.000 \\
Milvus migrans & 248702 & Open & Fin & 1.049 & 1 & 5 & 0.860 & 4 & 3 & -1 & 5.191 & 2.789 & 0.620 & 2.000 & 18.00 & 24.00 \\
Accipiter gentilis & 248385 & Open & Fin & 1.070 & 2 & 10 & 0.978 & 4 & 4 & 0 & 5.463 & 4.599 & 0.909 & 2.000 & 20.00 & 27.00 \\
Notropis photogenis & 248156 & Open & Asym & 1.008 & 3 & 29 & 1.000 & 5 & 5 & 0 & 5.517 & 5.517 & 1.000 & 1.000 & 4.000 & 5.000 \\
Alces alces & 249121 & Open & Fin & 1.004 & 2 & 7 & 0.980 & 5 & 4 & -1 & 5.545 & 4.323 & 0.737 & 2.500 & 20.00 & 27.00 \\
Oncorhynchus clarkii lewisi & 248161 & Open & Asym & 1.008 & 4 & 37 & 1.000 & 5 & 5 & 0 & 5.561 & 5.561 & 1.000 & 3.000 & 8.000 & 12.00 \\
Tautogolabrus adspersus & 248219 & Pure & Bound & 1.000 & 2 & 8 & 1.000 & 6 & 6 & 0 & 5.576 & 5.576 & 1.000 & 2.000 & 8.000 & 10.00 \\
Lontra canadensis & 249401 & Open & Fin & 1.015 & 2 & 11 & 0.985 & 5 & 4 & -1 & 5.657 & 4.943 & 0.929 & 2.500 & 18.00 & 24.00 \\
Ciconia ciconia & 248501 & Open & Fin & 1.056 & 2 & 8 & 0.947 & 4 & 4 & 0 & 5.730 & 4.092 & 0.810 & 4.000 & 30.00 & 39.00 \\
Gavia immer & 248615 & Open & Fin & 1.010 & 1 & 8 & 0.656 & 1 & 1 & 0 & 5.889 & 1.981 & 0.773 & 6.000 & 25.00 & 31.00 \\
Rutilus rutilus & 248204 & Pure & Bound & 0.993 & 4 & 10 & 1.000 & 6 & 6 & 0 & 6.012 & 6.012 & 1.000 & 2.000 & 10.00 & 14.00 \\
Bubo virginianus & 248442 & Open & Fin & 1.056 & 2 & 8 & 0.926 & 5 & 4 & -1 & 6.030 & 4.102 & 0.786 & 2.000 & 22.00 & 29.00 \\
Phalacrocorax carbo & 253149 & Open & Fin & 1.185 & 2 & 10 & 0.981 & 5 & 5 & 0 & 6.283 & 5.216 & 0.880 & 3.000 & 35.00 & 43.00 \\
Larus ridibundus & 253150 & Open & Fin & 1.138 & 2 & 9 & 0.954 & 5 & 4 & -1 & 6.291 & 4.719 & 0.826 & 2.000 & 25.00 & 32.00 \\
Chlorocebus aethiops & 240395 & Open & Fin & 1.093 & 2 & 8 & 0.913 & 5 & 4 & -1 & 6.374 & 4.204 & 0.763 & 2.830 & 20.00 & 24.00 \\
\hline
\end{tabular}
}
\end{table}

\setlength{\tabcolsep}{1.5pt}
\begin{table}[ht]
\centering
\scriptsize
\caption{Species summary table (Part 2).}\label{tab:SpeciesSummary2}
\resizebox{\textwidth}{!}{
\begin{tabular}{lllllllllllllllll}
\hline
Species & ID & Type & Reg & $\lambda$ & A & K & $H/H\_{max}$ & $B\_{50}$ & $B\_{50P}$ & $\Delta B\_{50}$ & $T$ & $T_P$ & Overlap & Mat & LastRep & Long \\ 
\hline
Gulo gulo & 249375 & Open & Fin & 0.988 & 2 & 8 & 0.954 & 5 & 4 & -1 & 6.405 & 4.573 & 0.758 & 2.500 & 14.00 & 18.00 \\
Sternula antillarum browni & 248754 & Pure & Fin & 1.053 & 3 & 12 & 0.996 & 6 & 5 & -1 & 6.428 & 6.059 & 0.949 & 2.000 & 18.00 & 24.00 \\
Hemitragus jemlahicus & 249378 & Pure & Fin & 0.999 & 2 & 13 & 1.000 & 6 & 6 & 0 & 6.434 & 6.434 & 1.000 & 2.500 & 18.00 & 22.00 \\
Falco peregrinus & 248552 & Open & Fin & 1.029 & 3 & 14 & 0.990 & 5 & 5 & 0 & 6.498 & 5.881 & 0.951 & 2.000 & 15.00 & 19.00 \\
Lacerta agilis & 250100 & Pure & Bound & 0.996 & 4 & 15 & 1.000 & 6 & 6 & 0 & 6.578 & 6.578 & 1.000 & 2.000 & 15.00 & 22.00 \\
Rangifer tarandus & 249813 & Pure & Bound & 1.099 & 3 & 16 & 1.000 & 6 & 6 & 0 & 7.064 & 7.064 & 1.000 & 2.500 & 18.00 & 21.80 \\
Platalea minor & 240510 & Open & Fin & 1.107 & 4 & 30 & 1.000 & 6 & 6 & 0 & 7.110 & 7.096 & 0.999 & 3.000 & 22.00 & 26.00 \\
Pygoscelis adeliae & 240546 & Open & Fin & 1.002 & 2 & 7 & 0.871 & 5 & 4 & -1 & 7.180 & 3.995 & 0.653 & 4.000 & 18.00 & 21.00 \\
Branta leucopsis & 253148 & Open & Fin & 1.182 & 3 & 13 & 0.977 & 6 & 6 & 0 & 7.546 & 6.367 & 0.899 & 2.000 & 25.00 & 28.20 \\
Cervus elaphus & 249248 & Pure & Fin & 1.065 & 3 & 11 & 0.998 & 7 & 6 & -1 & 7.667 & 6.651 & 0.845 & 2.500 & 20.00 & 31.00 \\
Bostrychia hagedash & 248441 & Open & Fin & 0.935 & 3 & 12 & 0.981 & 7 & 6 & -1 & 7.669 & 6.349 & 0.874 & 3.000 & 12.00 & 16.00 \\
Parus major & 253153 & Open & Asym & 0.995 & 5 & 80 & 1.000 & 7 & 7 & 0 & 7.758 & 7.758 & 1.000 & 1.000 & 10.00 & 15.00 \\
Somateria mollissima & 248748 & Open & Fin & 0.901 & 2 & 7 & 0.863 & 6 & 4 & -2 & 7.907 & 4.209 & 0.597 & 3.000 & 30.00 & 37.80 \\
Cervus canadensis & 249243 & Pure & Fin & 1.017 & 2 & 8 & 0.909 & 7 & 5 & -2 & 8.092 & 5.071 & 0.599 & 2.500 & 18.00 & 25.00 \\
Connochaetes taurinus & 240670 & Pure & Fin & 0.969 & 3 & 11 & 0.961 & 7 & 6 & -1 & 8.129 & 6.489 & 0.776 & 2.500 & 18.00 & 24.00 \\
Odocoileus virginianus & 249516 & Open & Fin & 1.002 & 2 & 7 & 0.825 & 6 & 4 & -2 & 8.206 & 4.122 & 0.591 & 0.850 & 16.00 & 22.00 \\
Maccullochella peelii & 248132 & Open & Asym & 1.186 & 5 & 77 & 1.000 & 8 & 8 & 0 & 8.528 & 8.528 & 1.000 & 6.000 & 35.00 & 48.00 \\
Centrocercus minimus & 248466 & Open & Fin & 0.838 & 3 & 11 & 0.921 & 7 & 5 & -2 & 8.593 & 5.966 & 0.773 & 1.000 & 6.000 & 8.000 \\
Chrysemys picta & 250037 & Pure & Fin & 0.792 & 7 & 22 & 1.000 & 8 & 8 & 0 & 8.680 & 8.680 & 1.000 & 6.000 & 55.00 & 61.00 \\
Petronia petronia & 253154 & Open & Asym & 1.055 & 6 & 82 & 1.000 & 8 & 8 & 0 & 8.699 & 8.699 & 1.000 & 1.000 & 9.000 & 12.00 \\
Arctocephalus australis & 249126 & Open & Fin & 1.010 & 3 & 13 & 0.947 & 7 & 6 & -1 & 8.971 & 6.850 & 0.842 & 4.000 & 24.00 & 30.00 \\
Pernis apivorus & 248717 & Open & Fin & 0.968 & 4 & 15 & 0.973 & 7 & 7 & 0 & 9.040 & 7.497 & 0.886 & 3.000 & 20.00 & 27.00 \\
Phalacrocorax auritus & 248718 & Open & Fin & 0.999 & 4 & 15 & 0.969 & 7 & 7 & 0 & 9.174 & 7.538 & 0.880 & 2.500 & 18.00 & 24.00 \\
Rissa tridactyla & 253145 & Open & Fin & 1.120 & 5 & 25 & 0.999 & 8 & 8 & 0 & 9.177 & 8.958 & 0.990 & 4.000 & 24.00 & 28.00 \\
Macaca mulatta & 249412 & Open & Fin & 1.031 & 3 & 11 & 0.935 & 8 & 6 & -2 & 9.284 & 6.507 & 0.742 & 3.500 & 30.00 & 40.00 \\
Panthera pardus & 249672 & Open & Fin & 1.008 & 4 & 14 & 0.962 & 8 & 7 & -1 & 9.435 & 7.402 & 0.844 & 2.500 & 18.00 & 23.00 \\
Globicephala macrorhynchus & 240427 & Pure & Fin & 0.958 & 4 & 17 & 1.000 & 10 & 10 & 0 & 9.734 & 9.734 & 1.000 & 8.000 & 45.00 & 63.00 \\
Haliaeetus albicilla & 248635 & Pure & Fin & 1.040 & 5 & 20 & 0.992 & 8 & 8 & 0 & 9.754 & 8.926 & 0.948 & 5.000 & 35.00 & 42.00 \\
Larus Larus argentatus & 253146 & Open & Fin & 1.130 & 5 & 20 & 0.993 & 8 & 8 & 0 & 9.802 & 8.987 & 0.952 & 4.000 & 40.00 & 49.00 \\
Genypterus blacodes & 248123 & Open & Fin & 0.970 & 5 & 19 & 0.986 & 8 & 8 & 0 & 10.24 & 9.063 & 0.927 & 5.000 & 22.00 & 30.00 \\
Eumetopias jubatus & 249287 & Pure & Fin & 1.000 & 4 & 16 & 0.990 & 9 & 8 & -1 & 10.29 & 8.948 & 0.884 & 4.500 & 25.00 & 31.00 \\
Phocarctos hookeri & 249746 & Pure & Fin & 0.999 & 5 & 18 & 0.991 & 9 & 8 & -1 & 10.33 & 9.314 & 0.918 & 4.000 & 20.00 & 26.00 \\
Callorhinus ursinus & 249159 & Pure & Fin & 1.000 & 3 & 20 & 1.000 & 10 & 10 & 0 & 10.38 & 10.38 & 1.000 & 4.500 & 20.00 & 26.00 \\
Calyptorhynchus lathami & 248452 & Open & Fin & 1.034 & 3 & 10 & 0.841 & 8 & 6 & -2 & 10.52 & 5.856 & 0.632 & 4.000 & 40.00 & 50.00 \\
Ursus arctos & 249902 & Pure & Bound & 0.997 & 6 & 21 & 1.000 & 10 & 10 & 0 & 10.62 & 10.62 & 1.000 & 4.500 & 35.00 & 47.00 \\
Strix occidentalis & 248769 & Open & Fin & 0.935 & 2 & 7 & 0.703 & 9 & 4 & -5 & 11.66 & 4.406 & 0.437 & 2.000 & 22.00 & 25.00 \\
Fratercula arctica & 253143 & Open & Fin & 1.089 & 5 & 16 & 0.950 & 10 & 8 & -2 & 11.81 & 8.937 & 0.807 & 5.000 & 38.00 & 45.00 \\
Mirounga leonina & 249504 & Open & Fin & 0.983 & 6 & 22 & 0.987 & 10 & 9 & -1 & 11.90 & 10.63 & 0.930 & 4.000 & 18.00 & 23.00 \\
Gyps fulvus & 253144 & Open & Fin & 1.092 & 4 & 12 & 0.868 & 10 & 7 & -3 & 12.34 & 7.258 & 0.639 & 5.000 & 35.00 & 41.00 \\
Halichoerus grypus & 249376 & Open & Fin & 1.077 & 5 & 18 & 0.966 & 11 & 9 & -2 & 12.66 & 10.17 & 0.838 & 4.000 & 35.00 & 46.00 \\
Ursus americanus & 249888 & Open & Fin & 1.013 & 2 & 4 & 0.709 & 10 & 3 & -7 & 12.77 & 2.919 & 0.133 & 3.500 & 25.00 & 34.00 \\
Podocnemis lewyana & 250117 & Open & Fin & 0.912 & 4 & 12 & 0.853 & 10 & 7 & -3 & 12.78 & 7.292 & 0.621 & 10.00 & 30.00 & 35.00 \\
Papio cynocephalus & 249681 & Open & Fin & 1.008 & 7 & 29 & 0.996 & 12 & 12 & 0 & 13.80 & 13.06 & 0.968 & 4.500 & 30.00 & 37.00 \\
Coragyps atratus & 248510 & Open & Fin & 1.057 & 8 & 32 & 0.998 & 12 & 12 & 0 & 14.02 & 13.47 & 0.979 & 4.000 & 20.00 & 25.00 \\
Puffinus auricularis & 248731 & Open & Fin & 1.001 & 5 & 15 & 0.879 & 12 & 9 & -3 & 14.57 & 9.123 & 0.665 & 4.000 & 16.00 & 20.00 \\
Ovis canadensis & 249609 & Open & Fin & 0.956 & 3 & 11 & 0.792 & 11 & 7 & -4 & 14.75 & 7.221 & 0.502 & 2.500 & 16.00 & 24.00 \\
Theropithecus gelada & 252841 & Open & Fin & 1.022 & 6 & 18 & 0.914 & 12 & 10 & -2 & 15.24 & 10.60 & 0.737 & 4.500 & 30.00 & 37.00 \\
Anarhynchus frontalis & 248395 & Open & Fin & 0.947 & 3 & 9 & 0.603 & 14 & 6 & -8 & 18.58 & 5.752 & 0.353 & 2.000 & 24.00 & 30.00 \\
Chelydra serpentina & 250031 & Open & Fin & 1.007 & 12 & 60 & 1.000 & 17 & 17 & 0 & 18.92 & 18.86 & 0.999 & 8.000 & 40.00 & 47.00 \\
Cercopithecus mitis & 249242 & Open & Fin & 1.014 & 8 & 23 & 0.915 & 17 & 14 & -3 & 20.49 & 14.33 & 0.712 & 4.000 & 28.00 & 36.00 \\
Himantopus novaezelandiae & 248647 & Open & Fin & 1.004 & 3 & 9 & 0.565 & 15 & 6 & -9 & 20.49 & 5.778 & 0.322 & 2.000 & 15.00 & 20.00 \\
Phoebastria immutabilis & 248722 & Open & Fin & 0.991 & 9 & 24 & 0.918 & 17 & 14 & -3 & 21.15 & 14.86 & 0.718 & 8.000 & 55.00 & 68.00 \\
Propithecus verreauxi & 249768 & Open & Fin & 1.002 & 7 & 20 & 0.855 & 17 & 13 & -4 & 21.78 & 12.97 & 0.598 & 4.000 & 26.00 & 32.00 \\
Haematopus ostralegus & 248632 & Open & Fin & 0.983 & 2 & 6 & 0.437 & 16 & 4 & -12 & 22.46 & 3.905 & 0.212 & 4.000 & 35.00 & 43.00 \\
Fulmarus glacialis & 253142 & Open & Fin & 1.061 & 12 & 34 & 0.974 & 19 & 18 & -1 & 22.92 & 19.38 & 0.867 & 8.000 & 45.00 & 51.00 \\
Crocodylus johnsoni & 250050 & Pure & Fin & 1.034 & 13 & 54 & 1.000 & 24 & 24 & 0 & 24.71 & 24.71 & 1.000 & 12.00 & 60.00 & 70.00 \\
Clemmys guttata & 250557 & Pure & Fin & 0.930 & 8 & 21 & 0.777 & 27 & 15 & -12 & 25.90 & 15.08 & 0.350 & 10.00 & 55.00 & 65.00 \\
Elephas maximus & 249274 & Open & Fin & 1.053 & 16 & 41 & 0.994 & 25 & 24 & -1 & 28.12 & 25.04 & 0.873 & 12.00 & 65.00 & 79.00 \\
Gorilla beringei beringei & 249343 & Open & Fin & 1.000 & 10 & 26 & 0.846 & 23 & 17 & -6 & 28.81 & 17.23 & 0.583 & 8.000 & 35.00 & 40.00 \\
Brachyteles hypoxanthus & 249156 & Open & Fin & 1.031 & 9 & 23 & 0.777 & 24 & 15 & -9 & 30.59 & 15.58 & 0.487 & 7.000 & 32.00 & 40.00 \\
Pongo abelii & 249753 & Pure & Fin & 0.990 & 16 & 36 & 0.947 & 30 & 25 & -5 & 31.21 & 25.32 & 0.671 & 15.00 & 50.00 & 60.00 \\
Caretta caretta & 252676 & Open & Asym & 0.946 & 25 & 208 & 1.000 & 30 & 30 & 0 & 32.16 & 32.16 & 1.000 & 20.00 & 60.00 & 73.00 \\
Cebus capucinus & 249194 & Open & Fin & 1.001 & 7 & 18 & 0.598 & 29 & 12 & -17 & 38.31 & 12.58 & 0.303 & 6.000 & 40.00 & 54.00 \\
Pan troglodytes schweinfurthii & 249646 & Open & Fin & 1.000 & 16 & 36 & 0.764 & 39 & 25 & -14 & 49.65 & 25.32 & 0.455 & 11.00 & 45.00 & 60.00 \\
Huso huso & 252953 & Pure & Fin & 1.000 & 14 & 33 & 0.784 & 50 & 27 & -23 & 52.72 & 26.50 & 0.221 & 16.00 & 90.00 & 118.0 \\
\hline
\end{tabular}
}
\end{table}

\FloatBarrier

\bibliographystyle{abbrv}
\bibliography{BibloH2}
\end{document}